\newcommand\numberthis{\addtocounter{equation}{1}\tag{\theequation}}
\newif\ifdraft\drafttrue
\DeclareFontFamily{U}{mathb}{\hyphenchar\font45}
\DeclareFontShape{U}{mathb}{m}{n}{
      <5> <6> <7> <8> <9> <10> gen * mathb
      <10.95> mathb10 <12> <14.4> <17.28> <20.74> <24.88> mathb12
}{}
\DeclareSymbolFont{mathb}{U}{mathb}{m}{n}
\DeclareMathSymbol{\sqdoublecup} {2}{mathb}{"5F} 
\DeclareMathSymbol{\boxplus} {2}{mathb}{"60} 
\newcommand{\trans}[1][]{\xrightarrow{\, {#1} \, }}
\newcommand{\ntrans}[1][]{\mathrel{{\trans[#1]}\makebox[0em][r]{$\not$\hspace{2ex}}}{\!}}
\DeclareMathOperator{\Kantorovich}{\mathbf{K}}
\DeclareMathOperator{\Hausdorff}{\mathbf{H}}
\newcommand{\support}{\mathsf{supp}}
\newcommand{\rel}{\,{\mathcal R}\,}
\newcommand{\reldist}{\,{\mathcal R}^{\dagger}\,}
\newcommand{\Act}{\mathcal A}
\DeclareMathOperator{\logic}{\mathcal L}
\newcommand{\depth}[1]{\mathrm{dpt}(#1)}
\newcommand{\init}[1]{\mathrm{init}(#1)}
\newcommand{\der}[1]{\mathrm{der}(#1)}
\newcommand{\diam}[1]{\langle #1 \rangle}
\newcommand{\proc}{\mathbf{S}}
\newcommand{\ProbDist}[1]{\Delta(#1)}
\newcommand{\powset}[1]{{\mathcal P}(#1)}
\DeclareMathOperator{\LL}{\mathbb L}
\DeclareMathOperator{\LLd}{\mathbb{L}^{\mathrm{d}}}
\DeclareMathOperator{\LLt}{\mathbb{L}^{\mathrm{t}}}
\DeclareMathOperator{\LLw}{\mathbb {L}_{\mathrm{w}}}
\DeclareMathOperator{\LLwd}{\mathbb{L}^{\mathrm{d}}_{\mathrm{w}}}
\DeclareMathOperator{\LLwt}{\mathbb{L}^{\mathrm{t}}_{\mathrm{w}}}
\newcommand{\Z}{\mathcal Z}
\newcommand{\corr}[1]{\mathrm{corr}_{#1}}
\newcommand{\res}{\mathrm{Res}}
\newcommand{\ctrans}[1][]{\stackrel{#1}{\mathlarger{\twoheadrightarrow}}}
\newcommand{\STr}{\approx_{\mathrm{st}}}
\newcommand{\WTr}{\approx_{\mathrm{wt}}}
\newcommand{\TraceMetric}{\mathbf{d}_{T}}
\newcommand{\dw}{d_T^{\mathrm{w}}}
\newcommand{\Dw}{D_T^{\mathrm{w}}}
\newcommand{\wTraceMetric}{\mathbf{d}_T^{\mathrm{w}}}
\newcommand{\Dtrtd}{\mathcal{D}^{\mathrm{t}}_{\LL}}
\newcommand{\Dtrdd}{\mathcal{D}^{\mathrm{d}}_{\LL}}
\newcommand{\DLLd}{\mathcal{D}_{\LL}}
\newcommand{\Dtrtdw}{\mathcal{D}^{\mathrm{t}}_{\LLw}}
\newcommand{\Dtrddw}{\mathcal{D}^{\mathrm{d}}_{\LLw}}
\newcommand{\DLLdw}{\mathcal{D}_{\LLw}}
\newcommand{\pr}{\mathrm{Pr}}
\newcommand{\tr}{\mathrm{Tr}}
\newcommand{\trw}{\mathrm{Tr}_{\mathrm{w}}}
\newcommand{\add}{\mathrm{Add}}
\newcommand{\argmin}{\mathrm{argmin}}
\newcommand{\nihl}{\mathrm{nil}}
\newcommand{\Acttau}{\Act_{\tau}}
\newcommand{\eqtrace}{\equiv_{\mathrm{w}}}
\newcommand{\Ct}{\mathcal{C}^{\mathrm{t}}}
\newcommand{\Cw}{\mathcal{C}^{\mathrm{w}}}
\newcommand{\Psiw}{\Psi^{\mathrm{w}}}
\newcommand{\TD}{\mathcal{T}}
\newcommand{\TDw}{\mathcal{T}^{\mathrm{w}}}
\newcommand{\val}[2]{[#1](#2)}
\newcommand{\A}{\mathfrak a}
\newcommand{\C}{\mathcal C}
\newcommand{\e}{\mathfrak{e}}
\newcommand{\W}{\mathfrak{W}}
\newcommand{\w}{\mathfrak{w}}
\newcommand{\N}{\mathbb{N}} 
\newenvironment{apx-proof}[1] 
        {\noindent \rm \textbf{Proof of #1.}} 
        {\qed}
\definecolor{lightblue}{RGB}{224,224,255}
\definecolor{lightred}{RGB}{255,224,224}
\definecolor{lightgreen}{RGB}{224,255,224}
\definecolor{lightyellow}{RGB}{255,255,224}
\definecolor{lightpurple}{RGB}{255,224,255}
\definecolor{darkerred}{RGB}{64,0,0}
\definecolor{darkred}{RGB}{128,0,0}
\definecolor{darkblue}{RGB}{0,0,128}
\definecolor{darkgreen}{RGB}{0,128,0}
\definecolor{darkpurple}{RGB}{128,0,128}
\definecolor{grey}{rgb}{0.745098,0.745098,0.745098}
\definecolor{lightgrey}{rgb}{0.9,0.9,0.9}
\definecolor{darkgrey}{rgb}{0.6,0.6,0.6}
\newcommand{\colorpar}[3]{\colorbox{#1}{\parbox{#2}{#3}}}
\newcommand{\marginremark}[3]{\marginpar{\colorpar{#2}{\linewidth}{\color{#1}#3}}}
\def\THICKhrulefill{\leavevmode \leaders \hrule height 5pt\hfill \kern \z@}
\newcommand{\remarkDG}[1]{\marginremark{darkred}{lightred}{\tiny{[DG]~ #1}}}
\newcommand{\remarkST}[1]{\marginremark{darkgreen}{lightgreen}{\tiny{[ST]~ #1}}}
\newcommand{\remarkVC}[1]{\marginremark{darkblue}{lightyellow}{\tiny{[VC]~ #1}}}
\renewcommand{\remarkDG}[1]{}
\newcommand{\remarkST}[1]{}
\newcommand{\remarkVC}[1]{}
\newtheorem{theorem}{Theorem}
\newtheorem{proposition}{Proposition}
\newtheorem{lemma}{Lemma}
\newtheorem{corollary}{Corollary}
\theoremstyle{definition}
\newtheorem{definition}{Definition}
\newtheorem{example}{\emph{Example}}
\theoremstyle{remark}
\newtheorem{remark}{Remark}
\title{Logical Characterization of Trace Metrics}
\author{Valentina Castiglioni
\institute{University of Insubria (IT)}
\email{v.castiglioni2@uninsubria.it}
\and
Simone Tini
\institute{University of Insubria (IT)}
\email{simone.tini@uninsubria.it}
}
\begin{document}

\maketitle

\begin{abstract}
In this paper we continue our research line on logical characterizations of behavioral metrics obtained from the definition of a metric over the set of logical properties of interest.
This time we provide a characterization of both strong and weak trace metric on nondeterministic probabilistic processes, based on a minimal boolean logic $\LL$ which we prove to be powerful enough to characterize strong and weak probabilistic trace equivalence.
Moreover, we also prove that our characterization approach can be restated in terms of a more classic probabilistic $\LL$-model checking problem.
\end{abstract}


\section{Introduction}
\label{sec:intro}

Behavioral equivalences and modal logics have been successfully employed for the specification and verification of communicating concurrent systems, henceforth processes.
The former ones provide a simple and elegant tool for comparing the observable behavior of processes.
The latter ones allow for an immediate expression of the desired properties of processes.
Since the work of \cite{HM85} on the Hennessy-Milner logic (HML), these two approaches are connected by means of \emph{logical characterizations} of behavioral equivalences: two processes are behaviorally equivalent if and only if they satisfy the same formulae in the logic.
Hence, the characterization of an equivalence subsumes both the fact that the logic is as expressive as the equivalence and the fact that the equivalence preserves the logical properties of processes.

It is common agreement that when also quantitative properties of processes are taken into account a metric semantics is favored over behavioral equivalences, since the latter ones are too sensible to small variations in the probabilistic properties of processes.
Therefore, the interest in logical characterizations of the so called \emph{behavioral metrics} \cite{BW05,DGJP04,DJGP02,GJS90,KN96,DCPP06,SDC07,AFS09,AMRS08,LMP12}, namely the quantitative analogues of equivalences that quantify how far the behavior of two processes is apart, is constantly growing.

In this paper we propose a logical characterization of the \emph{strong} and \emph{weak} variants of the \emph{trace metric} \cite{SDC07} for nondeterministic probabilistic processes (PTSs \cite{S95}).
To this aim we follow the approach of \cite{CGT16a} in which a logical characterization of the bisimilarity metric is provided.
We introduce two boolean logics $\LL$ and $\LLw$, providing a probabilistic choice operator capturing the probability weights that a process assigns to arbitrary traces, which we prove to characterize resp.\ the \emph{strong} and \emph{weak probabilistic trace equivalences} of \cite{S95tr}.
Such a characterization is obtained by introducing the novel notion of \emph{mimicking formulae of resolutions}, i.e.\ formulae capturing, for each possible resolution of nondeterminism for a process, all the executable traces as well as the probability weights assigned to them.
Then we introduce the notions of \emph{distance between formulae} in $\LL$ and $\LLw$ which are $1$-bounded (pseudo)metrics assigning to each pair of formulae a suitable quantitative analogue of their syntactic disparities.
These lift to metrics over processes, called resp.\ $\LL$-\emph{distance} and $\LLw$-\emph{distance}, corresponding to the Hausdorff lifting of the distance between formulae to the sets of formulae satisfied by the two processes.
We prove that our $\LL$-distance and $\LLw$-distance correspond resp.\ to the strong and weak trace metric.

An important feature of our characterization method is that, although it is firmly based on the mimicking formulae of resolutions, it does not actually depend on how these resolutions of nondeterminism are obtained from processes.
For instance, in this paper we consider resolutions obtained via a \emph{deterministic scheduler} \cite{S95tr,BdNL14}, but our approach would not be different when applied to \emph{randomized resolutions} \cite{S95tr,BdNL14}.

Our approach differs from the ones proposed in the literature in that in general logics equipped with a real-valued semantics are used for the characterization, which is then expressed as
\begin{equation}
\label{eq:intro}
d(s,t) = \sup_{\varphi \in L} |\val{\varphi}{s} - \val{\varphi}{t}|
\end{equation}
where $d$ is the behavioral metric of interest, $L$ is the considered logic and $\val{\varphi}{s}$ denotes the value of the formula $\varphi$ at process $s$ accordingly to the real-valued semantics \cite{DGJP04,DJGP02,AFS09,AMRS08,DDG16}.
In \cite{BBLM15} it is proved that the trace metric on Markov Chains (MCs) can be characterized in terms of the probabilistic LTL-model checking problem.
Roughly speaking, a characterization as in~\eqref{eq:intro} is obtained from the boolean logic LTL by assigning a real-valued semantics to it, defined by exploiting the probabilistic properties of the MC: the value of a formula $\varphi \in$ LTL at state $s$ is given by the probability of $s$ to execute a run satisfying $\varphi$.
In this paper we show that we can obtain a similar result by means of our distance between formulae.
More precisely, we provide an alternative characterization of the trace metric on PTSs $\TraceMetric$ in terms of the probabilistic $\LL$-model checking problem.
In detail, we define a real-valued semantics for $\LL$ by assigning to each formula $\Psi \in \LL$ at process $s$ the value $\val{\Psi}{s}$ corresponding to the minimal distance between $\Psi$ and any formula satisfied by $s$.
Thus we could use this real-valued semantics to verify whether process $s$ behaves within an allowed tolerance wrt.\ to the specification given by the formula $\Psi$.
Then, by exploiting some properties of the Hausdorff metric, we will be able to conclude that 
$ \displaystyle
\TraceMetric(s,t) = \sup_{\Psi \in \LL} \mid \val{\Psi}{s} - \val{\Psi}{t} \mid
$
thus giving that the verification of any $\LL$-formula in $s$ cannot differ from its verification in $t$ for more than $\TraceMetric(s,t)$ which, in turn, constitutes the maximal observable error in the approximation of $s$ with $t$.

We can summarize our contributions as follows: 
\begin{enumerate}
\item Logical characterization of both strong and weak trace metric: we define a distance on the class of formulae $\LL$ (resp.\ $\LLw$) and we prove that the strong (resp.\ weak) trace metric between two processes equals the syntactic distance between the sets of formulae satisfied by them.
\item Logical characterization of strong trace metric in terms of a probabilistic $\LL$-model checking problem: by means of the distance between formulae we equip $\LL$ with a real-valued semantics and we use it to establish a characterization of the trace metric as in \eqref{eq:intro}.
\item Logical characterization of both strong and weak probabilistic trace equivalence: by exploiting the notion of mimicking formula, we prove that two processes are strong (resp.\ weak) trace equivalent if and only if they satisfy the same (resp.\ syntactically equivalent) formulae in $\LL$ (resp.\ $\LLw$).
\end{enumerate} 


\section{Background}
\label{sec:background_chap7}

\subsection{Nondeterministic probabilistic transition systems}

Nondeterministic probabilistic transition systems \cite{S95} combine LTSs \cite{K76} and discrete time Markov chains \cite{HJ94,Ste94}, allowing us to model reactive behavior, nondeterminism and probability.

As state space we take a set $\proc$, whose elements are called $\emph{processes}$.
We let $s,t,\ldots$ range over $\proc$.  
Probability distributions over $\proc$ are mappings $\pi \colon \proc \to [0,1]$ with $\sum_{s \in \proc} \pi(s) = 1$ that assign to each $s \in \proc$ its probability $\pi(s)$. 
By $\ProbDist{\proc}$ we denote the set of all distributions over $\proc$.
We let $\pi, \pi',\dots$ range over $\ProbDist{\proc}$.
For $\pi \in \ProbDist{\proc}$, we denote by $\support(\pi)$ the support of $\pi$, namely $\support(\pi) = \{ s \in \proc \mid \pi(s) >0\}$. 
We consider only distributions with \emph{finite} support.
For $s \in \proc$ we denote by $\delta_s$ the \emph{Dirac distribution} defined by $\delta_s(s)= 1$ and $\delta_s(t)=0$ for $s \neq t$.
The convex combination $\sum_{i \in I} p_i \pi_i$ of a family $\{\pi_i\}_{i \in I}$ of distributions $\pi_i \in \ProbDist{\proc}$ with $p_i \in (0,1]$ and $\sum_{i \in I} p_i = 1$ is defined by $(\sum_{i \in I} p_i \pi_i)(s) = \sum_{i \in I} (p_i \pi_i(s))$ for all $s \in \proc$. 

\begin{definition}[PTS, \cite{S95}]
A \emph{nondeterministic probabilistic labeled transition system (PTS)} is a triple $(\proc,\Act,\trans[])$, where: 
\begin{inparaenum}[(i)]
\item $\proc$ is a countable set of processes, 
\item $\Act$ is a countable set of \emph{actions}, and 
\item $\trans[] \subseteq {\proc \times \Act \times \ProbDist{\proc}}$ is a \emph{transition relation}. 
\end{inparaenum}
\end{definition}

We call $(s,a,\pi)\in\trans[]$ a \emph{transition}, and we write $s\trans[a]\pi$ for $(s,a,\pi) \in\trans[]$.
We write $s \trans[a] $ if there is a distribution $\pi \in \ProbDist{\proc}$ with $s \trans[a] \pi$, and $s \ntrans[a]$ otherwise. 
Let $\init{s} =\{ a \in \Act \mid  s\trans[a]\}$ denote the set of the actions that can be performed by $s$.
Let $\der{s,a} =\{\pi\in\ProbDist{\proc} \mid s\trans[a]\pi\}$ denote the set of the distributions reachable from $s$ through action $a$.
We say that a process $s \in \proc$ is \emph{image-finite} if for all actions $a \in\init{s}$ the set $\der{s,a}$ is finite \cite{HPSWZ11}.
In this paper we consider only processes that are image-finite.

Throughout the paper we will introduce some equivalence relations on traces and on modal formulae.
To deal with the equivalence of probability distributions over these elements, we need to introduce the notion of \emph{lifting} of a relation.

\begin{definition}
\label{def:lifting_relation}
Let $X$ be any set.
Consider a relation $\rel \subseteq X \times X$. 
Then the \emph{lifting} of $\rel$ is the relation $\reldist \subseteq \ProbDist{X} \times \ProbDist{X}$ with $\pi \reldist \pi'$ if whenever $\pi = \sum_{i \in I} p_i \delta_{x_i}$ then $\pi' = \sum_{i \in I, j_i \in J_i} p_{j_i} \delta_{y_{j_i}}$ with $\sum_{j_i \in J_i} p_{j_i} = p_i$ and $x_i \rel y_{j_i}$ for all $j_i \in J_i$.
\end{definition}

Moreover, we can lift relations to relations over sets.
Given a relation $\rel \subseteq X \times Y$, we say that two subsets $X' \subseteq X, Y' \subseteq Y$ are in relation $\rel$, notation $X' \rel Y'$, if{f} 
\begin{inparaenum}[(i)]
\item for each $x \in X'$ there is an $y \in Y'$ with $x \rel y$, and
\item for each $y \in Y'$ there is an $x \in X'$ with $x \rel y$. 
\end{inparaenum}


\subsection{Strong probabilistic trace equivalence}

A probabilistic trace equivalence is a relation over $\proc$ that equates processes $s,t \in \proc$ if for all resolutions of nondeterminism they can mimic each other's sequences of transitions with the same probability.

\begin{definition}
[Computation, \cite{BdNL14}]
\label{def:computation}
Let $P = (\proc, \Act, \trans[])$ be a PTS and $s,s' \in \proc$.
We say that
$c := s_0 \ctrans[a_1] s_1 \ctrans[a_2] s_2 \dots s_{n-1} \ctrans[a_n] s_n$
is a \emph{computation} of $P$ of length $n$ from $s = s_0$ to $s'= s_n$ if{f} for all $i = 1,\dots,n$ there exists a transition $s_{i-1} \trans[a_i] \pi_i$ in $P$ such that $s_i \in \support(\pi_i)$, with $\pi_i(s_i)$ being the \emph{execution probability} of step $s_{i-1} \ctrans[a_i] s_i$ conditioned on the selection of transition $s_{i-1} \trans[a_i] \pi_i$ of $P$ at $s_{i-1}$.
We denote by $\pr(c) = \prod_{i = 1}^{n} \pi_i(s_i)$ the product of the execution probabilities of the steps in $c$.
\end{definition}

Let $s,s',s'' \in \proc$.
Given any computation $c' = s' \ctrans[a_1] s_1 \ctrans[a_2] \dots \ctrans[a_n] s''$ from $s'$ to $s''$, we write $c = s \ctrans[a] c'$ if $c = s \ctrans[a] s' \ctrans[a_1] \dots \ctrans[a_n] s''$ is a computation from $s$ to $s''$.
We say that $c$ is a \emph{computation from $s$} if $c$ is a computation from $s$ to some process $s'$.
Then, $c$ is \emph{maximal} if it is not a proper prefix of any other computation from $s$. 
We denote by $\C(s)$ (resp.\ $\C_{\max}(s)$) the set of computations (resp.\ maximal computations) from $s$.
Given any $\C \subseteq \C(s)$, we define $\pr(\C) = \sum_{c \in \C} \pr(c)$ whenever none of the computations in $\C$ is a proper prefix of any of the others.

We denote by $\Act^{\star}$ the set of \emph{finite sequences} of actions in $\Act$ and we call \emph{trace} any element $\alpha \in \Act^{\star}$. 
The special symbol $\e \not \in \Act$ denotes the empty trace.
We say that a computation is \emph{compatible} with the trace $\alpha \in \Act^{\star}$ if{f} the sequence of actions labeling the computation steps is equal to $\alpha$.
We denote by $\C(s,\alpha) \subseteq \C(s)$ the set of computations of $s$ which are compatible with $\alpha$, and by $\C_{\max}(s,\alpha)$ the set $\C_{\max}(s,\alpha) = \C_{\max}(s) \cap \C(s,\alpha)$.

\begin{definition}
\label{def:trc}
Let $s \in \proc$ and consider any $c \in \C(s)$.
We denote by $\tr(c) \in \Act^{\star}$ the trace to which $c$ is compatible.
We extend this notion to sets by letting $\tr(\C') = \{\tr(c) \mid c \in \C'\}$ for any $\C' \subseteq \C(s)$.
We say that $\tr(\C(s))$ is the \emph{set of traces} of $s$ and $\tr(\C_{\max}(s))$ is the \emph{set of maximal traces} of $s$.
\end{definition}

To establish trace equivalence we need first to deal with nondeterministic choices of processes.
To this aim, we consider all possible resolutions of nondeterminism one by one.
Using the notation of \cite{BdNL14}, our resolutions correspond to the resolutions obtained via a \emph{deterministic scheduler} (see Fig.~\ref{fig:ex_resolutions} for an example).

\begin{definition}
[Resolution, \cite{BdNL14}]
\label{def:det_res}
Let $P = (\proc, \Act,\trans[])$ be a PTS and $s \in \proc$.
We say that a PTS $\Z = (Z,\Act,\trans[]_{\Z})$ is a \emph{resolution} for $s$ if{f} there exists a state correspondence function $\corr{\Z} \colon Z \to \proc$ such that $s = \corr{\Z}(z_s)$ for some $z_s \in Z$, called the \emph{initial state} of $\Z$, and moreover it holds that:
\begin{itemize}
\item $z_s \not \in \support(\pi)$ for any $\pi \in \bigcup_{z \in Z, a \in \Act} \der{z,a}$.
\item Each $z \in Z\setminus\{z_s\}$ is such that $z \in \support(\pi)$ for some $\pi \in \bigcup_{z' \in Z\setminus\{z\}, a \in \Act} \der{z',a}$.
\item Whenever $z \trans[a]_{\Z} \pi$, then $\corr{\Z}(z) \trans[a] \pi'$ with $\pi(z') = \pi'(\corr{\Z}(z'))$ for all $z' \in Z$.
\item Whenever $z \trans[a_1]_{\Z} \pi_1$ and $z \trans[a_2]_{\Z} \pi_2$ then $a_1 = a_2$ and $\pi_1 = \pi_2$. 
\end{itemize}
We let $\res(s)$ be the set of resolutions for $s$ and $\res(\proc) = \bigcup_{s \in \proc} \res(s)$ be the set of all resolutions on $\proc$.
\end{definition}

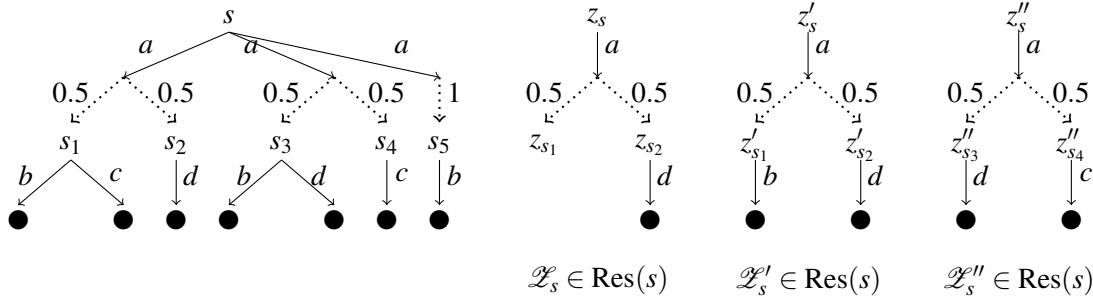
\begin{figure}[t!]
\begin{center}
\begin{tikzpicture}
\node at (2.8,4.7){$\boldsymbol{s}$};
\draw[->](2.8,4.5)--(1.4,3.9);
\node at (1.7,4.3){$\boldsymbol{a}$};
\draw[->](2.8,4.5)--(4.2,3.9);
\node at (3.1,4.3){$\boldsymbol{a}$};
\draw[->](2.8,4.5)--(5.6,3.9);
\node at (5.1,4.3){$\boldsymbol{a}$};
\draw[dotted,thick,->](1.4,3.9)--(0.7,3.3);
\node at (0.7,3.7){$\boldsymbol{0.5}$};
\draw[dotted,thick,->](1.4,3.9)--(2.1,3.3);
\node at (2.1,3.7){$\boldsymbol{0.5}$};
\draw[dotted,thick,->](4.2,3.9)--(3.5,3.3);
\node at (3.5,3.7){$\boldsymbol{0.5}$};
\draw[dotted,thick,->](4.2,3.9)--(4.9,3.3);
\node at (4.9,3.7){$\boldsymbol{0.5}$};
\draw[dotted,thick,->](5.6,3.9)--(5.6,3.3);
\node at (5.8,3.7){$\boldsymbol{1}$};
\node at (0.7,3){$\boldsymbol{s_1}$};
\node at (2.1,3){$\boldsymbol{s_2}$};
\node at (3.5,3){$\boldsymbol{s_3}$};
\node at (4.9,3){$\boldsymbol{s_4}$};
\node at (5.6,3){$\boldsymbol{s_5}$};
\draw[->](0.7,2.8)--(0,2.2);
\node at (0.1,2.6){$\boldsymbol{b}$};
\draw[->](0.7,2.8)--(1.4,2.2);
\node at (1.3,2.6){$\boldsymbol{c}$};
\draw[->](2.1,2.8)--(2.1,2.2);
\node at (2.3,2.6){$\boldsymbol{d}$};
\draw[->](3.5,2.8)--(2.8,2.2);
\node at (3,2.6){$\boldsymbol{b}$};
\draw[->](3.5,2.8)--(4.2,2.2);
\node at (4,2.6){$\boldsymbol{d}$};
\draw[->](4.9,2.8)--(4.9,2.2);
\node at (5.1,2.6){$\boldsymbol{c}$};
\draw[->](5.6,2.8)--(5.6,2.2);
\node at (5.8,2.6){$\boldsymbol{b}$};
\node at (0,2){$\CIRCLE$};
\node at (1.4,2){$\CIRCLE$};
\node at (2.1,2){$\CIRCLE$};
\node at (2.8,2){$\CIRCLE$};
\node at (4.2,2){$\CIRCLE$};
\node at (4.9,2){$\CIRCLE$};
\node at (5.6,2){$\CIRCLE$};
\node at (7.7,4.7){$\boldsymbol{z_s}$};
\draw[->](7.7,4.5)--(7.7,3.9);
\node at (7.9,4.3){$\boldsymbol{a}$};
\draw[dotted,thick,->](7.7,3.9)--(7,3.3);
\node at (7,3.7){$\boldsymbol{0.5}$};
\draw[dotted,thick,->](7.7,3.9)--(8.4,3.3);
\node at (8.4,3.7){$\boldsymbol{0.5}$};
\node at (7,3){$\boldsymbol{z_{s_1}}$};
\node at (8.4,3){$\boldsymbol{z_{s_2}}$};
\draw[->](8.4,2.8)--(8.4,2.2);
\node at (8.6,2.6){$\boldsymbol{d}$};
\node at (8.4,2){$\CIRCLE$};
\node at (7.7,1.2){$\Z_s \in \res(s)$};
\node at (10.5,4.7){$\boldsymbol{z'_s}$};
\draw[->](10.5,4.5)--(10.5,3.9);
\node at (10.7,4.3){$\boldsymbol{a}$};
\draw[dotted,thick,->](10.5,3.9)--(9.8,3.3);
\node at (9.8,3.7){$\boldsymbol{0.5}$};
\draw[dotted,thick,->](10.5,3.9)--(11.2,3.3);
\node at (11.2,3.7){$\boldsymbol{0.5}$};
\node at (9.8,3){$\boldsymbol{z'_{s_1}}$};
\node at (11.2,3){$\boldsymbol{z'_{s_2}}$};
\draw[->](9.8,2.8)--(9.8,2.2);
\node at (10,2.6){$\boldsymbol{b}$};
\draw[->](11.2,2.8)--(11.2,2.2);
\node at (11.4,2.6){$\boldsymbol{d}$};
\node at (9.8,2){$\CIRCLE$};
\node at (11.2,2){$\CIRCLE$};
\node at (10.5,1.2){$\Z_s' \in \res(s)$};
\node at (13.3,4.7){$\boldsymbol{z''_s}$};
\draw[->](13.3,4.5)--(13.3,3.9);
\node at (13.5,4.3){$\boldsymbol{a}$};
\draw[dotted,thick,->](13.3,3.9)--(12.6,3.3);
\node at (12.6,3.7){$\boldsymbol{0.5}$};
\draw[dotted,thick,->](13.3,3.9)--(14,3.3);
\node at (14,3.7){$\boldsymbol{0.5}$};
\node at (12.6,3){$\boldsymbol{z''_{s_3}}$};
\node at (14,3){$\boldsymbol{z''_{s_4}}$};
\draw[->](12.6,2.8)--(12.6,2.2);
\node at (12.8,2.6){$\boldsymbol{d}$};
\draw[->](14,2.8)--(14,2.2);
\node at (14.2,2.6){$\boldsymbol{c}$};
\node at (12.6,2){$\CIRCLE$};
\node at (14,02){$\CIRCLE$};
\node at (13.3,1.2){$\Z_s'' \in \res(s)$};
\end{tikzpicture}
\end{center}
\caption{\label{fig:ex_resolutions} An example of three distinct resolutions for process $s$. Black circles stand for the probability distribution $\delta_{\mathrm{nil}}$, with $\mathrm{nil}$ process that cannot execute any action.}
\end{figure}

\emph{Strong probabilistic trace equivalence} equates two processes if their resolutions can be matched so that they assign the same probability to all traces.

\begin{definition}
[Strong probabilistic trace equivalence, \cite{S95tr,BdNL14}]
\label{def:prob_trace_dist}
Let $P=(\proc,\Act,\trans[])$ be a PTS.
We say that $s,t \in \proc$ are \emph{strong probabilistic trace equivalent}, notation $s \STr t$, if{f} it holds that:
\begin{itemize}
\item For each resolution $\Z_s \in \res(s)$ of $s$ there is a resolution $\Z_t \in \res(t)$ of $t$ such that for all traces 
$\alpha \in \Act^{\star}$ we have
$
\pr(\C(z_s,\alpha)) = \pr(\C(z_t,\alpha)).
$ 
\item For each resolution $\Z_t \in \res(t)$ of $t$ there is a resolution $\Z_s \in \res(s)$ of $s$ such that for all traces 
$\alpha \in \Act^{\star}$ we have
$
\pr(\C(z_t,\alpha)) = \pr(\C(z_s,\alpha)).
$ 
\end{itemize}
\end{definition}

\begin{figure}
\begin{center}
\begin{tikzpicture}
\node at (1.75,4.7){$\boldsymbol{t}$};
\draw[->](1.75,4.5)--(1.75,3.9);
\node at (1.95,4.3){$\boldsymbol{a}$};
\draw[dotted,thick,->](1.75,3.9)--(0.7,3.3);
\node at (0.7,3.7){$\boldsymbol{0.5}$};
\draw[dotted,thick,->](1.75,3.9)--(2.8,3.3);
\node at (2.8,3.7){$\boldsymbol{0.5}$};
\node at (0.7,3){$\boldsymbol{t_1}$};
\node at (2.8,3){$\boldsymbol{t_2}$};
\draw[->](0.7,2.8)--(0,2.2);
\node at (0.2,2.6){$\boldsymbol{b}$};
\draw[->](0.7,2.8)--(1.4,2.2);
\node at (1.2,2.6){$\boldsymbol{c}$};
\draw[->](2.8,2.8)--(2.1,2.2);
\node at (2.3,2.6){$\boldsymbol{b}$};
\draw[->](2.8,2.8)--(3.5,2.2);
\node at (3.3,2.6){$\boldsymbol{d}$};
\node at (0,2){$\CIRCLE$};
\node at (1.4,2){$\CIRCLE$};
\node at (2.1,2){$\CIRCLE$};
\node at (3.5,2){$\CIRCLE$};
\node at (4.9,4.7){$\boldsymbol{z_t}$};
\draw[->](4.9,4.5)--(4.9,3.9);
\node at (5.1,4.3){$\boldsymbol{a}$};
\draw[dotted,thick,->](4.9,3.9)--(4.2,3.3);
\node at (4.2,3.7){$\boldsymbol{0.5}$};
\draw[dotted,thick,->](4.9,3.9)--(5.6,3.3);
\node at (5.6,3.7){$\boldsymbol{0.5}$};
\node at (4.2,3){$\boldsymbol{z_{t_1}}$};
\node at (5.6,3){$\boldsymbol{z_{t_2}}$};
\draw[->](5.6,2.8)--(5.6,2.2);
\node at (5.8,2.6){$\boldsymbol{d}$};
\node at (5.6,2){$\CIRCLE$};
\node at (4.9,1.2){$\Z_t \in \res(t)$};
\node at (7.7,4.7){$\boldsymbol{z'_t}$};
\draw[->](7.7,4.5)--(7.7,3.9);
\node at (7.9,4.3){$\boldsymbol{a}$};
\draw[dotted,thick,->](7.7,3.9)--(7,3.3);
\node at (7,3.7){$\boldsymbol{0.5}$};
\draw[dotted,thick,->](7.7,3.9)--(8.4,3.3);
\node at (8.4,3.7){$\boldsymbol{0.5}$};
\node at (7,3){$\boldsymbol{z'_{t_1}}$};
\node at (8.4,3){$\boldsymbol{z'_{t_2}}$};
\draw[->](7,2.8)--(7,2.2);
\node at (7.2,2.6){$\boldsymbol{b}$};
\draw[->](8.4,2.8)--(8.4,2.2);
\node at (8.6,2.6){$\boldsymbol{d}$};
\node at (7,2){$\CIRCLE$};
\node at (8.4,2){$\CIRCLE$};
\node at (7.7,1.2){$\Z_t' \in \res(t)$};
\node at (10.5,4.7){$\boldsymbol{z''_t}$};
\draw[->](10.5,4.5)--(10.5,3.9);
\node at (10.7,4.3){$\boldsymbol{a}$};
\draw[dotted,thick,->](10.5,3.9)--(9.8,3.3);
\node at (9.8,3.7){$\boldsymbol{0.5}$};
\draw[dotted,thick,->](10.5,3.9)--(11.2,3.3);
\node at (11.2,3.7){$\boldsymbol{0.5}$};
\node at (9.8,3){$\boldsymbol{z''_{t_1}}$};
\node at (11.2,3){$\boldsymbol{z''_{t_2}}$};
\draw[->](9.8,2.8)--(9.8,2.2);
\node at (10,2.6){$\boldsymbol{c}$};
\draw[->](11.2,2.8)--(11.2,2.2);
\node at (11.4,2.6){$\boldsymbol{d}$};
\node at (9.8,2){$\CIRCLE$};
\node at (11.2,2){$\CIRCLE$};
\node at (10.5,1.2){$\Z_t'' \in \res(t)$};
\node at (13.3,4.7){$\boldsymbol{z'''_t}$};
\draw[->](13.3,4.5)--(13.3,3.9);
\node at (13.5,4.3){$\boldsymbol{a}$};
\draw[dotted,thick,->](13.3,3.9)--(12.6,3.3);
\node at (12.6,3.7){$\boldsymbol{0.5}$};
\draw[dotted,thick,->](13.3,3.9)--(14,3.3);
\node at (14,3.7){$\boldsymbol{0.5}$};
\node at (12.6,3){$\boldsymbol{z'''_{t_1}}$};
\node at (14,3){$\boldsymbol{z'''_{t_2}}$};
\draw[->](12.6,2.8)--(12.6,2.2);
\node at (12.8,2.6){$\boldsymbol{b}$};
\draw[->](14,2.8)--(14,2.2);
\node at (14.2,2.6){$\boldsymbol{b}$};
\node at (12.6,2){$\CIRCLE$};
\node at (14,2){$\CIRCLE$};
\node at (13.3,1.2){$\Z_t''' \in \res(t)$};
\end{tikzpicture}
\end{center}
\caption{\label{fig:strong_trace} Process $t$ is strong trace equivalent to process $s$ in Fig.~\ref{fig:ex_resolutions}}
\end{figure}
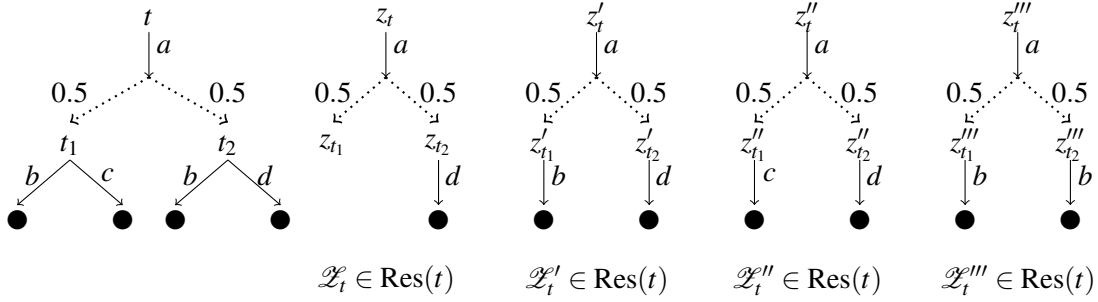

\begin{example}
\label{ex:strong_trace}
Consider process $s$ in Fig.~\ref{fig:ex_resolutions} and process $t$ in Fig.~\ref{fig:strong_trace}.
We have that $s \STr t$.
Briefly, it is immediate to check that the three resolutions $\Z_s ,\Z'_s,\Z''_s \in \res(s)$ in Fig.~\ref{fig:ex_resolutions} are matched resp.\ by the three resolutions $\Z_t,\Z_t',\Z_t'' \in \res(t)$ in Fig.~\ref{fig:strong_trace}.
Moreover, for all other resolutions, we notice that accordingly to the chosen resolutions for processes $t_1$ and $t_2$, process $s$ can always match their traces and related probabilities by selecting the proper $a$-branch.
In particular, resolution $\Z_t''' \in \res(t)$ in Fig.~\ref{fig:strong_trace} is matched by the resolution for $s$ corresponding to the rightmost $a$-branch. 
\end{example}


\subsection{Weak probabilistic trace equivalence}

We extend the set of actions $\Act$ to the set $\Acttau$ containing also the silent action $\tau$.
We let $\A$ range over $\Acttau$.

Usually, traces are not distinguished by any occurrence of $\tau$ in them \cite{SDC07}.
Hence, we introduce the notion of \emph{equivalence of traces}.

\begin{definition}
[Equivalence of traces]
\label{def:eqtrace}
The relation of \emph{equivalence of traces} $\eqtrace \subseteq \Acttau^{\star} \times \Acttau^{\star}$ is the smallest equivalence relation satisfying
\begin{inparaenum}
\item $\varepsilon \eqtrace \varepsilon$ and
\item given $\alpha = \A_1 \alpha'$, $\beta = \A_2 \beta'$ we have $\alpha \eqtrace \beta$ if{f} 
\end{inparaenum}
\begin{itemize}
\item either $\A_1 = \tau$ and $\alpha' \eqtrace \beta$,
\item or $\A_2 = \tau$ and $\alpha \eqtrace \beta'$
\item or $\A_1 = \A_2$ and $\alpha' \eqtrace \beta'$.
\end{itemize}
\end{definition}

For each trace $\alpha \in \Acttau^{\star}$, we denote by $[\alpha]_{\mathrm{w}}$ the equivalence class of $\alpha$ with respect to $\eqtrace$, namely $[\alpha]_{\mathrm{w}} = \{\beta \in \Acttau^{\star} \mid \beta \eqtrace \alpha\}$.
Moreover, for each computation $c$, we let $\trw(c) = [\tr(c)]_{\mathrm{w}}$.

Given any process $s \in \proc$ and any trace $\alpha \in \Acttau^{\star}$, we say that a computation $c \in \C(s)$ is in $\Cw(s,\alpha)$ if{f} $\tr(c) \eqtrace \alpha$ and $c$ is not a proper prefix of any other computation in $\Cw(s,\alpha)$. 
This is to avoid to count multiple times the same execution probabilities in the evaluation of $\pr(\Cw(s,\alpha))$.

\begin{definition}
[Weak probabilistic trace equivalence]
\label{def:weak_trace_equivalence}
Let $P=(\proc,\Act,\trans[])$ be a PTS.
We say that $s,t \in \proc$ are \emph{weak probabilistic trace equivalent}, notation $s \WTr t$, if{f} it holds that:
\begin{itemize}
\item For each resolution $\Z_s \in \res(s)$ of $s$ there is a resolution $\Z_t \in \res(t)$ of $t$ such that for all traces 
$\alpha \in \Act^{\star}$ we have
$
\pr(\Cw(z_s,\alpha)) = \pr(\Cw(z_t,\alpha)).
$ 
\item For each resolution $\Z_t \in \res(t)$ of $t$ there is a resolution $\Z_s \in \res(s)$ of $s$ such that for all traces 
$\alpha \in \Act^{\star}$ we have
$
\pr(\Cw(z_t,\alpha)) = \pr(\Cw(z_s,\alpha)).
$ 
\end{itemize}
\end{definition}


\section{Trace metrics}
\label{sec:trace_metrics}

In this section we introduce the quantitative analogues of strong and weak probabilistic trace equivalence, namely the \emph{strong} and \emph{weak trace metric}, resp., which are $1$-bounded pseudometrics that quantify how much the behavior of two processes is apart wrt.\ the strong (resp.\ weak) probabilistic trace semantics. 
Our metrics are a revised version of the trace metric proposed in \cite{SDC07}.
Briefly, in \cite{SDC07} there is a distinction between the notions of \emph{path} and \emph{trace}: any $\alpha \in \Acttau^{\star}$ is called path and the trace related to a path is obtained by deleting any occurrence of $\tau$ from it.
The metric in \cite{SDC07} is then defined only on traces and it has inspired our strong trace metric.
In the present paper we distinguish between the strong and the weak case and we regain the results in \cite{SDC07} by our equivalence of traces: the weak trace metric coincides with the strong one on the quotient space wrt.\ $\eqtrace$.

\subsection{The Kantorovich and Hausdorff lifting functionals}

In the literature we can find several examples of behavioral metrics on systems with probability and nondeterminism (see among others \cite{AFS09,BW01a,B05,DCPP06,DGJP04,SDC07}).
In this paper we follow the approach of \cite{BW01a,DCPP06,SDC07} in which two kind of metrics are combined to obtain a metric on the system.
The \emph{Kantorovich metric} \cite{K42} quantifies the disparity between the probabilistic properties of processes and it is defined by means of the notion of \emph{matching}.
For any set $X$, a matching for distributions $\pi,\pi' \in \ProbDist{X}$ is a distribution over the product space $\w \in \Delta(X \times X)$ with $\pi$ and $\pi'$ as left and right marginal resp., namely $\sum_{y\in X} \w(x,y)=\pi(x)$ and $\sum_{x\in X} \w(x,y)=\pi'(y)$ for all $x,y \in X$. 
Let $\W(\pi,\pi')$ denote the set of all matchings for $\pi,\pi'$.

\begin{definition}
[Kantorovich metric, \cite{K42}]
\label{def:Kantorovich}
Let $d\colon X \times X \to [0,1]$ be a 1-bounded metric. 
The \emph{Kantorovich lifting} of $d$ is the 1-bounded metric $\Kantorovich(d)\colon \ProbDist{X} \times \ProbDist{X} \to [0,1]$ defined for all $\pi,\pi' \in \ProbDist{X}$ by
\[
\Kantorovich(d)(\pi,\pi') = \min_{\w \in \W(\pi,\pi')} \sum_{x,y \in X}\w(x,y) \cdot d(x,y).
\]
\end{definition}

We remark that since we are considering only probability distributions with finite support, the minimum over $\W(\pi,\pi')$ is well defined for all $\pi,\pi'\in \ProbDist{X}$. 

The \emph{Hausdorff metric} allows us to lift any distance over probability distributions to a distance over sets of probability distributions.

\begin{definition}
[Hausdorff metric]
\label{def:Hausdorff}
Let $\hat{d} \colon \ProbDist{X} \times \ProbDist{X} \to [0,1]$ be a 1-bounded metric.
The \emph{Hausdorff lifting} of $\hat{d}$ is the $1$-bounded metric $\Hausdorff(\hat{d})\colon \powset{\ProbDist{X}} \times \powset{\ProbDist{X}} \to [0,1]$ defined by
\[
\Hausdorff(\hat{d})(\Pi_1,\Pi_2) = \max \Big\{ \sup_{\pi_1 \in \Pi_1}\inf_{\pi_2 \in \Pi_2} \hat{d}(\pi_1,\pi_2), \sup_{\pi_2\in \Pi_2}\inf_{\pi_1\in \Pi_1} \hat{d}(\pi_2,\pi_1) \Big\}
\]
for all $\Pi_1,\Pi_2 \subseteq \ProbDist{X}$, where $\inf \emptyset = 1$, $\sup \emptyset = 0$.
\end{definition}

Hence, given two processes $s,t\in \proc$, the idea is to quantify the distance between each pair of their resolutions by exploiting the Kantorovich metric, which quantifies the disparities in the probabilities of the two processes to execute the same traces.
Then, we lift this distance on resolutions to a distance between $s$ and $t$ by means of the Hausdorff metric.
Intuitively, as each resolution captures a different set of nondeterministic choices of a process, we use the Hausdorff metric to compare the possible choices of the two processes and to match them in order to obtain the minimal distance. 


\subsection{Strong trace metric}
\label{sec:strong_trace_metric}

To define the strong trace metric we start from a distance between traces, defined as the discrete metric over traces: two traces are at distance $1$ if they are distinct, otherwise the distance is set to $0$.
Differently from \cite{SDC07} we do not consider any discount on the distance between traces.
Trace equivalences, and thus metrics, are usually employed when the observations on the system cannot be done in a step-by-step fashion, but only the total behavior of the system can be observed.
Hence, a step-wise discount does not fit in this setting.
However, the discount would not introduce any technical issue.

\begin{definition}
[Distance between traces]
\label{def:trace_metric_traces}
The \emph{distance between traces} $d_T \colon \Act^{\star} \times \Act^{\star} \to [0,1]$ is defined for any pair of traces $\alpha, \beta \in \Act^{\star}$ by
\[
d_T(\alpha, \beta) = 
\begin{cases}
0 & \text{ if  } \alpha = \beta \\
1 & \text{ otherwise.}
\end{cases}
\] 
\end{definition}

Following \cite{SDC07} we aim to lift the distance $d_T$ to a distance between resolutions by means of the Kantorovich lifting functional
which, we recall, is defined on probability distributions.
As shown in the following example, we are not guaranteed that the function $\pr(\C(\_,\_))$ defines a probability distribution on the set of traces of a resolution.

\begin{example}
\label{ex:Pr_not_prob_dist}
Consider process $t$ and the resolution $\Z_r \in \res(t)$ for it, represented in Fig.~\ref{fig:strong_trace}.
We can distinguish three computations for $z_t$:
\[
\begin{array}{l}
 c_1 = z_t \ctrans[a] z_{t_1} \\
 c_2 = z_t \ctrans[a] z_{t_2} \\
 c_3 = z_t \ctrans[a] z_{t_2} \ctrans[d] \mathrm{nil}. 
\end{array}
\]
Clearly, $\tr(\C(z_t)) = \{a, ad\}$.
Then we have
\[
\begin{array}{l}
 \pr(\C(z_t,a)) = \sum_{c \in \C(z_t,a)} \pr(c) = \pr(c_1) + \pr(c_2) = 1 \\[0.5 ex]
 \pr(\C(z_t,ad)) = \sum_{c \in \C(z_t,ad)} \pr(c) = \pr(c_3) = 0.5
\end{array}
\]
from which we gather 
\[
\sum_{\alpha \in \tr(\C(z_t))} \pr(\C(z_t,\alpha)) = \pr(\C(z_t,a)) + \pr(\C(z_t,ad)) = 1 + 0.5 > 1.
\]
\end{example}

However, as shown in the following lemma, if we consider only maximal computations we obtain a probability distribution over traces.

\begin{lemma}
\label{lem:prob_computation_sum_to_1}
Consider any resolution $\Z \in \res(\proc)$ with initial state $z$.
We have that $\sum_{c \in \C_{\max}(z)} \pr(c) = 1$.
\end{lemma}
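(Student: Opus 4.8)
The plan is to exploit the \emph{determinism} of resolutions: the fourth condition of Definition~\ref{def:det_res} guarantees that every state $z'$ of $\Z$ admits \emph{at most one} outgoing transition, so the only branching occurring along computations from $z$ is the probabilistic branching induced by the supports of the distributions. Consequently the maximal computations from $z$ are exactly the root-to-leaf paths of the tree generated by $\Z$, where a \emph{leaf} is a state with no outgoing transition and the weight of an edge out of $z'$ is $\pi(z'')$ for the unique $\pi$ with $z' \trans[a]_{\Z} \pi$ and each $z'' \in \support(\pi)$. Since we work with finite-support distributions over image-finite processes, this tree is finitely branching, and the empty-product convention gives $\pr(c) = 1$ for the length-$0$ computation.

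First I would set up a structural induction on the height of the subtree rooted at $z$. In the base case $z$ has no outgoing transition in $\Z$: the unique computation from $z$ is the trivial one of length $0$, it is maximal, and its probability is the empty product $1$, so $\sum_{c \in \C_{\max}(z)} \pr(c) = 1$. For the inductive step, let $z \trans[a]_{\Z} \pi$ be the unique transition leaving $z$. Every $c \in \C_{\max}(z)$ must take this step first, so $c = z \ctrans[a] c'$ for a maximal computation $c'$ from some $z' \in \support(\pi)$; conversely, prepending the step to any maximal computation from such a $z'$ yields an element of $\C_{\max}(z)$. This correspondence is a bijection, distinct $z'$ give disjoint families, and $\pr(z \ctrans[a] c') = \pi(z') \cdot \pr(c')$, so I can split and factor the sum:
\[
\sum_{c \in \C_{\max}(z)} \pr(c) = \sum_{z' \in \support(\pi)} \pi(z') \sum_{c' \in \C_{\max}(z')} \pr(c').
\]
Each $z'$ roots a strictly shorter subtree, so the induction hypothesis makes every inner sum equal to $1$, leaving $\sum_{z' \in \support(\pi)} \pi(z') = 1$ because $\pi \in \ProbDist{\proc}$.

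I expect the delicate point to be the justification that the induction is well-founded, i.e.\ that the tree of computations from $z$ has \emph{finite height} so that each $z' \in \support(\pi)$ genuinely roots a smaller subtree. This is where the standing finiteness assumptions must be invoked: the conditions of Definition~\ref{def:det_res} forbid the initial state from recurring and forbid self-loops, but they do not by themselves rule out longer cycles among non-initial states, along which an infinite path would carry no maximal computation and make $\C_{\max}(z)$ fail to capture the full probability mass. Once the finite tree structure is pinned down, the remaining ingredients --- the bijective decomposition of maximal computations and the multiplicative behaviour of $\pr$ along the forced unique path --- are routine, and the probabilistic normalisation $\sum_{z' \in \support(\pi)} \pi(z') = 1$ closes the argument.
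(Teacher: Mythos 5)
Your proof is correct and follows essentially the same route as the paper's: an induction on the depth of $z$, decomposing each maximal computation according to the unique first transition $z \trans[a]_{\Z} \pi$ and factoring the sum over $\support(\pi)$ so that the inner sums collapse to $1$ by induction and the outer sum to $1$ by $\pi \in \ProbDist{\proc}$. The well-foundedness concern you raise is handled no more explicitly in the paper, which likewise relies on the implicit finiteness of $\depth{z}$ for the processes under consideration.
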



\begin{proof}
We proceed by induction over the depth of $z$.

The base case $\depth{z}=0$ is immediate since we have that $\C(z) = \{\varepsilon\}$ and $\pr(\varepsilon) = 1$.

Consider now the inductive step $\depth{z}>0$.
Assume, wlog., that $z \trans[a]_{\Z} \pi$.
Therefore, each trace $c \in \C_{\max}(z)$ will be of the form $c = z \ctrans[a] c'$ for some $c' \in \C_{\max}(z')$ for any $z' \in \support(\pi)$ and moreover for such a trace $c$ it holds that $\pr(c) = \pi(z') \pr(c')$.
Thus we have
\[
\begin{array}{llr}
\sum_{c \in \C_{\max}(z)} \pr(c)  &={} \sum_{z' \in \support(\pi) \atop c' \in \C_{\max}(z')} \pi(z') \pr(c') \\
 & ={} \sum_{z' \in \support(\pi)} \pi(z') \Big( \sum_{c' \in \C_{\max}(z')}\pr(c') \Big) \\
 & ={}\sum_{z' \in \support(\pi)} \pi(z') \cdot 1 & \text{(by induction over $\depth{z'} < \depth{z}$)} \\
& ={}  1.
\end{array}
\]
\end{proof}


\begin{definition}
[Trace distribution]
\label{def:trace_distribution}
Consider any resolution $\Z \in \res(\proc)$, with initial state $z$.
We define the \emph{trace distribution} of $\Z$ as the function $\TD_{\Z} \colon \Act^{\star} \to [0,1]$ defined for each $\alpha \in \Act^{\star}$ by 
\[
\TD_{\Z}(\alpha) = \pr(\C_{\max}(z,\alpha)).
\]
\end{definition}

Notice that only maximal computations are in the support of $\TD_{\Z}$.
This guarantees that $\TD_{\Z}$ is a distribution. 

\begin{lemma}
\label{lem:trace_distribution}
Consider any resolution $\Z \in \res(\proc)$, with initial state $z$.
Then the trace distribution $\TD_{\Z}$ of $\Z$ is a probability distribution over $\Act^{\star}$.
\end{lemma}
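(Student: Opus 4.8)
The plan is to reduce the statement directly to Lemma~\ref{lem:prob_computation_sum_to_1}, since the genuine probabilistic content has already been discharged there. To show that $\TD_{\Z}$ is a probability distribution over $\Act^{\star}$ I need two things: that $\TD_{\Z}$ is $[0,1]$-valued with finite support, and that $\sum_{\alpha \in \Act^{\star}} \TD_{\Z}(\alpha) = 1$. The first is immediate: $\TD_{\Z}(\alpha) = \pr(\C_{\max}(z,\alpha))$ is a sum of products of execution probabilities, hence non-negative and bounded by $1$; and since $z$ has finite depth and the PTS is image-finite with finitely-supported transition distributions, $\C_{\max}(z)$ is a finite set, so only finitely many traces carry positive weight.

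Before summing, I would verify that $\pr(\C_{\max}(z,\alpha))$ is well defined, i.e.\ that it may legitimately be written as $\sum_{c \in \C_{\max}(z,\alpha)} \pr(c)$. Recall that $\pr(\C')$ is defined only when no computation in $\C'$ is a proper prefix of any other. But a maximal computation is, by definition, never a proper prefix of any computation, so the condition holds trivially for $\C_{\max}(z,\alpha)$.

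The key step is then a reindexing of the total sum. By Definition~\ref{def:trc} each computation $c \in \C_{\max}(z)$ is compatible with exactly one trace, namely $\tr(c)$, so $c \in \C_{\max}(z,\alpha)$ if{f} $\alpha = \tr(c)$. Hence the family $\{\C_{\max}(z,\alpha)\}_{\alpha \in \Act^{\star}}$ partitions $\C_{\max}(z)$, and
\[
\sum_{\alpha \in \Act^{\star}} \TD_{\Z}(\alpha) = \sum_{\alpha \in \Act^{\star}} \sum_{c \in \C_{\max}(z,\alpha)} \pr(c) = \sum_{c \in \C_{\max}(z)} \pr(c) = 1,
\]
where the final equality is exactly Lemma~\ref{lem:prob_computation_sum_to_1}.

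I do not expect a genuine obstacle here: the argument is essentially bookkeeping, and all the inductive and probabilistic work lives in Lemma~\ref{lem:prob_computation_sum_to_1}. The only point that deserves a moment of care is the well-definedness of $\pr$ on the sets $\C_{\max}(z,\alpha)$, which rests entirely on maximality ruling out the prefix relation among the computations sharing a common trace.
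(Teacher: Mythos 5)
Your proof is correct and follows essentially the same route as the paper's: both establish that the sets $\C_{\max}(z,\alpha)$ partition $\C_{\max}(z)$ (since each maximal computation is compatible with exactly one trace), reindex the sum over traces as a sum over maximal computations, and invoke Lemma~\ref{lem:prob_computation_sum_to_1} to conclude it equals $1$. Your explicit remarks on the well-definedness of $\pr$ on $\C_{\max}(z,\alpha)$ and on finite support are sound additions that the paper leaves implicit, but they do not change the argument.
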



\begin{proof}
By definition and by Lemma~\ref{lem:prob_computation_sum_to_1} we have that for each $\alpha \in \Act^{\star}$
\[
0 \le \pr(\C_{\max}(z,\alpha)) = \sum_{c \in \C_{\max}(z,\alpha)} \pr(c) \le \sum_{c \in \C_{\max}(z)} \pr(c) = 1 
\]
Hence, we are guaranteed that $\TD_{\Z}(\alpha) \in [0,1]$ for each $\alpha \in \Act^{\star}$.
Thus, to prove the thesis we simply need to show that $\sum_{\alpha \in \Act^{\star}} \TD_{\Z}(\alpha) = 1$.
We have that
\[
\begin{array}{rlr}
\sum_{\alpha \in \Act^{\star}} \TD_{\Z}(\alpha) ={} & \sum_{\alpha \in \Act^{\star}} \pr(\C_{\max}(z,\alpha)) \\
={} & \sum_{\alpha \in \tr(\C_{\max}(z))} \pr(\C_{\max}(z,\alpha)) \\
={} & \sum_{\alpha \in \tr(\C_{\max}(z)), c \in \C_{\max}(z,\alpha)} \pr(c) \\
={} & \sum_{c \in \bigcup_{\alpha \in \tr(\C_{\max}(z))} \C_{\max}(z,\alpha)} \pr(c) \\
={} & \sum_{c \in \C_{\max}(z)} \pr(c) \\
={} & 1
\end{array}
\]
where
\begin{itemize}
\item the second equality follows from the fact that by definition $\pr(\C_{\max}(z,\alpha)) = 0$ for each $\alpha \not \in \tr(\C_{\max}(z))$;
\item the fourth equality follows from the fact that each maximal computation of $z$ belongs to a set $\C_{\max}(z,\alpha)$ for at most one trace $\alpha$, namely $\bigcup_{\alpha \in \tr(\C_{\max}(z))} \C_{\max}(z,\alpha)$ is a disjoint union (and therefore no probability weight is counted more than once);
\item the fifth equality follows by the fact that the disjoint union $\bigcup_{\alpha \in \tr(\C_{\max}(z))} \C_{\max}(z,\alpha)$ is a partition of $\C_{\max}(z)$;
\item the sixth equality follows by Lemma~\ref{lem:prob_computation_sum_to_1}.
\end{itemize}
\end{proof}


We remark that function $\TD_{\_}$ plays the role of the \emph{trace distribution} introduced in \cite{S95tr}.
Formally, in \cite{S95tr} the trace distribution for a resolution is defined as the probability space built over its set of traces.
Here, we simply identify it with the probability distribution defined on the probability space.
In this setting, two resolutions are said to be \emph{trace distribution equivalent} if they have the same trace distribution and thus two processes are trace equivalent if their resolutions are pairwise equivalent.

\begin{lemma}
\label{lem:computations_same_sum}
Consider any resolution $\Z \in \res(\proc)$ with initial state $z$.
Consider any trace $\alpha \in \Act^{\star}$.
Then $\pr(\C(z,\alpha)) = \sum_{c \in P_{\max}(z,\alpha)} \pr(c)$, where $P_{\max}(z,\alpha)$ is the set of maximal computations from $z$ having a prefix which is compatible with $\alpha$.
\end{lemma}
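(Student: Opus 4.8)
The plan is to reduce the claim to Lemma~\ref{lem:prob_computation_sum_to_1} by exhibiting an explicit partition of $P_{\max}(z,\alpha)$ indexed by the computations in $\C(z,\alpha)$. First I would check that the left-hand side is even well defined: every $c \in \C(z,\alpha)$ has action sequence exactly $\alpha$, hence the same length $|\alpha|$, so no element of $\C(z,\alpha)$ is a proper prefix of another and $\pr(\C(z,\alpha)) = \sum_{c \in \C(z,\alpha)} \pr(c)$.

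The combinatorial heart of the argument is to establish
\[
P_{\max}(z,\alpha) = \bigsqcup_{c \in \C(z,\alpha)} \{\, c \cdot c'' \mid c'' \in \C_{\max}(z_c) \,\},
\]
where $z_c$ denotes the state reached by $c$ and $c \cdot c''$ is the concatenation of $c$ with a computation $c''$ from $z_c$. For the inclusion from left to right, given $c_{\max} \in P_{\max}(z,\alpha)$ I would take its prefix of length $|\alpha|$; since that prefix has action sequence $\alpha$ it lies in $\C(z,\alpha)$, and writing $c_{\max} = c \cdot c''$ the suffix $c''$ must be maximal from $z_c$, for otherwise $c_{\max}$ itself could be extended. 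The reverse inclusion is immediate, as concatenating a maximal computation from $z_c$ onto $c$ yields a maximal computation whose length-$|\alpha|$ prefix is $c$, hence compatible with $\alpha$. Disjointness holds because the prefix of a given length of any computation is unique, so two distinct $c, \tilde{c} \in \C(z,\alpha)$ cannot both be length-$|\alpha|$ prefixes of the same maximal computation.

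Finally I would sum over this partition, using the multiplicativity $\pr(c \cdot c'') = \pr(c)\cdot\pr(c'')$ (the same factorisation already exploited in Lemma~\ref{lem:prob_computation_sum_to_1}) together with Lemma~\ref{lem:prob_computation_sum_to_1} applied to each endpoint $z_c$:
\[
\sum_{c_{\max} \in P_{\max}(z,\alpha)} \pr(c_{\max})
= \sum_{c \in \C(z,\alpha)} \sum_{c'' \in \C_{\max}(z_c)} \pr(c)\,\pr(c'')
= \sum_{c \in \C(z,\alpha)} \pr(c) \sum_{c'' \in \C_{\max}(z_c)} \pr(c'')
= \sum_{c \in \C(z,\alpha)} \pr(c),
\]
since each inner sum equals $1$; the last expression is precisely $\pr(\C(z,\alpha))$. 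The degenerate case $\C(z,\alpha) = \emptyset$ (equivalently $P_{\max}(z,\alpha) = \emptyset$) makes both sides $0$ and needs no separate treatment.

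The step I expect to be the main obstacle is the careful verification of the partition, specifically the uniqueness of the length-$|\alpha|$ prefix and the observation that maximality of $c_{\max}$ forces maximality of the suffix $c''$ from $z_c$. Once this structural decomposition is pinned down, the probabilistic part is a routine application of the already-established Lemma~\ref{lem:prob_computation_sum_to_1}.
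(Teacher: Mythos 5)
Your proof is correct, but it takes a genuinely different route from the paper's. The paper argues by induction on $|\alpha|$: it first dispatches the case $\pr(\C(z,\alpha))=0$, handles $\alpha=\varepsilon$ via Lemma~\ref{lem:prob_computation_sum_to_1}, and in the inductive step unfolds the unique transition $z \trans[a]_{\Z} \pi$ of the resolution, writing $\pr(\C(z,a\alpha')) = \sum_{z'\in\support(\pi)}\pi(z')\,\pr(\C(z',\alpha'))$ and observing that $P_{\max}(z,a\alpha')$ is obtained by prepending the step $z\ctrans[a] z'$ to the computations in $\bigcup_{z'\in\support(\pi)}P_{\max}(z',\alpha')$. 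You avoid the induction on $\alpha$ altogether: you partition $P_{\max}(z,\alpha)$ according to the unique length-$|\alpha|$ prefix, which lies in $\C(z,\alpha)$, and invoke Lemma~\ref{lem:prob_computation_sum_to_1} at each endpoint $z_c$ so that the inner sums collapse to $1$. Both arguments rest on the same two ingredients --- multiplicativity of $\pr$ under concatenation and Lemma~\ref{lem:prob_computation_sum_to_1} --- but yours confines the induction entirely to that lemma and makes the combinatorial content (the bijection between $P_{\max}(z,\alpha)$ and pairs $(c,c'')$ with $c\in\C(z,\alpha)$ and $c''\in\C_{\max}(z_c)$) explicit in a single step, which is arguably cleaner and also yields the useful identity $\pr(\C(z,\alpha))=\sum_{c\in\C(z,\alpha)}\pr(c)$ as a by-product; the paper's unfolding stays closer to the recursive structure of resolutions and to the style of the surrounding proofs. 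One small point, present in both proofs: Lemma~\ref{lem:prob_computation_sum_to_1} is stated for the initial state of a resolution, whereas you (like the paper in its own inductive steps) apply it to interior states $z_c$; this is harmless because the subsystem rooted at any reachable state of a resolution is again a resolution, but it merits an explicit sentence.
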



\begin{proof}
For simplicity let us distinguish two cases.
\begin{enumerate}
\item $\pr(\C(z,\alpha)) = 0$.
This implies that there is no computation from $z$ which is compatible with $\alpha$.
Clearly, this gives that there can not be any maximal computation from $z$ having a prefix compatible with $\alpha$, namely $P_{\max}(z,\alpha) = \emptyset$.
Thus we have $\sum_{c \in P_{\max}(z,\alpha)} \pr(c) = 0$ from which the thesis follows.

\item $\pr(\C(z,\alpha)) > 0$.
In this case, we proceed by induction over $|\alpha|$.
\begin{itemize}
\item Base case $|\alpha|=0$, namely $\alpha = \varepsilon$.
The only computation compatible with $\alpha$ is the empty computation for which it holds that $\pr(\C(z,\alpha)) = 1$.
Since the empty computation is a prefix for all computations from $z$ we have that $P_{\max}(z,\alpha) = \C_{\max}(z)$.
By Lemma~\ref{lem:prob_computation_sum_to_1} we have that $\sum_{c \in \C_{\max}(z)}\pr(c) = 1$ and thus the thesis follows.
\item Inductive step $|\alpha|>0$.
Assume wlog that the only transition inferable for $z$ in $\Z$ is $z \trans[a]_{\Z}\pi$.
Hence $\alpha = a\alpha'$ for some $\alpha' \in \Act^{\star}$, with $|\alpha'|< |\alpha|$.
Then we have
\[
\begin{array}{rlr}
\pr(\C(z,\alpha)) ={} & \sum_{z' \in \support(\pi)} \pi(z') \pr(\C(z',\alpha')) \\
={} & \sum_{z' \in \support(\pi)} \Big( \pi(z') \cdot \sum_{c' \in P_{\max}(z',\alpha')} \pr(c') \Big) & \text{(by induction over $|\alpha'|$)}\\
={} & \sum_{z' \in \support(\pi),\, c' \in P_{\max}(z',\alpha')} \pi(z') \pr(c') \\
={} & \sum_{c \in P_{\max}(z,a\alpha')} \pr(c)
\end{array}
\]
where the last equality follows by considering that 
\[
P_{\max}(z,a\alpha') = \Big\{ c \mid c = z \trans[a]_{\Z} c' \text{ and } c' \in \bigcup_{z'\in \support(\pi)}P_{\max}(z',\alpha') \Big\}.
\]
\end{itemize}
\end{enumerate}
\end{proof}


\begin{proposition}
\label{prop:equivalence_S95tr_BdNL14_strong}
For any pair of resolutions $\Z_1,\Z_2 \in \res(\proc)$, with initial states $z_1,z_2$ resp., we have that $\TD_{\Z_1} = \TD_{\Z_2}$ if{f} $\Pr(\C(z_1,\alpha)) = \Pr(\C(z_2,\alpha))$ for all traces $\alpha \in \Act^{\star}$.
\end{proposition}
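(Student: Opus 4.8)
The plan is to derive, for an arbitrary resolution $\Z \in \res(\proc)$ with initial state $z$, a single identity expressing the cumulative function $\alpha \mapsto \pr(\C(z,\alpha))$ as an upward sum of the trace distribution $\TD_{\Z}$ along the prefix order, and then to observe that this transformation is invertible. Write $\alpha \le \beta$ when $\alpha$ is a prefix of $\beta$. By Lemma~\ref{lem:computations_same_sum}, $\pr(\C(z,\alpha)) = \sum_{c \in P_{\max}(z,\alpha)}\pr(c)$, and a maximal computation $c$ belongs to $P_{\max}(z,\alpha)$ exactly when $\alpha \le \tr(c)$. Grouping these maximal computations according to their trace $\beta = \tr(c)$ (a disjoint decomposition $P_{\max}(z,\alpha) = \bigcup_{\beta \ge \alpha}\C_{\max}(z,\beta)$) therefore yields
\begin{equation}
\pr(\C(z,\alpha)) \;=\; \sum_{\beta \ge \alpha} \pr(\C_{\max}(z,\beta)) \;=\; \sum_{\beta \ge \alpha} \TD_{\Z}(\beta).
\tag{$\star$}
\end{equation}
This holds for every $\alpha \in \Act^{\star}$, including those $\alpha$ with $\pr(\C(z,\alpha)) = 0$, where both sides vanish. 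Note also that $\pr(\C(z,\alpha))$ is well defined under the convention for $\pr(\cdot)$, since all computations compatible with a fixed $\alpha$ have length $|\alpha|$ and hence none is a proper prefix of another.

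The forward implication is then immediate from $(\star)$: if $\TD_{\Z_1} = \TD_{\Z_2}$, then for every $\alpha$
\[
\pr(\C(z_1,\alpha)) = \sum_{\beta \ge \alpha}\TD_{\Z_1}(\beta) = \sum_{\beta \ge \alpha}\TD_{\Z_2}(\beta) = \pr(\C(z_2,\alpha)).
\]

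For the converse I would invert $(\star)$ without any induction. The set $\{\beta : \beta \ge \alpha\}$ is the disjoint union of $\{\alpha\}$ with the sets $\{\beta : \beta \ge \alpha a\}$ for $a \in \Act$, because every proper extension of $\alpha$ continues with a uniquely determined action $a$ and then extends $\alpha a$. Substituting this partition into $(\star)$ and applying $(\star)$ once more to each $\alpha a$ gives the one-step recurrence $\pr(\C(z,\alpha)) = \TD_{\Z}(\alpha) + \sum_{a \in \Act}\pr(\C(z,\alpha a))$, hence
\[
\TD_{\Z}(\alpha) = \pr(\C(z,\alpha)) - \sum_{a \in \Act}\pr(\C(z,\alpha a)).
\]
Thus each value $\TD_{\Z}(\alpha)$ is determined by the cumulative function $\pr(\C(z,\cdot))$ alone. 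Applying this formula to $\Z_1$ and $\Z_2$ under the hypothesis $\pr(\C(z_1,\cdot)) = \pr(\C(z_2,\cdot))$ gives $\TD_{\Z_1}(\alpha) = \TD_{\Z_2}(\alpha)$ for all $\alpha$, i.e.\ $\TD_{\Z_1} = \TD_{\Z_2}$.

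The only delicate point is the convergence and rearrangement of the infinite sums over $\{\beta \ge \alpha\}$ and over $a \in \Act$. Since $\Act$ is countable and each resolution has finitely supported transition distributions, the families involved are nonnegative and summable: every finite partial sum of $\sum_{\beta \ge \varepsilon}\TD_{\Z}(\beta)$ is bounded by $\sum_{c \in \C_{\max}(z)}\pr(c) = 1$ by Lemma~\ref{lem:prob_computation_sum_to_1}, so the series converge absolutely and both the regrouping by trace in $(\star)$ and the splitting off of the single term $\TD_{\Z}(\alpha)$ in the recurrence are legitimate. I expect this bookkeeping, rather than any conceptual difficulty, to be the main thing to get right.
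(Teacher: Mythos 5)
Your proof is correct, and for the converse direction it is genuinely more direct than what the paper does. The paper proves this proposition by pointing to the proof of Theorem~\ref{thm:equivalent_resolutions}: there the forward direction is handled essentially as in your $(\star)$ (via Lemma~\ref{lem:computations_same_sum} and the disjoint decomposition of $P_{\max}(z,\alpha)$ into the sets $\C_{\max}(z,\beta)$ for $\beta$ extending $\alpha$), but the backward direction is split into two separate steps: first a proof by contradiction, using the sets $\add_{z}(\alpha)$ and the identity $\bigcup_{a}P_{\max}(z,\alpha a) = P_{\max}(z,\alpha)\setminus\C_{\max}(z,\alpha)$, that the two resolutions have the same maximal traces; then a telescoping computation, with a case split on whether $|\alpha|$ equals $\depth{z_s}$, showing that the weights agree. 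Your single inversion formula $\TD_{\Z}(\alpha) = \pr(\C(z,\alpha)) - \sum_{a\in\Act}\pr(\C(z,\alpha a))$ subsumes both of those steps at once: it exhibits $\TD_{\Z}$ as an explicit function of the cumulative map $\pr(\C(z,\cdot))$, so equality of the latter transfers to the former with no contradiction argument and no case analysis. The ingredients are the same (Lemma~\ref{lem:computations_same_sum}, the prefix decomposition, and $1$-boundedness of the nonnegative sums via Lemmas~\ref{lem:prob_computation_sum_to_1} and~\ref{lem:trace_distribution} to justify the rearrangements), but your packaging of the converse as a M\"obius-style inversion along the prefix order is cleaner and shorter; what the paper's longer route buys is only that its intermediate facts (equality of the sets of maximal traces, the $\add$ bookkeeping) are reused verbatim in the weak-trace analogue, Theorem~\ref{thm:equivalent_resolutions_weak}.
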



\begin{proof}
The thesis follows by applying the same arguments used it the proof of Theorem~\ref{thm:equivalent_resolutions} below.
\end{proof}


Hence, we can now follow \cite{SDC07} to define the \emph{trace metric}.

\begin{definition}
[Trace distance on resolutions]
\label{def:trace_metric_det_res}
The \emph{trace distance on resolutions} $D_T \colon \res(\proc) \times \res(\proc) \to [0,1]$ is defined for any $\Z_1,\Z_2 \in \res(\proc)$ by
\[
D_T(\Z_1,\Z_2) = \Kantorovich(d_T)(\TD_{\Z_1}, \TD_{\Z_2}).
\]
\end{definition}

\begin{proposition}
[\!\protect{\cite[Proposition~2]{SDC07}}]
\label{prop:kernel_trace_distribution}
The kernel of $D_T$ is strong trace distribution equivalence of resolutions.
\end{proposition}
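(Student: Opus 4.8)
The plan is to unfold the definition of the kernel and show that $D_T(\Z_1,\Z_2) = 0$ holds if and only if $\TD_{\Z_1} = \TD_{\Z_2}$, which is exactly strong trace distribution equivalence of resolutions as identified just above. Since $D_T(\Z_1,\Z_2) = \Kantorovich(d_T)(\TD_{\Z_1},\TD_{\Z_2})$ and both $\TD_{\Z_1},\TD_{\Z_2}$ are genuine probability distributions over $\Act^{\star}$ by Lemma~\ref{lem:trace_distribution}, the whole statement reduces to the classical fact that the Kantorovich lifting of the \emph{discrete} metric $d_T$ has trivial kernel: it coincides with the total variation distance and therefore vanishes precisely on pairs of identical distributions. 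So the two nontrivial implications are handled separately.

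For the direction $\TD_{\Z_1} = \TD_{\Z_2} \Rightarrow D_T(\Z_1,\Z_2) = 0$, I would exhibit the diagonal matching $\w$ given by $\w(\alpha,\alpha) = \TD_{\Z_1}(\alpha)$ and $\w(\alpha,\beta) = 0$ for $\alpha \neq \beta$. Because the two trace distributions coincide, this $\w$ has $\TD_{\Z_1}$ and $\TD_{\Z_2}$ as its two marginals, so $\w \in \W(\TD_{\Z_1},\TD_{\Z_2})$; and since $d_T$ vanishes on the diagonal, the associated cost $\sum_{\alpha,\beta} \w(\alpha,\beta)\, d_T(\alpha,\beta)$ is $0$. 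As $D_T$ is the \emph{minimum} cost over all matchings and costs are nonnegative, this forces $D_T(\Z_1,\Z_2) = 0$. For the converse, suppose $D_T(\Z_1,\Z_2) = 0$. Because the distributions have finite support, the minimum in Definition~\ref{def:Kantorovich} is attained by some optimal matching $\w^{\star} \in \W(\TD_{\Z_1},\TD_{\Z_2})$ with $\sum_{\alpha,\beta} \w^{\star}(\alpha,\beta)\, d_T(\alpha,\beta) = 0$. Each summand is nonnegative, and $d_T(\alpha,\beta) = 1$ whenever $\alpha \neq \beta$, so every off-diagonal term must vanish, i.e.\ $\w^{\star}$ is concentrated on the diagonal. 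Reading off the two marginal constraints then gives $\TD_{\Z_1}(\alpha) = \w^{\star}(\alpha,\alpha) = \TD_{\Z_2}(\alpha)$ for every $\alpha \in \Act^{\star}$, whence $\TD_{\Z_1} = \TD_{\Z_2}$.

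I do not expect a genuine obstacle in this argument, as it is essentially the standard computation of the Kantorovich lifting of a discrete metric. The only point requiring a little care is the appeal to finiteness of support to guarantee that a minimizing matching $\w^{\star}$ actually exists, as already observed after Definition~\ref{def:Kantorovich}; this lets me reason directly with an optimal $\w^{\star}$ rather than with an infimizing sequence of matchings. Combining the two implications shows that the kernel of $D_T$ is precisely $\{(\Z_1,\Z_2) \mid \TD_{\Z_1} = \TD_{\Z_2}\}$, which is strong trace distribution equivalence of resolutions, as required.
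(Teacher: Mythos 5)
Your proof is correct. Note that the paper itself offers no proof of this proposition: it is imported verbatim as \cite[Proposition~2]{SDC07}, so there is no in-paper argument to compare yours against. What you give is the standard and complete justification: since $D_T(\Z_1,\Z_2) = \Kantorovich(d_T)(\TD_{\Z_1},\TD_{\Z_2})$ with $d_T$ the discrete metric on $\Act^{\star}$, the lifting collapses to total variation, whose kernel is equality of distributions; the diagonal matching handles one direction, and the concentration of a zero-cost optimal matching on the diagonal (together with the marginal constraints) handles the other. Your two points of care — that $\TD_{\Z_1}$ and $\TD_{\Z_2}$ are genuine probability distributions (Lemma~\ref{lem:trace_distribution}) and that finiteness of support guarantees the minimum in Definition~\ref{def:Kantorovich} is attained — are exactly the right ones, and both are consistent with the paper's standing assumptions. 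The argument is sound and in fact supplies a self-contained proof of a fact the paper merely cites.
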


To deal with nondeterministic choices, we lift the distance over deterministic resolutions to a pseudometric over processes by means of the Hausdorff lifting functional.

\begin{definition}
[Strong trace metric]
\label{def:trace_metric}
\emph{Strong trace metric} $\TraceMetric \colon \proc \times \proc \to [0,1]$ is defined for all $s,t \in \proc$ as 
\[
\TraceMetric(s,t) = \Hausdorff(D_T)(\res(s), \res(t)).
\]
\end{definition}

\begin{proposition}
[\!\protect{\cite[Proposition~3]{SDC07}}]
\label{prop:kernel_trace_equivalent}
The kernel of $\TraceMetric$ is probabilistic strong trace equivalence.
\end{proposition}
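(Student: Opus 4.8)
The kernel of $\TraceMetric$ is the relation $\{(s,t) \in \proc \times \proc \mid \TraceMetric(s,t) = 0\}$, so the goal is to establish that $\TraceMetric(s,t) = 0$ if and only if $s \STr t$. The plan is to first collapse this quantitative statement to a purely relational one by chaining the two results already proved about resolutions, and then to match the resulting condition against Definition~\ref{def:prob_trace_dist}. Concretely, by Proposition~\ref{prop:kernel_trace_distribution} we have $D_T(\Z_1,\Z_2) = 0$ exactly when $\TD_{\Z_1} = \TD_{\Z_2}$, and by Proposition~\ref{prop:equivalence_S95tr_BdNL14_strong} the latter holds exactly when $\pr(\C(z_1,\alpha)) = \pr(\C(z_2,\alpha))$ for every trace $\alpha \in \Act^{\star}$. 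Hence the matching condition appearing in Definition~\ref{def:prob_trace_dist} is precisely ``$D_T(\Z_s,\Z_t) = 0$''.

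It then remains to compare this with the Hausdorff distance. Since $\res(s)$ and $\res(t)$ are nonempty (the scheduler that stops immediately always yields a resolution) and $D_T$ is nonnegative, unfolding Definition~\ref{def:Hausdorff} shows that $\TraceMetric(s,t) = \Hausdorff(D_T)(\res(s),\res(t)) = 0$ if and only if both directional suprema vanish, i.e.\ if and only if $\inf_{\Z_t \in \res(t)} D_T(\Z_s,\Z_t) = 0$ for every $\Z_s \in \res(s)$ and, symmetrically, $\inf_{\Z_s \in \res(s)} D_T(\Z_t,\Z_s) = 0$ for every $\Z_t \in \res(t)$. The implication $s \STr t \Rightarrow \TraceMetric(s,t)=0$ is immediate from here: a resolution witnessing Definition~\ref{def:prob_trace_dist} satisfies $D_T = 0$ by the reduction above, and therefore realizes each infimum as $0$.

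The converse, $\TraceMetric(s,t) = 0 \Rightarrow s \STr t$, is where the real work lies, and I expect the attainment of the infima to be the main obstacle: from $\inf_{\Z_t} D_T(\Z_s,\Z_t) = 0$ I must extract an actual resolution $\Z_t$ with $D_T(\Z_s,\Z_t) = 0$, rather than merely a sequence of resolutions whose distances tend to $0$. To handle this I would exploit two facts. First, because $d_T$ is the discrete metric, its Kantorovich lifting coincides with the total variation distance, so $D_T(\Z_s,\Z_t) = 1 - \sum_{\alpha \in \Act^{\star}} \min\{\TD_{\Z_s}(\alpha),\TD_{\Z_t}(\alpha)\}$. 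Second, image-finiteness together with the finite support of the distributions makes every resolution a finitely branching tree of finite depth, so $\mu := \TD_{\Z_s}$ has finite support. These combine nicely: if $D_T(\Z_s,\Z_t)$ is strictly below the positive quantity $\min_{\alpha \in \support(\mu)} \mu(\alpha)$, then $\support(\mu) \subseteq \support(\TD_{\Z_t})$, so any infimizing sequence is eventually forced to carry all of the finitely many relevant traces and to approximate each weight $\mu(\alpha)$. The plan is to use this support containment, together with the finite branching of resolutions, to argue that the set of trace distributions of $t$ relevant to matching $\mu$ is closed, so that the infimum is in fact a minimum attained by some $\Z_t$ with $D_T(\Z_s,\Z_t) = 0$. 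Once attainment is secured, that $\Z_t$ is exactly the resolution required by Definition~\ref{def:prob_trace_dist}, and the symmetric argument on the other component of the Hausdorff distance completes the proof.
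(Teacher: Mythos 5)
The paper does not give its own proof of this proposition: it is imported verbatim from \cite{SDC07} (Proposition~3 there), so there is no in-paper argument to compare yours against. Judged on its own terms, your proposal is sound and fits the paper's framework well. The reduction of the matching clause of Definition~\ref{def:prob_trace_dist} to ``$D_T(\Z_s,\Z_t)=0$'' via Proposition~\ref{prop:kernel_trace_distribution} and Proposition~\ref{prop:equivalence_S95tr_BdNL14_strong} is exactly right, and you correctly isolate the only nontrivial point, namely that $\inf_{\Z_t\in\res(t)} D_T(\Z_s,\Z_t)=0$ must be upgraded to the existence of a $\Z_t$ realizing distance $0$. (The paper is alive to this issue in the analogous logical statement --- Proposition~\ref{prop:LLs_closure} plays precisely this role in the proof of Theorem~\ref{thm:kernel_of_DLLd} --- but silently elides it in the metric-side argument for the weak case, Proposition~\ref{prop:kernel_weak_trace_metric}.) Your total-variation identity for the Kantorovich lifting of the discrete metric and the resulting support-containment observation are correct. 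The one place where you are waving rather than proving is the final closedness claim: ``the set of trace distributions of $t$ relevant to matching $\mu$ is closed'' is asserted from finite branching but not established. In the setting the paper actually works in (finite-depth, image-finite processes with finite-support distributions, as witnessed by the induction on $\depth{z}$ in Lemma~\ref{lem:prob_computation_sum_to_1}), there is a shorter route that closes this gap outright: every resolution of $t$ is a finite tree, so the set $\{\TD_{\Z_t}\mid \Z_t\in\res(t)\}$ of trace distributions is \emph{finite}, $D_T$ factors through it, and the infimum is a minimum over finitely many values. I would replace the closure sketch by that finiteness observation; with it, the proof is complete.
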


\begin{figure}
\begin{center}
\begin{tikzpicture}
\node at (2.1,4.7){$\boldsymbol{s}$};
\draw[->](2.1,4.5)--(1.4,3.9);
\node at (1.6,4.3){$\boldsymbol{a}$};
\draw[->](2.1,4.5)--(2.8,3.9);
\node at (2.7,4.3){$\boldsymbol{a}$};
\draw[dotted,thick,->](1.4,3.9)--(0.7,3.3);
\node at (0.7,3.7){$\boldsymbol{0.5}$};
\draw[dotted,thick,->](1.4,3.9)--(2.1,3.3);
\node at (2.1,3.7){$\boldsymbol{0.5}$};
\draw[dotted,thick,->](2.8,3.9)--(2.8,3.3);
\node at (3,3.7){$\boldsymbol{1}$};
\node at (0.7,3){$\boldsymbol{s_1}$};
\node at (2.1,3){$\boldsymbol{s_2}$};
\node at (2.8,3){$\boldsymbol{s_3}$};
\draw[->](0.7,2.8)--(0,2.2);
\node at (0.1,2.5){$\boldsymbol{b}$};
\draw[->](0.7,2.8)--(1.4,2.2);
\node at (1.3,2.5){$\boldsymbol{c}$};
\draw[->](2.1,2.8)--(2.1,2.2);
\node at (2.3,2.5){$\boldsymbol{d}$};
\draw[->](2.8,2.8)--(2.8,2.2);
\node at (3,2.5){$\boldsymbol{b}$};
\node at (0,2){$\CIRCLE$};
\node at (1.4,2){$\CIRCLE$};
\node at (2.1,2){$\CIRCLE$};
\node at (2.8,2){$\CIRCLE$};
\node at (5.95,4.7){$\boldsymbol{t}$};
\draw[->](5.95,4.5)--(5.95,3.9);
\node at (6.2,4.3){$\boldsymbol{a}$};
\draw[dotted,thick,->](5.95,3.9)--(4.9,3.3);
\node at (4.9,3.7){$\boldsymbol{0.5}$};
\draw[dotted,thick,->](5.95,3.9)--(7,3.3);
\node at (7,3.7){$\boldsymbol{0.5}$};
\node at (4.9,3){$\boldsymbol{t_1}$};
\node at (7,3){$\boldsymbol{t_2}$};
\draw[->](4.9,2.8)--(4.2,2.2);
\node at (4.4,2.6){$\boldsymbol{b}$};
\draw[->](4.9,2.8)--(5.6,2.2);
\node at (5.4,2.6){$\boldsymbol{c}$};
\draw[->](7,2.8)--(6.3,2.2);
\node at (6.5,2.6){$\boldsymbol{b}$};
\draw[->](7,2.8)--(7.7,2.2);
\node at (7.5,2.6){$\boldsymbol{d}$};
\node at (4.2,2){$\CIRCLE$};
\node at (5.6,2){$\CIRCLE$};
\node at (6.3,2){$\CIRCLE$};
\node at (7.7,2){$\CIRCLE$};
\node at (9.8,4.7){$\boldsymbol{z_s}$};
\draw[->](9.8,4.5)--(9.8,3.9);
\node at (10,4.3){$\boldsymbol{a}$};
\draw[dotted,thick,->](9.8,3.9)--(9.1,3.3);
\node at (9.1,3.7){$\boldsymbol{0.5}$};
\draw[dotted,thick,->](9.8,3.9)--(10.5,3.3);
\node at (10.5,3.7){$\boldsymbol{0.5}$};
\node at (9.1,3){$\boldsymbol{z_{s_1}}$};
\node at (10.5,3){$\boldsymbol{z_{s_2}}$};
\draw[->](9.1,2.8)--(9.1,2.2);
\node at (9.3,2.6){$\boldsymbol{c}$};
\node at (9.1,2){$\CIRCLE$};
\node at (9.8,1.2){$\Z_s \in \res(s)$};
\node at (12.6,4.7){$\boldsymbol{z_t}$};
\draw[->](12.6,4.5)--(12.6,3.9);
\node at (12.8,4.3){$\boldsymbol{a}$};
\draw[dotted,thick,->](12.6,3.9)--(11.9,3.3);
\node at (11.9,3.7){$\boldsymbol{0.5}$};
\draw[dotted,thick,->](12.6,3.9)--(13.3,3.3);
\node at (13.3,3.7){$\boldsymbol{0.5}$};
\node at (11.9,3){$\boldsymbol{z_{t_1}}$};
\node at (13.3,3){$\boldsymbol{z_{t_2}}$};
\draw[->](11.9,2.8)--(11.9,2.2);
\node at (12.1,2.6){$\boldsymbol{c}$};
\draw[->](13.3,2.8)--(13.3,2.2);
\node at (13.5,2.6){$\boldsymbol{b}$};
\node at (11.9,2){$\CIRCLE$};
\node at (13.3,2){$\CIRCLE$};
\node at (12.6,1.2){$\Z_t \in \res(t)$};
\end{tikzpicture}
\end{center}
\caption{\label{fig:strong_metric} Processes $s,t$ are such that $s \not\STr t$ and $\TraceMetric(s,t) = 0.5$.}
\end{figure}

\begin{example}
\label{ex:strong_metric}
Consider processes $s,t$ in Fig.~\ref{fig:strong_metric}.
We have that $s \not \STr t$.
Notice that none of the resolutions for $s$ can exhibit both traces $ab$ and $ac$.
Thus, whenever we chose resolution $\Z_t \in \res(t)$ in Fig.~\ref{fig:strong_metric} for $t$, then there is no resolution for $s$ that can match $\Z_t$ on all traces.

Let us evaluate the trace distance between $s$ and $t$.
Since resolution $\Z_t$ for $t$ distinguishes the two processes, we start by evaluating its distance from the resolutions for $s$.
Consider the resolution $\Z_s \in \res(s)$ in Fig.~\ref{fig:strong_metric}.
By Def.~\ref{def:trace_distribution}, we have 
\[
\TD_{\Z_s} = 0.5 \delta_{ac} + 0.5 \delta_{a}
\qquad 
\TD_{\Z_t} = 0.5 \delta_{ac} + 0.5 \delta_{ab}.
\]
Clearly, $d_T(ac,ac) = 0$ and $d_T(ac,a) = d_T(ac,ab) = d_T(a,ab) = 1$.
Thus, by Def~\ref{def:trace_metric_det_res} we have
\[
\begin{array}{rlr}
D_T(\Z_s,\Z_t) ={} & \Kantorovich(d_T)(\TD_{\Z_s}, \TD_{\Z_t}) \\
={} & \min_{\w \in \W(\TD_{\Z_s}, \TD_{\Z_t})} \sum_{\alpha \in \support(\TD_{\Z_s}), \beta \in \support(\TD_{\Z_t})} \w(\alpha,\beta) \cdot d_T(\alpha,\beta) \\
={} & 0.5 \cdot d_T(ac,ac) + 0.5 \cdot d_T(a,ab) \\
={} & 0.5
\end{array}
\]
where to minimize the distance we have matched the two occurrences of the trace $ac$.
By similar calculations, one can easily obtain that 
\[
0.5 = D_T(\Z_t,\Z_s) = \sup_{\Z_2 \in \res(t)}\, \inf_{\Z_1 \in \res(s)}\, D_T(\Z_2,\Z_1).
\]
Moreover, it is immediate to check that whichever resolution for $s$ we choose, there is always a resolution for $t$ which is at trace distance $0$ from it, namely
\[
0 = \sup_{\Z_1 \in \res(s)}\, \inf_{\Z_2 \in \res(t)} D_T(\Z_1,\Z_2). 
\]
Therefore, we can conclude that 
\[
\TraceMetric(s,t) = \Hausdorff(D_T)(\res(s), \res(t)) = \max\{0,\, 0.5\} = 0.5
\]
\end{example}


\subsection{Weak trace metric}
\label{sec:weak_trace_metric}

To obtain the quantitative analogue of the weak trace equivalence, it is enough to adapt the notion of distance between traces (Definition~\ref{def:trace_metric_traces}) to the weak context.
The idea is that since silent steps cannot be observed, then they should not count on the trace distance.
Thus we introduce the notion of \emph{weak distance between traces} which is a $1$-bounded pseudometric over $\Acttau^{\star}$ having $\eqtrace$ as kernel.

\begin{definition}
[Weak distance between traces]
\label{def:weak_trace_metric_traces}
The \emph{weak distance between traces} $\dw \colon \Acttau^{\star} \times \Acttau^{\star} \to [0,1]$ is defined for any pair of traces $\alpha, \beta \in \Acttau^{\star}$ by
\[
\dw (\alpha, \beta) = 
\begin{cases}
0 & \text{ if } \alpha \eqtrace \beta \\
1 & \text{ otherwise.}
\end{cases}
\]
\end{definition}

It is clear that $\dw$ is a $1$-bounded pseudometric whose kernel is the equivalence of traces.

By substituting $d_T$ with $\dw$ in Definition~\ref{def:trace_metric_det_res} we obtain the notion of \emph{weak trace distance between resolutions}, denoted by the $1$-bounded pseudometric $\Dw$.
By lifting the relation of equivalence of traces $\eqtrace$ to an equivalence on probability distributions over traces $\eqtrace^{\dagger}$, we obtain that the kernel of $\Dw$ is given by the lifted equivalence on trace distributions, namely by the weak trace distribution equivalence of resolutions.
We can prove that our characterization of weak trace equivalence is equivalent to the one proposed in \cite{S95tr} in terms of trace distributions.

To simplify the reasoning in the upcoming proofs, let us define the weak version of the trace distribution given in Definition~\ref{def:trace_distribution}.
The idea is that we want to define a probability distribution on the traces executable by a resolution up-to trace equivalence.

\begin{definition}
\label{def:weak_trace_distribution}
Let $s \in \proc$ and consider any resolution $\Z \in \res(\proc)$, with $z = \corr{\Z}^{-1}(s)$.
We define the \emph{weak trace distribution} for $\Z$ as the function $\TDw_{\Z} \colon \Acttau^{\star} \to [0,1]$ defined by
$
\TDw_{\Z}(\alpha) = \Pr(\Cw_{\max}(z,\alpha)).
$
\end{definition}

\begin{lemma}
\label{lem:weak_trace_distribution}
For each $\Z \in \res(\proc)$, the weak trace distribution $\TDw_{\Z}$ is a probability distribution over $\Act^{\star}$.
\end{lemma}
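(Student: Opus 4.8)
The plan is to follow the proof of Lemma~\ref{lem:trace_distribution} almost verbatim, replacing everywhere the equality of traces by the equivalence of traces $\eqtrace$ of Definition~\ref{def:eqtrace}. Throughout I read $\Cw_{\max}(z,\alpha)$ as the set of \emph{maximal} computations from $z$ whose trace is weakly equivalent to $\alpha$, i.e.\ $\Cw_{\max}(z,\alpha) = \{c \in \C_{\max}(z) \mid \tr(c) \eqtrace \alpha\}$, which is exactly the intersection of $\Cw(z,\alpha)$ with $\C_{\max}(z)$; restricting attention to maximal computations is what guarantees that no execution probability is counted twice.

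First I would check that $\TDw_{\Z}(\alpha) \in [0,1]$ for every $\alpha$. Since $\Cw_{\max}(z,\alpha) \subseteq \C_{\max}(z)$ and no maximal computation is a proper prefix of another, $\pr(\Cw_{\max}(z,\alpha))$ is a well-defined sum of execution probabilities with $0 \le \pr(\Cw_{\max}(z,\alpha)) \le \pr(\C_{\max}(z)) = 1$, the last equality being Lemma~\ref{lem:prob_computation_sum_to_1}.

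The core of the argument is to show $\sum_{\alpha \in \Act^{\star}} \TDw_{\Z}(\alpha) = 1$, where the sum ranges over the $\tau$-free traces, which form a set of canonical representatives for the $\eqtrace$-classes. As in the strong case, I would verify that the family $\{\Cw_{\max}(z,\alpha)\}_{\alpha \in \Act^\star}$ is a partition of $\C_{\max}(z)$: it is \emph{disjoint} because two distinct $\tau$-free traces are never $\eqtrace$-related, so the trace of each maximal computation $c$ is weakly equivalent to exactly one $\tau$-free trace, namely its $\tau$-deletion; it \emph{covers} $\C_{\max}(z)$ because that $\tau$-free trace always exists; and $\TDw_\Z(\alpha) = 0$ whenever $\alpha \notin \trw(\C_{\max}(z))$. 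Rearranging the resulting nonnegative sum and applying Lemma~\ref{lem:prob_computation_sum_to_1} once more yields $\sum_{\alpha \in \Act^\star}\TDw_\Z(\alpha) = \sum_{c \in \C_{\max}(z)}\pr(c) = 1$.

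The point I would be most careful about is the bookkeeping of representatives. Because $\TDw_\Z$ is constant on each $\eqtrace$-class, summing it over the whole of $\Acttau^\star$ would grossly overcount, so the statement that $\TDw_\Z$ is a distribution ``over $\Act^\star$'' must be read as summation over the $\tau$-free representatives, one per class. This is also where the weak case genuinely departs from the strong one: the set $\Cw(z,\alpha)$ used for weak trace equivalence (Definition~\ref{def:weak_trace_equivalence}) may contain non-maximal computations whose continuations are counted again under longer traces, so it does not by itself sum to one; passing to the maximal computations in $\Cw_{\max}(z,\alpha)$ is precisely what repairs this and delivers a genuine probability distribution.
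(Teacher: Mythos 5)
Your proof is correct and follows exactly the route the paper intends: the paper's own proof of this lemma simply defers to the argument of Lemma~\ref{lem:trace_distribution}, and what you have written is precisely that argument adapted to the weak setting (partitioning $\C_{\max}(z)$ by $\eqtrace$-classes with $\tau$-free representatives and applying Lemma~\ref{lem:prob_computation_sum_to_1}). Your closing observation about why the sum must range over $\Act^{\star}$ rather than $\Acttau^{\star}$ is the content of the paper's Remark~\ref{rmk:tdw_probability}, so that bookkeeping point is well placed.
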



\begin{proof}
The thesis follows by applying the same arguments used in the proof of Lemma~\ref{lem:trace_distribution} above.
\end{proof}


\begin{remark}
\label{rmk:tdw_probability}
Notice that $\TDw_{\_}$ is not a probability distribution over $\Acttau^{\star}$.
In fact it is enough to consider the simple resolution $\Z$ having $z$ as initial state for which the only transition in $\Z$ is $c = z \trans[a]_{\Z} \delta_{\nihl}$, namely $z$ executes $a$ and then with probability $1$ it ends its execution.
Clearly we have that $a \eqtrace \tau^{n} a \tau^{m}$ for all $n,m \ge 0$.
Let $\alpha_{n,m} = \tau^{n} a \tau^{m}$.
Then by definition of weak trace distribution (Definition~\ref{def:weak_trace_distribution}) we would have that $\TDw_{\Z}(\alpha_{n,m}) = \pr(\Cw_{\max}(z,\alpha_{n,m})) = \pr(c) = 1$, for all $n,m \ge 0$.
Clearly this would imply that $\sum_{\alpha \in \Acttau^{\star}} \TDw_{\Z}(\alpha) = \sum_{n,m \ge 0} \TDw_{\Z}(\alpha_{n,m}) > 1$.

However we remark hat $\TD_{\Z}$ is a probability distribution over $\Acttau^{\star}$ and thus $\Dw$ is well defined.
\end{remark}

We aim to show now that there is a strong relation between the trace distribution for a resolution and its weak version: they are equivalent distributions.

\begin{lemma}
\label{lem:equivalent_TD}
For each $\Z \in \res(\proc)$ we have that $\TD_{\Z} \eqtrace^{\dagger} \TDw_{\Z}$.
\end{lemma}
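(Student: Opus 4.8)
The plan is to unfold the definition of the lifting $\eqtrace^{\dagger}$ (Definition~\ref{def:lifting_relation}) and to exhibit an explicit witnessing decomposition of $\TDw_{\Z}$ matching one of $\TD_{\Z}$. Recall that $\TD_{\Z} = \sum_{\beta} \TD_{\Z}(\beta)\,\delta_{\beta}$ ranges over strong traces $\beta \in \Acttau^{\star}$, whereas by Lemma~\ref{lem:weak_trace_distribution} the distribution $\TDw_{\Z}$ is supported on $\Act^{\star}$, i.e.\ on the $\tau$-free representatives of the classes of $\eqtrace$. For $\beta \in \Acttau^{\star}$ I write $\widehat{\beta} \in \Act^{\star}$ for the trace obtained by deleting every occurrence of $\tau$ from $\beta$; by Definition~\ref{def:eqtrace} one has $\beta \eqtrace \alpha$ iff $\widehat{\beta} = \alpha$ for every $\alpha \in \Act^{\star}$, and in particular $\beta \eqtrace \widehat{\beta}$.

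First I would establish the key numerical identity
\[
\TDw_{\Z}(\alpha) \;=\; \sum_{\beta \eqtrace \alpha} \TD_{\Z}(\beta) \qquad \text{for every } \alpha \in \Act^{\star}.
\]
This is obtained by reading $\Cw_{\max}(z,\alpha)$ as the set of maximal computations of $z$ whose trace is $\eqtrace$-equivalent to $\alpha$ (just as $\C_{\max}(z,\beta)$ collects the maximal computations whose trace is exactly $\beta$), and then partitioning this set according to the actual trace $\beta = \tr(c)$ of each maximal computation $c$. Since every $c \in \C_{\max}(z)$ has a unique trace $\tr(c)$, and since $\tr(c) \eqtrace \alpha$ holds precisely when $\tr(c)$ lies in the class $[\alpha]_{\mathrm{w}}$, this partition is disjoint and exhausts $\Cw_{\max}(z,\alpha)$; summing $\pr(c)$ over the blocks yields $\sum_{\beta \eqtrace \alpha}\pr(\C_{\max}(z,\beta)) = \sum_{\beta \eqtrace \alpha}\TD_{\Z}(\beta)$, using Definitions~\ref{def:trace_distribution} and~\ref{def:weak_trace_distribution}. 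Note that, although the class $[\alpha]_{\mathrm{w}}$ is infinite, only finitely many $\beta$ in it are genuine maximal traces of $\Z$, so the sum is effectively finite.

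Next I would build the witnessing decomposition required by Definition~\ref{def:lifting_relation}. Given any decomposition $\TD_{\Z} = \sum_{i\in I} p_i\,\delta_{x_i}$ with each $x_i \in \Acttau^{\star}$, I map every index $i$ to the single target $\widehat{x_i} \in \Act^{\star}$, keeping the whole weight $p_i$ (so each block $J_i$ is a singleton). This respects $\eqtrace$ because $x_i \eqtrace \widehat{x_i}$, and it reconstructs $\TDw_{\Z}$: grouping by target and using the identity above gives, for each $\alpha \in \Act^{\star}$,
\[
\sum_{i \,:\, \widehat{x_i} = \alpha} p_i \;=\; \sum_{\beta \eqtrace \alpha}\TD_{\Z}(\beta) \;=\; \TDw_{\Z}(\alpha),
\]
whence $\sum_{i\in I} p_i\,\delta_{\widehat{x_i}} = \TDw_{\Z}$. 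Handling an arbitrary decomposition this way (rather than just the canonical one) discharges the universal ``whenever'' in Definition~\ref{def:lifting_relation}, and we conclude $\TD_{\Z} \eqtrace^{\dagger} \TDw_{\Z}$.

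I expect the only delicate point to be the probability bookkeeping in the identity above: one must ensure that collapsing several strong traces onto one weak representative neither loses nor double-counts weight. This is exactly where the restriction to \emph{maximal} computations is used, together with the fact (Lemma~\ref{lem:weak_trace_distribution} and Lemma~\ref{lem:trace_distribution}) that both $\TD_{\Z}$ and $\TDw_{\Z}$ are genuine probability distributions, since this guarantees that the blocks $\{c \in \C_{\max}(z) : \tr(c) = \beta\}$ for $\beta \eqtrace \alpha$ are pairwise disjoint and have $\Cw_{\max}(z,\alpha)$ as their union.
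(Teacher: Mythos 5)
Your proof is correct and follows essentially the same route as the paper: the paper proves this lemma by appeal to the argument for Lemma~\ref{lem:mimicking_equiv_weak_mimicking}, whose core is exactly your identity $\pr(\Cw_{\max}(z,\alpha)) = \sum_{\beta \eqtrace \alpha}\pr(\C_{\max}(z,\beta))$, obtained by partitioning the maximal computations in a weak class according to their exact traces and then regrouping to witness the lifting. Your explicit handling of an arbitrary decomposition in Definition~\ref{def:lifting_relation} is a minor extra care the paper leaves implicit, but the substance is identical.
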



\begin{proof}
The thesis follows by applying the same arguments used in the proof of Lemma~\ref{lem:mimicking_equiv_weak_mimicking} below.
\end{proof}


\begin{proposition}
\label{prop:equivalent_definitions_of_weak}
For any pair of resolutions $\Z_1, \Z_2 \in \res(\proc)$, with initial states $z_1$ and $z_2$ resp., we have that $\TD_{\Z_1} \eqtrace^{\dagger} \TD_{\Z_2}$ if{f} $\pr(\Cw(z_1, \alpha)) = \pr(\Cw(z_2,\alpha))$ for all $\alpha \in \Act^{\star}$.
\end{proposition}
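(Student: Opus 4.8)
The plan is to reduce the statement to its strong counterpart, Proposition~\ref{prop:equivalence_S95tr_BdNL14_strong}, by passing from the trace distribution $\TD_{\Z}$ to its weak version $\TDw_{\Z}$. First I would record that $\eqtrace^{\dagger}$, being the lifting (Definition~\ref{def:lifting_relation}) of the equivalence relation $\eqtrace$, is itself an equivalence relation on $\ProbDist{\Acttau^{\star}}$, and that for an equivalence relation the lifting amounts to the condition that the two distributions assign the same total mass to every $\eqtrace$-equivalence class. Under this reading, $\TD_{\Z_1} \eqtrace^{\dagger} \TD_{\Z_2}$ holds if{f} for each class $[\gamma]_{\mathrm{w}}$ we have $\sum_{\alpha \eqtrace \gamma}\TD_{\Z_1}(\alpha) = \sum_{\alpha \eqtrace \gamma}\TD_{\Z_2}(\alpha)$.

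Next I would identify the mass of $\TD_{\Z}$ on a class with the weak trace distribution. Since $\C_{\max}(z,\alpha)$ depends only on the $\eqtrace$-class of $\alpha$, and the sets $\{\C_{\max}(z,\alpha)\}_{\alpha \eqtrace \gamma}$ are pairwise disjoint and exhaust the maximal computations whose trace is $\eqtrace$-equivalent to $\gamma$, one obtains $\sum_{\alpha \eqtrace \gamma}\TD_{\Z}(\alpha) = \pr(\Cw_{\max}(z,\gamma)) = \TDw_{\Z}(\gamma)$; this is precisely the content of Lemma~\ref{lem:equivalent_TD}. Recalling from Lemma~\ref{lem:weak_trace_distribution} that each $\TDw_{\Z}$ is a probability distribution supported on $\Act^{\star}$, i.e.\ on the unique $\tau$-free representative of each class, the previous equivalence collapses to $\TD_{\Z_1} \eqtrace^{\dagger} \TD_{\Z_2}$ if{f} $\TDw_{\Z_1} = \TDw_{\Z_2}$ on $\Act^{\star}$.

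It then remains to prove that $\TDw_{\Z_1} = \TDw_{\Z_2}$ if{f} $\pr(\Cw(z_1,\alpha)) = \pr(\Cw(z_2,\alpha))$ for all $\alpha \in \Act^{\star}$, which is the weak analogue of Proposition~\ref{prop:equivalence_S95tr_BdNL14_strong}. For this I would first establish the weak analogue of Lemma~\ref{lem:computations_same_sum}, namely $\pr(\Cw(z,\alpha)) = \sum_{c \in P^{\mathrm{w}}_{\max}(z,\alpha)}\pr(c)$, where $P^{\mathrm{w}}_{\max}(z,\alpha)$ is the set of maximal computations from $z$ having a prefix weakly compatible with $\alpha$. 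Grouping these maximal computations according to the $\tau$-free reduct $\beta \in \Act^{\star}$ of their full trace, which necessarily has $\alpha$ as a prefix, yields the cumulative identity $\pr(\Cw(z,\alpha)) = \sum_{\beta \sqsupseteq \alpha}\TDw_{\Z}(\beta)$, the sum ranging over the prefix order on $\Act^{\star}$. This relation inverts exactly as in the strong case, via $\TDw_{\Z}(\alpha) = \pr(\Cw(z,\alpha)) - \sum_{a \in \Act}\pr(\Cw(z,\alpha a))$, so the families $\{\TDw_{\Z}(\alpha)\}_{\alpha}$ and $\{\pr(\Cw(z,\alpha))\}_{\alpha}$ determine each other, and the claimed equivalence follows.

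The main obstacle I anticipate is the weak analogue of Lemma~\ref{lem:computations_same_sum}: in the presence of $\tau$ the defining clause of $\Cw(z,\alpha)$ (``$c$ is not a proper prefix of any other weakly compatible computation'') must be handled so that trailing $\tau$-steps are absorbed without double counting, and one must verify that grouping maximal computations by the $\tau$-free reduct of their trace really gives a partition indexed by $\tau$-free prefixes. I expect the induction on $|\alpha|$ to mirror the strong proof, with the extra care concentrated at the $\tau$-steps, where the trace does not grow although the computation does; determinism of resolutions and image-finiteness keep the underlying branching structure a finitely-branching tree, so that all the sums involved are well defined and the inductive step goes through.
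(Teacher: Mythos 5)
Your proof is correct and follows essentially the same route as the paper: the paper establishes this proposition by the arguments of Theorem~\ref{thm:equivalent_resolutions_weak}, which likewise passes to the weak trace distributions via the equivalence-class-mass reading of $\eqtrace^{\dagger}$ (Lemma~\ref{lem:equivalent_TD}, Lemma~\ref{lem:mimicking_equiv_weak_mimicking}) and then replays the strong-case machinery of Theorem~\ref{thm:equivalent_resolutions} and Lemma~\ref{lem:computations_same_sum} on $\tau$-free representatives. Your explicit inversion identity $\TDw_{\Z}(\alpha) = \pr(\Cw(z,\alpha)) - \sum_{a\in\Act}\pr(\Cw(z,\alpha a))$ is a slightly cleaner packaging of the computation that closes the paper's proof of Theorem~\ref{thm:equivalent_resolutions}, but the underlying argument is the same.
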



\begin{proof}
The thesis follows by the same arguments used in the proof of Theorem~\ref{thm:equivalent_resolutions_weak} below.
\end{proof}


\begin{proposition}
\label{prop:kernel_weak_trace_metric_det_res}
The kernel of $\Dw$ is weak trace distribution equivalence of resolutions.
\end{proposition}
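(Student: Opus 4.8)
The plan is to reduce the statement to a general property of the Kantorovich lifting: for any $1$-bounded pseudometric whose kernel is an equivalence relation, say $\dw$ with kernel $\eqtrace$, the kernel of its Kantorovich lifting is precisely the lifted relation $\eqtrace^{\dagger}$. Since $\Dw$ is obtained from Definition~\ref{def:trace_metric_det_res} by replacing $d_T$ with $\dw$, we have $\Dw(\Z_1,\Z_2) = \Kantorovich(\dw)(\TD_{\Z_1},\TD_{\Z_2})$, and the kernel of $\dw$ is $\eqtrace$. Hence it suffices to show $\Dw(\Z_1,\Z_2)=0 \iff \TD_{\Z_1} \eqtrace^{\dagger} \TD_{\Z_2}$, which is exactly the claim that the kernel of $\Dw$ is weak trace distribution equivalence of resolutions. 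Throughout I exploit that $\TD_{\Z_1}$ and $\TD_{\Z_2}$ are genuine probability distributions over $\Acttau^{\star}$ (Remark~\ref{rmk:tdw_probability}) with finite support, so the minimum defining the Kantorovich metric is attained.

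For the direction $\TD_{\Z_1} \eqtrace^{\dagger}\TD_{\Z_2} \Rightarrow \Dw(\Z_1,\Z_2)=0$, I first note that, because $\eqtrace$ is an equivalence, the lifting of Definition~\ref{def:lifting_relation} specialises to the condition that $\TD_{\Z_1}$ and $\TD_{\Z_2}$ assign equal mass to every $\eqtrace$-equivalence class. I then build a matching $\w \in \W(\TD_{\Z_1},\TD_{\Z_2})$ class by class: within each class $C$ I couple the finitely supported restrictions of the two distributions to $C$ in any way, which is possible precisely because $\TD_{\Z_1}(C)=\TD_{\Z_2}(C)$. The resulting $\w$ is concentrated on pairs of $\eqtrace$-equivalent traces, on which $\dw$ vanishes, so its Kantorovich cost is $0$; as the Kantorovich metric is the minimum over matchings, $\Dw(\Z_1,\Z_2)=0$.

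For the converse, I start from an optimal matching $\w$ realising $\Dw(\Z_1,\Z_2)=\Kantorovich(\dw)(\TD_{\Z_1},\TD_{\Z_2})=0$. Since $\dw$ takes only the values $0$ and $1$ and $\w$ is nonnegative, the vanishing of $\sum_{\alpha,\beta}\w(\alpha,\beta)\,\dw(\alpha,\beta)$ forces $\w(\alpha,\beta)=0$ whenever $\alpha \not\eqtrace \beta$; that is, $\w$ is supported on $\eqtrace$-related pairs. Fixing any class $C$ and using the support condition together with the marginal identities, all the mass leaving $C$ on the left lands in $C$ on the right and vice versa, yielding $\TD_{\Z_1}(C)=\sum_{\alpha\in C,\,\beta}\w(\alpha,\beta)=\sum_{\alpha\in C,\,\beta\in C}\w(\alpha,\beta)=\TD_{\Z_2}(C)$. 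As this holds for every class, $\TD_{\Z_1}\eqtrace^{\dagger}\TD_{\Z_2}$, completing the proof.

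The only genuinely delicate point is the bookkeeping with the relation lifting of Definition~\ref{def:lifting_relation}, which is phrased via a decomposition of the left distribution rather than via equality of class masses; I expect the main (though routine) effort to be checking that, for the equivalence $\eqtrace$, these two formulations coincide. The infinitude of each $\eqtrace$-class, e.g.\ $a \eqtrace \tau^{n} a \tau^{m}$ for all $n,m\ge 0$, causes no trouble, since each $\TD_{\Z_i}$ has finite support and hence meets only finitely many traces of any class, keeping every sum over a class finite.
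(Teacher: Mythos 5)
Your proof is correct and follows essentially the same route as the paper, which establishes this proposition by the argument of Theorem~\ref{thm:kernel_of_Dtrddw}: in one direction a zero-cost optimal matching for the Kantorovich lifting of a $\{0,1\}$-valued pseudometric must be supported on $\eqtrace$-related pairs, and in the other an explicit matching concentrated on such pairs witnesses distance $0$. Your reformulation of the lifting of Definition~\ref{def:lifting_relation} as equality of mass on $\eqtrace$-equivalence classes is a harmless (and correctly flagged as routine) repackaging of the index-set decomposition the paper uses.
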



\begin{proof}
The thesis follows by the same arguments used in the proof of Theorem~\ref{thm:kernel_of_Dtrddw} below.
\end{proof}


By substituting $D_T$ with $\Dw$ in Definition~\ref{def:trace_metric} we obtain the notion of \emph{weak trace metric}, denoted by the $1$-bounded pseudometric $\wTraceMetric$.

\begin{definition}
[Weak trace metric]
\label{def:weak_trace_metric}
The \emph{weak trace metric} $\wTraceMetric \colon \proc \times \proc \to [0,1]$ is defined for all $s,t \in \proc$ as 
\[
\wTraceMetric(s,t) = \Hausdorff(\Dw)(\res(s), \res(t)).
\]
\end{definition}

The kernel of the weak trace metric is weak trace equivalence.

\begin{proposition}
\label{prop:kernel_weak_trace_metric}
The kernel of $\wTraceMetric$ is probabilistic weak trace equivalence.
\end{proposition}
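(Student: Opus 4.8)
The plan is to characterize the pairs $(s,t)$ with $\wTraceMetric(s,t)=0$ and to show they are exactly the weak probabilistic trace equivalent ones, so that the \emph{kernel} of $\wTraceMetric$ coincides with $\WTr$. The backbone of the argument is to unfold the Hausdorff lifting (Definition~\ref{def:Hausdorff}) and reduce the distance-zero condition to the two matching clauses of Definition~\ref{def:weak_trace_equivalence}, using the kernel characterization of $\Dw$ (Proposition~\ref{prop:kernel_weak_trace_metric_det_res}) together with the reformulation of weak trace distribution equivalence in terms of the quantities $\pr(\Cw(\cdot,\alpha))$ (Proposition~\ref{prop:equivalent_definitions_of_weak}). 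First I would observe that, since $\Dw \ge 0$ and $\Hausdorff(\Dw)$ is a maximum of two $\sup$-$\inf$ terms, we have $\wTraceMetric(s,t)=0$ if and only if
\[
\sup_{\Z_s \in \res(s)} \inf_{\Z_t \in \res(t)} \Dw(\Z_s,\Z_t) = 0
\qquad\text{and}\qquad
\sup_{\Z_t \in \res(t)} \inf_{\Z_s \in \res(s)} \Dw(\Z_t,\Z_s) = 0 .
\]
The two terms are symmetric, so it suffices to analyze the first one, which is equivalent to requiring that $\inf_{\Z_t \in \res(t)} \Dw(\Z_s,\Z_t) = 0$ for \emph{every} $\Z_s \in \res(s)$.

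For the direction $s \WTr t \Rightarrow \wTraceMetric(s,t)=0$ I would argue directly. Fix any $\Z_s \in \res(s)$; the definition of $\WTr$ yields a resolution $\Z_t \in \res(t)$ with $\pr(\Cw(z_s,\alpha)) = \pr(\Cw(z_t,\alpha))$ for all traces $\alpha$. By Proposition~\ref{prop:equivalent_definitions_of_weak} this gives $\TD_{\Z_s} \eqtrace^{\dagger} \TD_{\Z_t}$, i.e.\ $\Z_s$ and $\Z_t$ are weak trace distribution equivalent, so $\Dw(\Z_s,\Z_t)=0$ by Proposition~\ref{prop:kernel_weak_trace_metric_det_res}. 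Hence the inner infimum is \emph{attained} at $0$, and the first $\sup$-$\inf$ term vanishes; the second one vanishes by the symmetric clause of $\WTr$. Therefore $\wTraceMetric(s,t)=0$.

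The converse direction is the delicate one, and I would prove its contrapositive. If $s \not\WTr t$, then without loss of generality there is a resolution $\Z_s \in \res(s)$ that no resolution of $t$ can match: for every $\Z_t \in \res(t)$ there is a trace $\alpha$ with $\pr(\Cw(z_s,\alpha)) \neq \pr(\Cw(z_t,\alpha))$, hence $\TD_{\Z_s} \not\eqtrace^{\dagger} \TD_{\Z_t}$ by Proposition~\ref{prop:equivalent_definitions_of_weak}, and so $\Dw(\Z_s,\Z_t) > 0$ for every $\Z_t$ by Proposition~\ref{prop:kernel_weak_trace_metric_det_res}. \textbf{The main obstacle} is exactly here: being strictly positive at each $\Z_t$ does not, in general, force the infimum $\inf_{\Z_t}\Dw(\Z_s,\Z_t)$ to be strictly positive, since the infimum defining the Hausdorff lifting need not be attained. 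To close this gap I would exploit the discrete nature of $\dw$: since $\dw$ takes only values in $\{0,1\}$ and has $\eqtrace$ as kernel, $\Dw(\Z_s,\Z_t)=\Kantorovich(\dw)(\TD_{\Z_s},\TD_{\Z_t})$ equals the total-variation distance between the induced distributions on trace classes, so a mismatch on a single trace class yields a \emph{fixed} positive gap. Crucially, the inductive lemmas underpinning the construction (e.g.\ Lemma~\ref{lem:prob_computation_sum_to_1}) proceed by induction on $\depth{z}$, so resolutions are finite-depth trees; combined with image-finiteness this makes the set of realizable weak trace distributions $\{\TD_{\Z_t} \mid \Z_t \in \res(t)\}$ finite up to $\eqtrace^{\dagger}$. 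As $\Dw(\Z_s,\Z_t)$ depends on $\Z_t$ only through this class, the infimum is in fact a \emph{minimum} over finitely many distinct values, all of which are positive; hence $\inf_{\Z_t}\Dw(\Z_s,\Z_t) > 0$ and so $\wTraceMetric(s,t) > 0$.

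Combining the two directions yields $\wTraceMetric(s,t)=0 \iff s \WTr t$, which is precisely the claim. The reasoning is the weak-case transcription of the strong result (Proposition~\ref{prop:kernel_trace_equivalent}), obtained by replacing $d_T$, $\TD$ and ordinary trace matching with $\dw$, the weak trace distributions (equivalently $\TD$ read modulo $\eqtrace^{\dagger}$) and weak trace matching; once the reductions through Propositions~\ref{prop:kernel_weak_trace_metric_det_res} and~\ref{prop:equivalent_definitions_of_weak} are in place, the remaining Hausdorff-level argument is identical to the strong one, so I would either carry it out as above or simply appeal to those strong-case arguments.
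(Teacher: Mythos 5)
Your proof is correct and follows essentially the same route as the paper's: both directions are reduced to the kernel of $\Dw$ (Proposition~\ref{prop:kernel_weak_trace_metric_det_res}) and then to the matching clauses of $\WTr$ via Proposition~\ref{prop:equivalent_definitions_of_weak}. The only substantive difference is that you explicitly justify why $\inf_{\Z_t \in \res(t)}\Dw(\Z_s,\Z_t)=0$ forces an exact match (finiteness, up to $\eqtrace^{\dagger}$, of the realizable weak trace distributions for finite-depth, image-finite processes), a point the paper's proof passes over silently when it infers $\res(s)\eqtrace^{\dagger}\res(t)$ directly from $\wTraceMetric(s,t)=0$.
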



\begin{proof}
($\Rightarrow$)
Assume first that $\wTraceMetric(s,t) = 0$.
We aim to show that $s \WTr t$.
Since 
\begin{itemize}
\item by definition $\wTraceMetric(s,t) = \Hausdorff(\Dw)(\res(s),\res(t))$ and
\item the kernel of $\Dw$ is $\eqtrace^{\dagger}$ by Proposition~\ref{prop:kernel_weak_trace_metric_det_res}
\end{itemize} 
from $\wTraceMetric(s,t) = 0$ we can infer that $\res(s) \eqtrace^{\dagger} \res(t)$.
Then, by Proposition~\ref{prop:equivalent_definitions_of_weak} we can conclude that $s \WTr t$. 

($\Leftarrow$)
Assume now that $s \WTr t$.
We aim to show that this implies that $\wTraceMetric(s,t) = 0$.
By Proposition~\ref{prop:equivalent_definitions_of_weak} we have that $s \WTr t$ implies that $\res(s) \eqtrace^{\dagger} \res(t)$.
Since the kernel of $\Dw$ is given by $\eqtrace^{\dagger}$ (Proposition~\ref{prop:kernel_weak_trace_metric_det_res}), we can infer
\[
\wTraceMetric(s,t) = \Hausdorff(\Dw)(\res(s), \res(t)) = 0.
\]
\end{proof}


\section{Modal logics for traces}
\label{sec:logic_for_traces}

In this section we introduce two minimal modal logics $\LL$ and $\LLw$ that will allow us to characterize resp.\ the strong trace metric and its weak version, as well as the equivalences constituting their kernels.
The logic $\LL$ (and consequently $\LLw$) can be seen either as a simplified version of the modal logic $\logic$ from \cite{DD11}, which has been successfully employed in \cite{CGT16a} to characterize the bisimilarity metric \cite{DGJP04,BW01a,DCPP06}, or more simply as a probabilistic version of the logic characterizing the trace semantics in the fully nondeterministic case \cite{BFvG04}.


More precisely, $\LL$ consists of two classes of formulae.
The class $\LLt$ of \emph{trace formulae}, which are constituted by (finite) sequences of diamond operators and that will be used to represent traces, exactly as in the fully nondeterministic case.
Then, since we are treating traces as \emph{distributions over traces}, to capture the considered trace semantics we introduce the class $\LLd$ of \emph{trace distribution formulae}, which are defined by a probabilistic choice operator $\bigoplus$ as \emph{probability distributions over trace formulae}.

\begin{definition}
[Modal logic $\LL$]
\label{def:logic_LL}
The classes of \emph{trace distribution formulae} $\LLd$ and \emph{trace formulae} $\LLt$ over $\Act$ are defined by the following BNF-like grammar: 
\[
\LLd\colon\;  \Psi ::= \; 	\displaystyle \bigoplus_{i \in I} r_i \Phi_i 
\qquad\qquad
\LLt \colon\;  \Phi ::=  \;	\top \ | \ 	\diam{a}\Phi  
\]
where: 
\begin{inparaenum}[(i)]
\item 
$\Psi$ ranges over $\LLd$, 
\item 
$\Phi$ ranges over $\LLt$,
\item 
$a\in\Act$,
\item $I \neq \emptyset$ is a finite set of indexes,
\item the formulae $\Phi_i$ for $i \in I$ are pairwise distinct, namely $\Phi_i \neq \Phi_j$ for each $i,j \in I$ with $i \neq j$ and
\item for all $i\in I$ we have $r_i\in (0,1]$ and $\sum_{i\in I} r_i = 1$.
\end{inparaenum}
\end{definition}

To improve readability, we shall write $r_1\Phi_1 \oplus r_2 \Phi$ for $\bigoplus_{i \in I} r_i \Phi_i$ with $I=\{1,2\}$ and $\Phi$ for $\bigoplus_{i \in I} r_i \Phi_i$ with $I = \{i\}$, $r_i = 1$ and $\Phi_i = \Phi$.

\begin{definition}
[Depth]
\label{def:depth_LL}
The \emph{depth of trace distribution formulae} in $\LLd$ is defined as $\depth{\bigoplus_{i \in I} r_i \Phi_i} = \max_{i \in I} \depth{\Phi_i}$ where the \emph{depth of trace formulae} in $\LLt$ is defined by induction on their structure as 
\begin{inparaenum}[(i)]
\item $\depth{\top} = 0$ and
\item $\depth{\diam{a}\Phi} = 1+ \depth{\Phi}$.
\end{inparaenum}
\end{definition}

\begin{definition}
[Semantics of $\LLt$]
\label{def:satisfiability_LLt}
The \emph{satisfaction relation} $\models \, \subseteq \C \times \LLt$ is defined by structural induction over trace formulae in $\LLt$ by
\begin{itemize}
\item $c \models \top$ always;
\item $c \models \diam{a}\Phi$ if{f} $c = s \ctrans[a] c'$ for some computation $c'$ such that $c' \models \Phi$.
\end{itemize} 
\end{definition}

We say that a computation $c$ from a process $s$ is \emph{compatible} with the trace formula $\Phi \in \LLt$, notation $c \in \Ct(s,\Phi)$, if $c \models \Phi$ and $|c| = \depth{\Phi}$.

\begin{definition}
[Semantics of $\LLd$]
\label{def:satisfiability_LLd}
The \emph{satisfaction relation} $\models \, \subseteq \proc \times \LLd$ is defined by 
\begin{itemize}
\item $s \models \bigoplus_{i \in I} r_i \Phi_i$ if{f} there is a resolution $\Z \in \res(s)$ with $z = \corr{\Z}^{-1}(s)$ such that for each $i \in I$ we have
$
\pr(\Ct_{\max}(z,\Phi_i)) = r_i.
$
\end{itemize}
\end{definition}

We let $\LL(s)$ denote the set of formulae satisfied by process $s \in \proc$, namely $\LL(s) = \{\Psi \in \LLd \mid s \models \Psi\}$.

\begin{example}
\label{ex:logic}
Consider process $t$ in Fig.~\ref{fig:strong_metric}.
It is easy to verify that $t \models 0.5 \diam{a}\diam{c}\top \oplus 0.5 \diam{a}\diam{b}\top$.
In fact, if we consider the resolution $\Z_t \in \res(t)$ in the same figure, we have that the computation $c_1 = z_t \ctrans[a] z_{t_1} \ctrans[c] \mathrm{nil}$ is compatible with the trace formula $\diam{a}\diam{c}\top$ and that the computation $c_2 = z_t \ctrans[a] z_{t_2} \ctrans[b] \mathrm{nil}$ is compatible with the trace formula $\diam{a}\diam{b}\top$.
Moreover, we have $\pr(\Ct_{\max}(z_t,\diam{a}\diam{c}\top)) = 0.5$ and $\pr(\Ct_{\max}(z_t,\diam{a}\diam{b}\top)) = 0.5$.
\end{example}


The modal logic $\LLw$ differs from $\LL$ solely in the labels of the diamonds in $\LLwt$ which range over $\Acttau$ in place of $\Act$.
Hence, syntax and semantics of $\LLw$ directly follow from Definition~\ref{def:logic_LL} and Defs.~\ref{def:satisfiability_LLt}-\ref{def:satisfiability_LLd}, resp.

We let $\LLw(s)$ denote the set of formlae satisfied by process $s \in \proc$, namely $\LLw(s) = \{\Psi \in \LLwd \mid s \models \Psi\}$.

We introduce the $\LLw$-\emph{equivalence} which extends the equivalence of traces $\eqtrace$ to trace formulae.
\begin{definition}
[$\LLw$-equivalence of formulae]
\label{def:equivalence_of_formulae}
The relation of $\LLw$-\emph{equivalence of trace formulae} $\eqtrace \subseteq \LLwt \times \LLwt$ is the smallest equivalence relation satisfying
\begin{inparaenum}[(i)]
\item $\top \eqtrace \top$ and
\item $\diam{\A_1}\Phi_1 \eqtrace \diam{\A_2}\Phi_2$ if{f} 
\end{inparaenum}
\begin{itemize}
\item either $\A_1 = \tau$ and $\Phi_1 \eqtrace \diam{\A_2}\Phi_2$,
\item or $\A_2 = \tau$ and $\diam{\A_1}\Phi_1 \eqtrace \Phi_2$
\item or $\A_1 = \A_2$ and $\Phi_1 \eqtrace \Phi_2$.
\end{itemize}
Then, the relation of $\LLw$-\emph{equivalence of trace distribution formulae} $\eqtrace^{\dagger} \subseteq \LLwd \times \LLwd$ is obtained by lifting $\eqtrace$ to a relation on probability distributions over trace formulae.
\end{definition}

\begin{remark}
\label{rmk:equivalence_is_equality}
Clearly we have $\LLw_{/\eqtrace} = \LL$, namely the notion of $\eqtrace$ coincides with the equality of formulae when restricted to $(\LLd \times \LLd) \cup (\LLt \times \LLt)$.
Given any $\Psi_1,\Psi_2 \in \LLd$, we say that $\Psi_1 = \Psi_2$ if they express the same probability distribution over trace formulae.
\end{remark}

Notice that we are using the same symbol $\eqtrace$ to denote both the equivalence of traces and $\LLw$-equivalence.
The meaning will always be clear from the context.


\section{Logical characterization of relations}
\label{sec:char_of_trace_equivalence}

In this section we present the characterization of strong (resp.\ weak) trace equivalence by means of $\LL$ (resp.\ $\LLw$) (Theorem~\ref{thm:det_char} and Theorem~\ref{thm:weak_char}).
Following \cite{CGT16a}, we introduce the notion of \emph{mimicking formula} of a resolution as a formula expressing the trace distribution for that resolution.
Mimicking formulae characterize the (weak) trace distribution equivalence of resolutions: two resolutions are (weak) trace distribution equivalent if{f} their mimicking formulae are equal (resp.\ $\LLw$-equivalent) (Theorem~\ref{thm:equivalent_resolutions} and Theorem~\ref{thm:equivalent_resolutions_weak}).


The \emph{mimicking formula} of a resolution $\Z \in \res(\proc)$ is defined as a trace distribution formula assigning a positive weight only to the maximal traces of $\Z$.
Hence, we need to identify each maximal trace of $\Z$ with a proper trace formula.
This is achieved through the notion of \emph{tracing formula} of a trace.

\begin{definition}
[Tracing formula]
\label{def:tracing_formula}
Given any trace $\alpha \in \Act^{\star}$ we define the \emph{tracing formula} of $\alpha$, notation $\Phi_{\alpha} \in \LLt$, inductively on the structure of $\alpha$ as follows:
\[
\Phi_{\alpha} = 
\begin{cases}
\top & \text{ if } \alpha = \varepsilon \\
\diam{a}\Phi_{\alpha'} & \text{ if } \alpha = a \alpha', \alpha' \in \Act^{\star}.
\end{cases}
\]
\end{definition}

\begin{lemma}
\label{lem:computation_and_tracing_formula}
Let $s \in \proc$ and $\alpha \in \Act^{\star}$.
For each $c \in \C(s)$ we have $\tr(c) = \alpha$ if{f} $c \models \Phi_{\alpha}$ and $|c| = \depth{\Phi_{\alpha}}$.
\end{lemma}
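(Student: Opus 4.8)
The statement is a biconditional relating a trace-theoretic condition ($\tr(c) = \alpha$) to a logical one ($c \models \Phi_\alpha$ together with a length constraint $|c| = \depth{\Phi_\alpha}$). The natural approach is induction on the structure of the trace $\alpha$, mirroring the inductive definition of the tracing formula $\Phi_\alpha$ in Definition~\ref{def:tracing_formula}. Since $\Phi_\alpha$ is built by recursion on $\alpha$, and the semantics of $\LLt$ in Definition~\ref{def:satisfiability_LLt} peels off one diamond at a time matching one computation step, the two inductions are in lockstep and should align cleanly.

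First I would treat the base case $\alpha = \varepsilon$. Here $\Phi_\varepsilon = \top$ and $\depth{\top} = 0$. A computation $c$ with $\tr(c) = \varepsilon$ is exactly the empty computation, i.e.\ $|c| = 0$; conversely $c \models \top$ holds vacuously for every $c$, so the length condition $|c| = \depth{\top} = 0$ is what pins $c$ down to the empty computation, and hence to $\tr(c) = \varepsilon$. This is where the role of the side condition $|c| = \depth{\Phi_\alpha}$ becomes visible: without it, $c \models \top$ alone would be satisfied by every computation, so the length requirement is essential for the forward-to-backward equivalence to be tight.

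For the inductive step I would take $\alpha = a\alpha'$, so that $\Phi_\alpha = \diam{a}\Phi_{\alpha'}$ and $\depth{\Phi_\alpha} = 1 + \depth{\Phi_{\alpha'}}$. For the ($\Rightarrow$) direction, if $\tr(c) = a\alpha'$ then $c$ has the form $c = s \ctrans[a] c'$ with $\tr(c') = \alpha'$; by the induction hypothesis $c' \models \Phi_{\alpha'}$ and $|c'| = \depth{\Phi_{\alpha'}}$, whence by the semantic clause for $\diam{a}$ we get $c \models \diam{a}\Phi_{\alpha'}$, and $|c| = 1 + |c'| = 1 + \depth{\Phi_{\alpha'}} = \depth{\Phi_\alpha}$. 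For the ($\Leftarrow$) direction, from $c \models \diam{a}\Phi_{\alpha'}$ the semantics gives $c = s \ctrans[a] c'$ with $c' \models \Phi_{\alpha'}$; combining $|c| = \depth{\Phi_\alpha} = 1 + \depth{\Phi_{\alpha'}}$ with $|c| = 1 + |c'|$ yields $|c'| = \depth{\Phi_{\alpha'}}$, so the induction hypothesis applies to give $\tr(c') = \alpha'$, and therefore $\tr(c) = a\,\tr(c') = a\alpha' = \alpha$.

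I do not expect a genuine obstacle here, as the result is essentially a structural bookkeeping lemma. The one point demanding care is the coupling of the two conjuncts $c \models \Phi_\alpha$ and $|c| = \depth{\Phi_\alpha}$: the satisfaction relation for $\LLt$ is upward-closed in the computation (a longer computation extending $c$ may still satisfy $\Phi_\alpha$ via the existential in the $\diam{a}$ clause), so the length constraint is exactly what excludes proper extensions and makes the tracing formula \emph{identify} the trace $\alpha$ rather than merely being \emph{compatible} with it. I would make sure the inductive hypothesis is stated with both conjuncts bundled together so that this coupling propagates correctly through the recursion.
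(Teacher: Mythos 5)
Your proof is correct and takes essentially the same approach as the paper: a one-step-at-a-time induction aligning the recursion in $\Phi_\alpha$ with the computation steps of $c$. The only cosmetic difference is that you induct on the structure of $\alpha$ and carry both directions through the same induction, whereas the paper inducts on $|c|$ for the forward direction and argues the converse directly; your handling of the converse is, if anything, slightly more explicit about why the length constraint is needed.
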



\begin{proof}
($\Rightarrow$)
Assume first that $\tr(c)= \alpha$.
We aim to show that this implies that $|c| = \depth{\Phi_{\alpha}}$ and $c \models \Phi_{\alpha}$.
To this aim we proceed by induction over $|c|$.
\begin{itemize}
\item Base case $|c| = 0$, namely $c$ is the empty computation.
Since $\alpha = \tr(c)$, this gives that $\alpha = \varepsilon$ and therefore, by Def.~\ref{def:tracing_formula}, $\Phi_{\varepsilon} = \top$.
Then from Def.~\ref{def:depth_LL} we gather $\depth{\Phi_{\alpha}} = 0 = |c|$ and by Def.~\ref{def:satisfiability_LLt} we are guaranteed that $c \models \Phi_{\varepsilon}$.

\item Inductive step $|c| > 0$.
Assume wlog that $c = s \ctrans[a] c'$.
In particular this implies that $|c'| < |c|$.
Therefore, from $\alpha = \tr(c)$ we get that $\alpha$ must be of the form $\alpha = a\alpha'$ for $\alpha' = \tr(c')$.
By Def.~\ref{def:tracing_formula}, $\alpha = a\alpha'$ implies $\Phi_{\alpha} = \diam{a}\Phi_{\alpha'}$.
From $\alpha'= \tr(c')$ and the inductive hypothesis over $|c'|$ we get that $\depth{\Phi_{\alpha'}} = |c'|$ and $c' \models \Phi_{\alpha'}$.
This, taken together with $c = s \ctrans[a] c'$ gives $c \models \Phi_{\alpha}$.
Moreover, we have
\[
\depth{\Phi_{\alpha}} = \depth{\Phi_{\alpha'}} + 1 = |c'| + 1 = |c|
\]
thus concluding the proof.
\end{itemize}

($\Leftarrow$)
Assume now that $|c| = \depth{\Phi_{\alpha}}$ and $c \models  \Phi_{\alpha}$.
We aim to show that this implies that $\tr(c) = \alpha$, namely that $c$ is compatible with $\alpha$.
From $c \models \Phi_{\alpha}$ and the definition of tracing formula (Definition~\ref{def:tracing_formula}) we gather that the sequence of the labels of the first $\depth{\Phi_{\alpha}}$ execution steps of $c$ matches $\alpha$.
Moreover, $|c| = \depth{\Phi_{\alpha}}$ implies that those steps are actually the only execution steps for $c$.
Therefore we can immediately conclude that $\tr(c) = \alpha$.
\end{proof}


We remark that a computation $c$ is compatible with $\Phi_{\alpha}$ if{f} $c$ and $\alpha$ satisfy previous Lemma~\ref{lem:computation_and_tracing_formula}. 

\begin{definition}
[Mimicking formula]
\label{def:mimicking_formula_for_traces}
Consider any resolution $\Z \in \res(\proc)$ with initial state $z$.
We define the \emph{mimicking formula} of $\Z$, notation $\Psi_{\Z}$, as 
\[
\Psi_{\Z} = \bigoplus_{\alpha \in \tr(\C_{\max}(z))} \pr(\C_{\max}(z,\alpha)) \Phi_{\alpha}
\]  
where, for each $\alpha \in \tr(\C_{\max}(z))$, the formula $\Phi_{\alpha}$ is the tracing formula of $\alpha$.
\end{definition}

\begin{lemma}
\label{lem:mimic_well_defined}
For any resolution $\Z \in\res(\proc)$, the mimicking formula of $\Z$ is a well defined trace distribution formula. 
\end{lemma}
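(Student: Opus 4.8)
The plan is to check that $\Psi_{\Z}$ meets every syntactic requirement that Definition~\ref{def:logic_LL} imposes on a trace distribution formula. Writing $I = \tr(\C_{\max}(z))$ and $r_{\alpha} = \pr(\C_{\max}(z,\alpha))$ for $\alpha \in I$, so that $\Psi_{\Z} = \bigoplus_{\alpha \in I} r_{\alpha} \Phi_{\alpha}$, there are four conditions to verify: that $I$ is a finite nonempty index set, that the formulae $\{\Phi_{\alpha}\}_{\alpha \in I}$ are pairwise distinct, that each weight $r_{\alpha}$ lies in $(0,1]$, and that $\sum_{\alpha \in I} r_{\alpha} = 1$.

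First I would settle the index set $I$. Since every process is image-finite and the resolution $\Z$, obtained via a deterministic scheduler, assigns at most one transition to each of its states, $\Z$ unfolds into a finitely branching tree; together with the finite depth of resolutions (the standing assumption under which Lemma~\ref{lem:prob_computation_sum_to_1} is proved by induction on $\depth{z}$) this makes $\C_{\max}(z)$ finite, and hence so is its image $I$ under $\tr$. Nonemptiness is immediate, since $\Z$ always admits at least one maximal computation; in the degenerate case where $z$ performs no action the empty computation is maximal and $I = \{\varepsilon\}$.

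For pairwise distinctness, the key observation is that the assignment $\alpha \mapsto \Phi_{\alpha}$ of Definition~\ref{def:tracing_formula} is injective: the sequence of labels of the nested diamonds of $\Phi_{\alpha}$ is exactly $\alpha$, so $\alpha$ is recoverable from $\Phi_{\alpha}$. Thus distinct traces $\alpha \neq \beta$ in $I$ give distinct tracing formulae $\Phi_{\alpha} \neq \Phi_{\beta}$, as required by the definition.

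Finally I would treat the weights. For each $\alpha \in I$ there is at least one maximal computation compatible with $\alpha$, and each such computation has strictly positive probability, being a product of execution probabilities of steps whose targets lie in the support of the relevant distributions; since distinct maximal computations are never prefixes of one another, $r_{\alpha} = \pr(\C_{\max}(z,\alpha))$ is a well-defined sum of positive terms and hence $r_{\alpha} > 0$, while $r_{\alpha} \le 1$ is the bound already obtained in the proof of Lemma~\ref{lem:trace_distribution}. The normalisation $\sum_{\alpha \in I} r_{\alpha} = 1$ is exactly Lemma~\ref{lem:trace_distribution}, after observing that $\TD_{\Z}(\alpha) = r_{\alpha}$ for $\alpha \in I$ and $\TD_{\Z}(\alpha) = 0$ otherwise. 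None of these verifications is hard; the only point that genuinely uses the hypotheses on resolutions, and is therefore the place to be careful, is the finiteness of the index set $I$.
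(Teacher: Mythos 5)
Your proof is correct and its substantive step --- reducing well-definedness to the normalisation $\sum_{\alpha \in \tr(\C_{\max}(z))} \pr(\C_{\max}(z,\alpha)) = 1$ and discharging it via Lemma~\ref{lem:trace_distribution} --- is exactly the paper's argument. The additional conditions you verify (finiteness and nonemptiness of the index set, injectivity of $\alpha \mapsto \Phi_{\alpha}$, positivity of the weights) are left implicit in the paper's proof; your treatment of them is sound and, if anything, more complete.
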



\begin{proof}
By definition of mimicking formula (Definition~\ref{def:mimicking_formula_for_traces}) we have
\[
\Psi_{\Z} = \bigoplus_{\alpha \in \tr(\C_{\max}(z))} \pr(\C_{\max}(z,\alpha)) \Phi_{\alpha}
\] 
where for each $\alpha \in \tr(\C_{\max}(z))$ the formula $\Phi_{\alpha}$ is the tracing formula of trace $\alpha$.

Hence, to prove that $\Psi_{\Z}$ is a well defined trace distribution formula we simply need to show that 
\[
\sum_{\alpha \in \tr(\C_{\max}(z))} \pr(\C_{\max}(z,\alpha)) = 1
\]
which follows by Lemma~\ref{lem:trace_distribution} by noticing that $\sum_{\alpha \in \tr(\C_{\max}(z))} \pr(\C_{\max}(z,\alpha)) = \sum_{\alpha \in \Act^{\star}} \pr(\C_{\max}(z,\alpha))$.
\end{proof}


\begin{example}
\label{ex:mimicking_res}
Consider the resolutions $\Z_s \in \res(s)$ and $\Z_t \in \res(t)$ for processes $s$ and $t$, resp., in Fig.~\ref{fig:strong_metric}.
The mimicking formulae for them are, resp.
\begin{align*}
& \Psi_{\Z_s} = 0.5 \diam{a}\diam{c}\top \oplus 0.5 \diam{a}\top \\
& \Psi_{\Z_t} = 0.5 \diam{a}\diam{c}\top \oplus 0.5 \diam{a}\diam{b}\top.
\end{align*}
\end{example}

The following results give us a first insight on the characterizing power of mimicking formulae: given $s \in \proc$, the set of the mimicking formulae of its resolutions constitutes the set of formulae satisfied by $s$.

\begin{proposition}
\label{prop:s_models_res_mimicking}
Let $s \in \proc$.
For each $\Z \in \res(s)$ it holds that $s \models \Psi_{\Z}$.
\end{proposition}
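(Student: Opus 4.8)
The plan is to verify directly that the resolution $\Z$ itself witnesses the satisfaction $s \models \Psi_{\Z}$, by checking that it meets the requirement in Definition~\ref{def:satisfiability_LLd}. Unwinding the definition of the mimicking formula (Definition~\ref{def:mimicking_formula_for_traces}), we have $\Psi_{\Z} = \bigoplus_{\alpha \in \tr(\C_{\max}(z))} r_{\alpha} \Phi_{\alpha}$ where $r_{\alpha} = \pr(\C_{\max}(z,\alpha))$ and each $\Phi_{\alpha}$ is the tracing formula of $\alpha$. By Definition~\ref{def:satisfiability_LLd}, to conclude $s \models \Psi_{\Z}$ it suffices to exhibit a resolution of $s$ for which the probability weight assigned to each $\Phi_{\alpha}$ equals exactly $r_{\alpha}$. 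The natural candidate is $\Z$ itself: since $\Z \in \res(s)$ with $z = \corr{\Z}^{-1}(s)$, it is a legitimate choice, so the entire burden reduces to proving, for each $\alpha \in \tr(\C_{\max}(z))$, the identity $\pr(\Ct_{\max}(z,\Phi_{\alpha})) = r_{\alpha} = \pr(\C_{\max}(z,\alpha))$.

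The key step is therefore to bridge the gap between the two notions of ``maximal computations compatible with a trace'' and ``maximal computations compatible with a trace formula''. Concretely, I would show that $\Ct_{\max}(z,\Phi_{\alpha}) = \C_{\max}(z,\alpha)$ as \emph{sets} of computations. This equality should follow from Lemma~\ref{lem:computation_and_tracing_formula}, which establishes that for any computation $c \in \C(s)$, we have $\tr(c) = \alpha$ if and only if $c \models \Phi_{\alpha}$ and $|c| = \depth{\Phi_{\alpha}}$. A computation lies in $\C_{\max}(z,\alpha)$ precisely when it is maximal and compatible with $\alpha$ (i.e.\ $\tr(c) = \alpha$), while it lies in $\Ct_{\max}(z,\Phi_{\alpha})$ precisely when it is maximal, satisfies $c \models \Phi_{\alpha}$, and has $|c| = \depth{\Phi_{\alpha}}$ (using the notion of compatibility with a trace formula introduced after Definition~\ref{def:satisfiability_LLt}). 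By the cited lemma these two descriptions coincide, so the two computation sets are identical, and hence their $\pr$-values agree.

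Once this set equality is in hand, the identity $\pr(\Ct_{\max}(z,\Phi_{\alpha})) = \pr(\C_{\max}(z,\alpha)) = r_{\alpha}$ is immediate, and since this holds for every index $\alpha \in \tr(\C_{\max}(z))$, the resolution $\Z$ satisfies exactly the condition demanded by Definition~\ref{def:satisfiability_LLd}, yielding $s \models \Psi_{\Z}$. I would also note in passing that $\Psi_{\Z}$ is a genuine trace distribution formula by Lemma~\ref{lem:mimic_well_defined}, so the statement $s \models \Psi_{\Z}$ is well-posed.

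I expect the main (and only mildly delicate) obstacle to be the careful matching of the depth/maximality side conditions: one must ensure that the ``$|c| = \depth{\Phi_{\alpha}}$'' clause in the definition of $\Ct(z,\Phi_{\alpha})$ interacts correctly with maximality of $c$ in $\C_{\max}(z)$. Since $\alpha$ is drawn from $\tr(\C_{\max}(z))$, every maximal computation compatible with $\alpha$ has length exactly $|\alpha| = \depth{\Phi_{\alpha}}$, so no spurious computations are included or excluded; this is precisely the content already packaged in Lemma~\ref{lem:computation_and_tracing_formula}, so the remark following that lemma makes the argument essentially a one-line invocation.
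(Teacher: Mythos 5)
Your proposal is correct and follows essentially the same route as the paper's proof: both take the resolution $\Z$ itself as the witness required by Definition~\ref{def:satisfiability_LLd} and reduce the claim to the set equality $\Ct_{\max}(z,\Phi_{\alpha}) = \C_{\max}(z,\alpha)$, established via Lemma~\ref{lem:computation_and_tracing_formula}. Nothing is missing.
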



\begin{proof}
Let $\Z \in \res(s)$, with $z = \corr{\Z}^{-1}(s)$.
Hence, by definition of mimicking formula (Definition~\ref{def:mimicking_formula_for_traces}) we have that 
\[
\Psi_{\Z} = \bigoplus_{\alpha \in \tr(\C_{\max}(z))} \pr(\C_{\max}(z,\alpha)) \Phi_{\alpha}
\]
where, for each $\alpha \in \tr(\C_{\max}(z))$ we have that $\Phi_{\alpha}$ is the tracing formula of $\alpha$.
We need to show that $s \models \Psi_{\Z}$, namely we need to exhibit a resolution $\bar{\Z} \in \res(s)$, with $\bar{z} = \corr{\bar{\Z}}^{-1}(s)$, s.t.\ for each $\alpha \in \tr(\C_{\max}(z))$ we have that $\pr(\Ct(\bar{z}, \Phi_{\alpha})) = \pr(\C_{\max}(z,\alpha))$.
We aim to show that $\Z$ is such a resolution, namely that for each $\alpha \in \tr(\C_{\max}(z))$ we have
\[
\pr(\Ct_{\max}(z,\Phi_{\alpha})) = \pr(\C_{\max}(z,\alpha)).
\]
Let $\alpha \in \tr(\C_{\max}(z))$.
By definition we have 
\begin{align*}
\Ct_{\max}(z,\Phi_{\alpha}) ={} & \{c \in \C_{\max}(z) \mid c \models \Phi_{\alpha} \wedge |c| = \depth{\Phi_{\alpha}}\} \\
={} & \{c \in \C_{\max}(z) \mid \tr(c) = \alpha\} & \text{(by Lemma~\ref{lem:computation_and_tracing_formula})} \\
={} & \C_{\max}(z,\alpha) & \text{($\alpha \in \tr(\C_{\max}(z))$)}.
\end{align*}
Thus, we can conclude that
\[
\pr(\Ct_{\max}(z,\Phi_{\alpha})) =
\sum_{c \in \Ct_{\max}(z,\Phi_{\alpha})} \pr(c) =
\sum_{c \in \C_{\max}(z,\alpha)} \pr(c) =
\pr(\C_{\max}(z,\alpha)).
\]
\end{proof}


\begin{theorem}
\label{thm:LLs_is_cup_tracing_formula_resolution}
Let $s \in \proc$.
We have that 
$
\LL(s) = \{1\top\} \cup \{\Psi_{\Z} \mid \Z \in \res(s)\}.
$
\end{theorem}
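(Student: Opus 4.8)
The goal is to establish the set equality $\LL(s) = \{1\top\} \cup \{\Psi_{\Z} \mid \Z \in \res(s)\}$, so the natural strategy is to prove the two inclusions separately. The right-to-left inclusion is nearly immediate: the formula $1\top$ is satisfied by every process (since the empty computation is always available and $\pr(\Ct_{\max}(z,\top)) = 1$ by Lemma~\ref{lem:prob_computation_sum_to_1}), while every $\Psi_{\Z}$ with $\Z \in \res(s)$ is satisfied by $s$ by Proposition~\ref{prop:s_models_res_mimicking}. So the bulk of the work lies in the left-to-right inclusion, namely showing that any $\Psi \in \LL(s)$ is either $1\top$ or coincides with $\Psi_{\Z}$ for some resolution $\Z$ of $s$.

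For the left-to-right inclusion I would start from an arbitrary $\Psi = \bigoplus_{i \in I} r_i \Phi_i \in \LL(s)$. By definition of the semantics of $\LLd$ (Definition~\ref{def:satisfiability_LLd}), there exists a resolution $\Z \in \res(s)$ with initial state $z = \corr{\Z}^{-1}(s)$ such that $\pr(\Ct_{\max}(z,\Phi_i)) = r_i$ for each $i \in I$. The plan is to argue that this very $\Z$ witnesses $\Psi = \Psi_{\Z}$ (up to the degenerate case where the only formula is $\top$, which yields $1\top$). To do this I would compare the two distributions over trace formulae. On one side, $\Psi_{\Z} = \bigoplus_{\alpha \in \tr(\C_{\max}(z))} \pr(\C_{\max}(z,\alpha))\,\Phi_{\alpha}$ by Definition~\ref{def:mimicking_formula_for_traces}; on the other, $\Psi$ assigns weight $r_i = \pr(\Ct_{\max}(z,\Phi_i))$ to each $\Phi_i$. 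The key observation, furnished by Lemma~\ref{lem:computation_and_tracing_formula}, is that for each $i$ the compatible maximal computations $\Ct_{\max}(z,\Phi_i)$ are exactly those maximal computations whose trace is the unique $\alpha_i$ with $\Phi_{\alpha_i} = \Phi_i$ (provided $\Phi_i$ is itself a tracing formula and $\depth{\Phi_i}$ equals the length of the compatible maximal computations).

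The subtle point — and the step I expect to be the main obstacle — is controlling the shape of the formulae $\Phi_i$ and reconciling the index set $I$ with the set of maximal traces $\tr(\C_{\max}(z))$. I would need to verify that any $\Phi_i$ receiving positive weight $r_i > 0$ must be the tracing formula $\Phi_{\alpha}$ of some maximal trace $\alpha$ of $\Z$, and conversely that every maximal trace contributes. This requires care because $\Phi_i$ ranges over all of $\LLt$, whereas tracing formulae are a specific subclass; the resolution determinism (the fourth clause of Definition~\ref{def:det_res}) forces each state of $\Z$ to offer at most one transition, so every compatible maximal computation corresponds to a genuine maximal trace, and the condition $|c| = \depth{\Phi_i}$ in the definition of $\Ct_{\max}$ pins the depth of $\Phi_i$ to the computation length. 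Once this correspondence is nailed down, the weights match termwise: $r_i = \pr(\Ct_{\max}(z,\Phi_i)) = \pr(\C_{\max}(z,\alpha_i)) = \Psi_{\Z}(\Phi_i)$, and since both $\Psi$ and $\Psi_{\Z}$ are probability distributions summing to $1$ (Lemma~\ref{lem:mimic_well_defined}), no additional maximal trace can carry leftover weight. I would then conclude $\Psi = \Psi_{\Z}$, with the lone exception that when $\Z$ offers no transitions from $z$ the only maximal trace is $\varepsilon$, giving $\Psi_{\Z} = 1\top$, which accounts for the explicit $\{1\top\}$ summand in the statement.
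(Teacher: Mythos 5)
Your proposal is correct and follows essentially the same route as the paper's proof: the easy inclusion via Proposition~\ref{prop:s_models_res_mimicking} (plus the trivial $1\top$ case), and the hard inclusion by extracting a witnessing resolution $\Z$ from Definition~\ref{def:satisfiability_LLd}, using Lemma~\ref{lem:computation_and_tracing_formula} to force every $\Phi_i$ with $r_i>0$ to be the tracing formula of a maximal trace of $\Z$, matching the weights termwise, and invoking the fact that both $\Psi$ and $\Psi_{\Z}$ have total mass $1$ to rule out any uncovered maximal trace (the paper phrases this last step as an explicit contradiction with Lemma~\ref{lem:trace_distribution}). The only divergence is that the paper does not claim the arbitrary witness works but explicitly selects, among the resolutions witnessing $s \models \Psi$, one in which every computation compatible with some $\Phi_i$ is maximal, so that $\Ct_{\max}(z,\Phi_i) \subseteq \C_{\max}(z)$; this is the precise mechanism behind the ``pinning'' you gesture at, and is worth making explicit in a full write-up.
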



\begin{proof}
From Proposition~\ref{prop:s_models_res_mimicking} and the definition of the relation $\models$ (Definition~\ref{def:satisfiability_LLd}) we can immediately infer that $\{\Psi_{\Z} \mid \Z \in \res(s)\} \subseteq \LL(s)$.
Moreover $1\top \in \LL(s)$ is immediate.
To conclude the proof we need to show that also the opposite inclusion holds, namely that $\LL(s) \setminus \{1\top\} \subseteq \{ \Psi_{\Z} \mid \Z \in \res(s)\}$.
To this aim, consider any $\Psi = \bigoplus_{i \in I} r_i \Phi_i$ and assume that $\Psi \in \LL(s)$.
We have to show that $\Psi$ is the mimicking formula of some resolution for $s$.
Since $s \models \Psi$, from Definition~\ref{def:satisfiability_LLd} we can infer that there is at least one resolution $\Z \in \res(s)$ with $z = \corr{\Z}^{-1}(s)$ s.t.\ for each $i \in I$ we have
$
\pr(\Ct_{\max}(z,\Phi_i)) = r_i.
$
We aim to prove that among the resolutions ensuring that $s \models \Psi$, there is a particular resolution $\Z \in \res(s)$ s.t.
\begin{equation}
\label{eq:thm_LL_res_proof_ob}
\Psi = \Psi_{\Z}.
\end{equation}
First of all we recall that by definition of trace distribution formula (Definition~\ref{def:logic_LL}), for each $i \in I$ we have $r_i > 0$ and moreover $\sum_{i \in I} r_i = 1$.
By definition of $\Ct$, we have that $c \in \Ct_{\max}(z,\Phi_i)$ if{f} $c \models \Phi_i$ and $|c| = \depth{\Phi_i}$, which by Lemma~\ref{lem:computation_and_tracing_formula} implies that $\Phi_i = \Phi_{\tr(c)}$.
Hence, let us consider the resolution $\Z \in \res(s)$ s.t. for each $i \in I$ we have $\Ct_{\max}(z,\Phi_i) \subseteq \C_{\max}(z)$, namely the resolution s.t. the computations compatible with the trace formulae $\Phi_i$ are all maximal.
Notice that the existence of such a resolution is guaranteed by $s \models \Psi$.
Since for each $c \in \Ct_{\max}(z,\Phi_i)$ we have $c \in \C_{\max}(z)$, we can infer that $\tr(c) \in \tr(\C_{\max}(z))$, namely $\Phi_i = \Phi_{\alpha}$ for some $\alpha \in \tr(\C_{\max}(z))$.
This gives that whenever $\Phi_i = \Phi_{\alpha}$, for some $\alpha \in \tr(\C_{\max}(z))$, then we can prove (as done in the proof of Proposition~\ref{prop:s_models_res_mimicking}) that
\begin{equation}
\label{eq:thm_LL_res_weights}
\pr(\Ct_{\max}(z, \Phi_i)) = \pr(\C_{\max}(z,\alpha)).
\end{equation}
Furthermore, we have obtained that 
$
\{\Phi_i \mid i \in I\} \subseteq \{ \Phi_{\alpha} \mid \alpha \in \tr(\C_{\max}(z))\}.
$

To prove Equation~\eqref{eq:thm_LL_res_proof_ob} we need to show that also the opposite inclusion holds.
Assume by contradiction that there is at least one $\beta \in \tr(\C_{\max}(z))$ s.t.\ there is no $i \in I$ with $\Phi_i = \Phi_{\beta}$.
Then we would have
\[
\begin{array}{rlr}
1 ={} & \sum_{i \in I} r_i \\
={} & \sum_{i \in I} \pr(\Ct_{\max}(z,\Phi_i)) \\
\le & \sum_{\alpha \in \tr(\C_{\max}(z)) \setminus \{\beta\}} \pr(\C_{\max}(z,\alpha)) & \text{(by Equation~\eqref{eq:thm_LL_res_weights})}\\
< & \sum_{\alpha \in \tr(\C_{\max}(z))} \pr(\C_{\max}(z,\alpha)) & \text{($\beta \in \tr(\C_{\max}(z))$ implies $\pr(\C_{\max}(z,\beta))>0$)} \\
={} & 1 & \text{(by Lemma~\ref{lem:trace_distribution})}
\end{array}
\]
which is a contradiction.
Hence we can conclude that 
$
\{\Phi_i \mid i \in I\} = \{ \Phi_{\alpha} \mid \alpha \in \tr(\C_{\max}(z))\}
$
and thus, due to Equation~\eqref{eq:thm_LL_res_weights}, that Equation~\eqref{eq:thm_LL_res_proof_ob} holds.
\end{proof}


\begin{remark}
In Theorem~\ref{thm:LLs_is_cup_tracing_formula_resolution}, $1\top$ is not included in the set of mimicking formulae of resolutions merely for sake of presentation, as $1\top$ is the mimicking formula of the resolution for $s$ in which no action is executed.
\end{remark}

The following theorem states that two resolutions are trace distribution equivalent if{f} their mimicking formulae are the same.

\begin{theorem}
\label{thm:equivalent_resolutions}
Let $s,t \in \proc$ and consider $\Z_s \in \res(s)$, with $z_s = \corr{\Z_s}^{-1}(s)$, and $\Z_t \in \res(t)$, with $z_t = \corr{\Z_t}^{-1}(t)$.
Then $\Psi_{\Z_s} = \Psi_{\Z_t}$ if{f} for all $\alpha \in \Act^{\star}$ it holds that $\pr(\C(z_s,\alpha)) = \pr(\C(z_t,\alpha))$.
\end{theorem}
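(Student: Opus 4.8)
Let $s,t \in \proc$ and consider $\Z_s \in \res(s)$, with $z_s = \corr{\Z_s}^{-1}(s)$, and $\Z_t \in \res(t)$, with $z_t = \corr{\Z_t}^{-1}(t)$.
Then $\Psi_{\Z_s} = \Psi_{\Z_t}$ if{f} for all $\alpha \in \Act^{\star}$ it holds that $\pr(\C(z_s,\alpha)) = \pr(\C(z_t,\alpha))$.

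\bigskip

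The plan is to reduce the equality of the two mimicking formulae to the equality of the two trace distributions, and then to bridge between the trace distributions (which count only maximal computations) and the quantities $\pr(\C(z,\alpha))$ (which count all computations compatible with $\alpha$). The key tool for this second part will be Lemma~\ref{lem:computations_same_sum}, which relates $\pr(\C(z,\alpha))$ to the sum of $\pr$ over the maximal computations having a prefix compatible with $\alpha$.

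First I would observe, by inspecting Definition~\ref{def:mimicking_formula_for_traces} together with Remark~\ref{rmk:equivalence_is_equality}, that $\Psi_{\Z_s} = \Psi_{\Z_t}$ holds exactly when the two trace distributions coincide, i.e.\ $\TD_{\Z_s} = \TD_{\Z_t}$. Indeed $\Psi_{\Z}$ is nothing but the distribution $\TD_{\Z}$ reencoded via the tracing-formula map $\alpha \mapsto \Phi_\alpha$, which is injective on $\Act^\star$, so two mimicking formulae express the same probability distribution over trace formulae if and only if the underlying trace distributions assign the same weight $\pr(\C_{\max}(z,\alpha))$ to each $\alpha$. This turns the left-hand side of the biconditional into the statement $\TD_{\Z_s}(\alpha) = \TD_{\Z_t}(\alpha)$ for all $\alpha \in \Act^\star$.

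It then remains to prove that, for a single resolution $\Z$ with initial state $z$, the family $\{\pr(\C_{\max}(z,\alpha))\}_{\alpha}$ determines and is determined by the family $\{\pr(\C(z,\alpha))\}_{\alpha}$, after which the biconditional follows by transporting this equivalence across both $\Z_s$ and $\Z_t$. By Lemma~\ref{lem:computations_same_sum} we have $\pr(\C(z,\alpha)) = \sum_{c \in P_{\max}(z,\alpha)} \pr(c)$, where $P_{\max}(z,\alpha)$ is the set of maximal computations possessing a prefix compatible with $\alpha$. Since every maximal computation of $z$ determines a unique maximal trace, and since a maximal computation $c$ has a prefix compatible with $\alpha$ precisely when $\alpha$ is a prefix of $\tr(c)$, I would rewrite this as $\pr(\C(z,\alpha)) = \sum_{\beta : \alpha \preceq \beta} \pr(\C_{\max}(z,\beta)) = \sum_{\beta : \alpha \preceq \beta} \TD_{\Z}(\beta)$, where $\preceq$ denotes the prefix order on $\Act^\star$. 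Thus the $\pr(\C(z,\alpha))$ values are the ``upward prefix sums'' of the trace distribution. One direction is then immediate: if the two trace distributions agree pointwise, then so do their prefix sums, giving $\pr(\C(z_s,\alpha)) = \pr(\C(z_t,\alpha))$ for all $\alpha$.

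The converse is where the main obstacle lies: I must recover $\TD_{\Z}(\alpha)$ from the prefix sums. The idea is an inclusion–exclusion / Möbius-style inversion over the prefix order. Concretely, for a fixed $\alpha$, the weight $\TD_\Z(\alpha)$ equals $\pr(\C(z,\alpha))$ minus the total weight of all maximal computations whose trace strictly extends $\alpha$, i.e.\ $\TD_{\Z}(\alpha) = \pr(\C(z,\alpha)) - \sum_{a \in \Act} \pr(\C(z,\alpha a))$. Here the care point is that the sum over one-step extensions $\alpha a$ counts exactly the maximal traces strictly above $\alpha$, each once, because the trace distributions have finite support (processes are image-finite and computations have bounded depth), so all but finitely many terms vanish and there is no convergence issue. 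Since the right-hand side of this identity is built solely from $\pr(\C(z,\cdot))$ values, equality of those values across $\Z_s$ and $\Z_t$ forces equality of $\TD_{\Z_s}(\alpha)$ and $\TD_{\Z_t}(\alpha)$ for every $\alpha$, completing the converse and hence the theorem. I expect the bookkeeping in justifying the one-step-extension inversion — in particular confirming that distinct one-step extensions index disjoint sets of maximal traces and that the support finiteness makes the sums well defined — to be the delicate part; everything else is a direct unfolding of the definitions.
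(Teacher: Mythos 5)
Your proof is correct and rests on the same pillars as the paper's: the reduction of $\Psi_{\Z_s}=\Psi_{\Z_t}$ to $\TD_{\Z_s}=\TD_{\Z_t}$ via injectivity of $\alpha\mapsto\Phi_\alpha$, and Lemma~\ref{lem:computations_same_sum} as the bridge between $\pr(\C(z,\alpha))$ and the maximal-computation weights. The forward direction is essentially identical to the paper's. Where you genuinely diverge is in the organization of the converse. The paper proceeds in two stages: it first proves $\tr(\C_{\max}(z_s))=\tr(\C_{\max}(z_t))$ by contradiction (assuming some maximal trace of $z_s$ has no matching maximal computation in $z_t$ and deriving a violation of the prefix-sum identity via the sets $\add_{z}(\alpha)$), and only then establishes equality of the weights $\pr(\C_{\max}(z,\alpha))$ by a case analysis on whether $|\alpha|$ equals $\depth{z_s}$. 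Your single inversion identity $\TD_{\Z}(\alpha)=\pr(\C(z,\alpha))-\sum_{a\in\Act}\pr(\C(z,\alpha a))$ subsumes both stages at once: it recovers every weight (and hence the support) directly from the $\pr(\C(z,\cdot))$ values, so neither the contradiction argument nor the depth case split is needed. The identity itself is implicitly present in the paper's longhand computation for the case $|\alpha|<\depth{z_s}$, but you promote it to the organizing principle, which yields a noticeably shorter and more transparent argument. The two care points you flag — that distinct one-step extensions $\alpha a$ partition the maximal traces strictly extending $\alpha$, and that only finitely many $a$ contribute (by determinism of resolutions and finite support) — are exactly the right ones, and both hold in this setting.
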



\begin{proof}
($\Rightarrow$)
Assume first that $\Psi_{\Z_s} = \Psi_{\Z_t}$.
We aim to show that this implies $\pr(\C(z_s,\alpha)) = \pr(\C(z_t,\alpha))$ for all $\alpha \in \Act^{\star}$.
By definition of mimicking formula (Definition~\ref{def:mimicking_formula_for_traces}) we have
\[
\Psi_{\Z_s} = \bigoplus_{\alpha \in \tr(\C_{\max}(z_s))} \pr(\C_{\max}(z_s,\alpha)) \Phi_{\alpha}
\]  
where for each $\alpha \in \tr(\C_{\max}(z_s))$ the formula $\Phi_{\alpha}$ is the tracing formula of $\alpha$.
Analogously
\[
\Psi_{\Z_t} = \bigoplus_{\beta \in \tr(\C_{\max}(z_t))} \pr(\C_{\max}(z_t,\beta)) \Phi_{\beta}
\]  
where for each $\beta \in \tr(\C_{\max}(z_t))$ the formula $\Phi_{\beta}$ is the tracing formula of $\beta$.

Then from the assumption $\Psi_{\Z_s} = \Psi_{\Z_t}$ we gather
\begin{enumerate}
\item \label{item:same_max_computations}
$\tr(\C_{\max}(z_s)) = \tr(\C_{\max}(z_t))$;
\item \label{item:same_r}
from previous item~\ref{item:same_max_computations} we have that 
$
\pr(\C_{\max}(z_s,\alpha)) 
= 
\pr(\C_{\max}(z_t,\alpha))
$
for each $\alpha \in \tr(\C_{\max}(z_s))$.
\end{enumerate}
We notice that item~\ref{item:same_max_computations} above implies the stronger relation
\begin{equation}
\label{eq:same_computations}
\tr(\C(z_s)) = \tr(\C(z_t)).
\end{equation}
In fact each $\alpha \in \tr(\C(z_s))$ is either a trace in $\tr(\C_{\max}(z_s))$ or a proper prefix of a trace in that set.
In both cases item~\ref{item:same_max_computations} guarantees that each trace in $\tr(\C(z_s))$ has a matching trace in $\tr(\C(z_t))$ and viceversa.

Now, consider any $\alpha \in \Act^{\star}$.
We aim to show that $\pr(\C(z_s,\alpha)) = \pr(\C(z_t,\alpha))$.
For simplicity of presentation, we can distinguish two cases.
\begin{itemize}
\item $\pr(\C(z_s,\alpha)) = 0$.
In this case we have that no computation from $z_s$ is compatible with $\alpha$, namely there is no computation from $z_s$ for which the sequence of the labels of the execution steps matches $\alpha$.
More precisely, we have that $\alpha \not \in \tr(\C(z_s))$.
Since by Equation~\eqref{eq:same_computations} we have that $\tr(\C(z_s)) = \tr(\C(z_t))$ we can directly conclude that $\alpha \not \in \tr(\C(z_t))$, namely $\pr(\C(z_t,\alpha)) = 0$.

\item $\pr(\C(z_s,\alpha)) > 0$.
In this case we have that $\alpha \in \tr(\C(z_s))$ and by Equation~\eqref{eq:same_computations} we have that this implies that $\alpha \in \tr(\C(z_t))$.
Hence we are guaranteed that $\pr(\C(z_t,\alpha)) > 0$.
It remains to show that $\pr(\C(z_s,\alpha)) = \pr(\C(z_t,\alpha))$.
We have
\[
\begin{array}{rlr}
\pr(\C(z_s,\alpha)) ={} & \sum_{c \in P_{\max}(z_s,\alpha)} \pr(c) & \text{(by Lemma~\ref{lem:computations_same_sum})} \\
={} & \sum_{\beta \in \tr(P_{\max}(z_s,\alpha))} \pr(\C_{\max}(z_s, \beta)) & \text{(by def.\ of $P_{\max}$)} \\
={} & \sum_{\beta \in \tr(P_{\max}(z_s,\alpha))} \pr(\C_{\max}(z_t, \beta)) & \text{($P_{\max}(z_s,\alpha) \subseteq \C_{\max}(z_s)$ and item~\ref{item:same_r})} \\
={} & \sum_{\beta' \in \tr(P_{\max}(z_t,\alpha))} \pr(\C_{\max}(z_t,\beta')) & \text{(by Equation~\eqref{eq:same_computations})} \\
={} & \sum_{c' \in P_{\max}(z_t,\alpha)} \pr(c') & \text{(by def.\ of $P_{\max}$)} \\
={} & \pr(\C(z_t,\alpha)) & \text{(by Lemma~\ref{lem:computations_same_sum}).}
\end{array}
\]
\end{itemize}

($\Leftarrow$)
Assume now that for all $\alpha \in \Act^{\star}$ it holds that $\pr(\C(z_s,\alpha)) = \pr(\C(z_t,\alpha))$.
We aim to show that this implies that $\Psi_{\Z_s} = \Psi_{\Z_t}$.
By definition of mimicking formula (Definition~\ref{def:mimicking_formula_for_traces}) we have
\begin{align*}
\Psi_{\Z_s} = \bigoplus_{\alpha \in \tr(\C_{\max}(z_s))} \pr(\C_{\max}(z_s,\alpha)) \Phi_{\alpha} \\
\Psi_{\Z_t} = \bigoplus_{\beta \in \tr(\C_{\max}(z_t))} \pr(\C_{\max}(z_t,\beta)) \Phi_{\beta}.
\end{align*}
Therefore, to prove $\Psi_{\Z_s} = \Psi_{\Z_t}$ we need to show that
\begin{flalign}
& \label{eq:proof_obligation_same_traces}
\tr(\C_{\max}(z_s)) = \tr(\C_{\max}(z_t)) \\
& \label{eq:proof_obligation_same_weights}
\pr(\C_{\max}(z_s,\alpha)) = \pr(\C_{\max}(z_t,\alpha))
\text{ for each } \alpha \in \tr(\C_{\max}(z_s)).
\end{flalign}
First of all we notice that $\pr(\C(z_s,\alpha)) = \pr(\C(z_t,\alpha))$ for each $\alpha \in \Act^{\star}$ implies that $\tr(\C(z_s)) = \tr(\C(z_t))$.
This is due to the fact that by definition, given any $\alpha \in \Act^{\star}$, $\pr(\C(z_s,\alpha)) > 0$ if{f} there is at least one computation $c \in \C(z_s)$ s.t.\ $\alpha = \tr(c)$.
Since $\pr(\C(z_s,\alpha )) > 0$ implies $\pr(\C(z_t,\alpha)) > 0$ we can infer that for each $\alpha \in \tr(\C(z_s))$ there is at least one computation $c' \in \tr(\C(z_t))$ s.t.\ $\alpha = \tr(c')$, namely $\tr(\C(z_s)) \subseteq \tr(\C(z_t))$.
As the same reasoning can be applied symmetrically to each $\alpha \in \tr(\C(z_t))$, we can conclude that
\begin{equation}
\label{eq:same_set_traces}
\tr(\C(z_s)) = \tr(\C(z_t)).
\end{equation}
Next we aim to show that a similar result holds even if we restrict our attention to maximal computations, that is we aim to prove Equation~\eqref{eq:proof_obligation_same_traces}.

Let $\alpha \in \tr(\C_{\max}(z_s))$.
Notice that for this $\alpha$ we have $\C_{\max}(z_s,\alpha) \subseteq P_{\max}(z_s,\alpha)$.
Then we have
\[
\begin{array}{rlr}
\pr(\C(z_s,\alpha)) ={} & \sum_{c \in P_{\max}(z_s,\alpha)} \pr(c) & \text{(by Lemma~\ref{lem:computations_same_sum})} \\
={} & \sum_{c \in \C_{\max}(z_s,\alpha)} \pr(c) + \sum_{c' \in P_{\max}(z_s,\alpha) \setminus \C_{\max}(z_s,\alpha)} \pr(c').
\numberthis\label{eq:division}
\end{array}
\]
Moreover, by Lemma~\ref{lem:computations_same_sum} it holds that 
$
\pr(\C(z_t,\alpha)) = \sum_{c'' \in P_{\max}(z_t,\alpha)} \pr(c'').
$

Therefore, from $\pr(\C(z_s,\alpha)) = \pr(\C(z_t,\alpha))$ we gather that 
\begin{equation}
\label{eq:for_contradiction}
\sum_{c \in \C_{\max}(z_s,\alpha)} \pr(c) + \sum_{c' \in P_{\max}(z_s,\alpha) \setminus \C_{\max}(z_s,\alpha)} \pr(c') 
=
\sum_{c'' \in P_{\max}(z_t,\alpha)} \pr(c'').
\end{equation}
Assume by contradiction that $P_{\max}(z_t,\alpha) \cap \C_{\max}(z_t,\alpha) = \emptyset$, namely there is no maximal computation from $z_t$ which is compatible with $\alpha$.
Then for each action $a \in \Act$ consider the trace $\alpha a$ and define $\add_{z_s}(\alpha) = \{ a \in \Act \mid \alpha a \in \tr(\C(z_s))\}$.
From Equation~\eqref{eq:same_set_traces} we can directly infer that $\add_{z_s}(\alpha) = \add_{z_t}(\alpha)$.
Moreover, since we are assuming that no maximal computation from $z_t$ is compatible with $\alpha$, we get
\begin{flalign}
& \label{eq:for_contradiction_1}
\bigcup_{a \in \add_{z_s}(\alpha)} P_{\max}(z_s,\alpha a) = P_{\max}(z_s,\alpha) \setminus \C_{\max}(z_s,\alpha) \\
& \label{eq:for_contradiction_2}
\bigcup_{a \in \add_{z_t}(\alpha)} P_{\max}(z_t,\alpha a) = P_{\max}(z_t,\alpha)
\end{flalign}
where the unions are guaranteed to be disjoint (a single computation cannot be compatible with more than one trace $\alpha a$).
Furthermore, by Lemma~\ref{lem:computations_same_sum} we have that for each $a \in \add_{z_s}(\alpha)$
\[
\begin{array}{rlr}
\pr(\C(z_s, \alpha a)) = \sum_{c_1 \in P_{\max}(z_s, \alpha a)} \pr(c_1) \\
\pr(\C(z_t, \alpha a)) = \sum_{c_2 \in P_{\max}(z_t, \alpha a)} \pr(c_2)
\end{array}
\]
from which we get that for each $a \in \add_{z_s}(\alpha)$ it holds that
\begin{equation}
\label{eq:for_contradiction_3}
\sum_{c_1 \in P_{\max}(z_s, \alpha a)} \pr(c_1) = \sum_{c_2 \in P_{\max}(z_t, \alpha a)} \pr(c_2).
\end{equation}
Therefore we have that
\[
\begin{array}{rlr}
& \sum_{c \in \C_{\max}(z_s,\alpha)} \pr(c) + \sum_{c' \in P_{\max}(z_s, \alpha) \setminus \C_{\max}(z_s, \alpha)} \pr(c') \\
={} & \sum_{c \in \C_{\max}(z_s, \alpha)} \pr(c) + \sum_{c' \in \bigcup_{a \in \add_{z_s}(\alpha)} P_{\max}(z_s, \alpha a)} \pr(c') & \text{(by Equation~\eqref{eq:for_contradiction_1})} \\
={} & \sum_{c \in \C_{\max}(z_s,\alpha)} \pr(c) + \sum_{a \in \add_{z_s}(\alpha)} \Big( \sum_{c' \in P_{\max}(z_s, \alpha a)} \pr(c') \Big) & \text{(disjoint union)} \\
={} & \sum_{c \in \C_{\max}(z_s, \alpha)} \pr(c) + \sum_{a \in \add_{z_s}(\alpha)} \Big( \sum_{c'' \in P_{\max}(z_t,\alpha a)} \pr(c'') \Big) & \text{(by Equation~\eqref{eq:for_contradiction_3})} \\
={} & \sum_{c \in \C_{\max}(z_s, \alpha)} \pr(c) + \sum_{c'' \in \bigcup_{a \in \add_{z_t}(\alpha)}P_{\max}(z_t, \alpha a)} \pr(c'') & \text{($\add_{z_s}(\alpha) = \add_{z_t}(\alpha)$)} \\
={} & \sum_{c \in \C_{\max}(z_s, \alpha)} \pr(c)  + \sum_{c'' \in P_{\max}(z_t, \alpha)} \pr(c'') & \text{(by Equation~\eqref{eq:for_contradiction_2})}.
\end{array}
\]
Thus we have obtained that 
\[
\sum_{c \in \C_{\max}(z_s,\alpha)} \pr(c) + \sum_{c' \in P_{\max}(z_s, \alpha) \setminus \C_{\max}(z_s, \alpha)} \pr(c')
=
\sum_{c \in \C_{\max}(z_s, \alpha)} \pr(c)  + \sum_{c'' \in P_{\max}(z_t, \alpha)} \pr(c'')
\]
which, since by the choice of $\alpha$ we have that $\sum_{c \in \C_{\max}(z_s, \alpha)} \pr(c) > 0$, is in contradiction with Equation~\eqref{eq:for_contradiction}.
Therefore, we have obtained that whenever $\alpha \in \tr(\C_{\max}(z_s))$ then there is at least one maximal computation $c$ from $z_t$ s.t.\ $\alpha = \tr(c)$, that is $\tr(\C_{\max}(z_s)) \subseteq \tr(\C_{\max}(z_t))$.
Since the same reasoning can be applied symmetrically to each $\alpha \in \tr(\C_{\max}(z_t))$ we gather that also $\tr(\C_{\max}(z_t)) \subseteq \tr(\C_{\max}(z_s))$ holds.
The two inclusions give us Equation~\eqref{eq:proof_obligation_same_traces}. 

Finally, we aim to prove Equation~\eqref{eq:proof_obligation_same_weights}.
Let $\alpha \in \tr(\C_{\max}(z_s))$.
We can distinguish two cases.
\begin{itemize}
\item $|\alpha| = \depth{z_s}$.
First of all we notice that from Equation~\eqref{eq:same_set_traces} and the assumption $\pr(\C(z_s,\beta)) = \pr(\C(z_t,\beta))$ for each $\beta \in \Act^{\star}$, we can infer that $|\alpha| = \depth{z_t}$.
Hence, we have 
\[
\pr(\C_{\max}(z_s,\alpha)) = \pr(\C(z_s,\alpha)) = \pr(\C(z_t,\alpha)) = \pr(\C_{\max}(z_t,\alpha)).
\]
\item $|\alpha| < \depth{z_s}$.
Then we have 
\[
\begin{array}{rlr}
& \pr(\C_{\max}(z_s, \alpha)) \\
={} & \sum_{c \in \C_{\max}(z_s, \alpha)} \pr(c) \\
={} & \sum_{c' \in P_{\max}(z_s, \alpha)} \pr(c') - \sum_{c'' \in P_{\max}(z_s, \alpha) \setminus \C_{\max}(z_s, \alpha)} \pr(c'') \\
={} & \pr(\C(z_s, \alpha)) - \sum_{c'' \in P_{\max}(z_s, \alpha) \setminus \C_{\max}(z_s, \alpha)} \pr(c'') \\
={} & \pr(\C(z_t, \alpha)) - \sum_{c'' \in P_{\max}(z_s, \alpha) \setminus \C_{\max}(z_s, \alpha)} \pr(c'') \\
={} & \sum_{c''' \in P_{\max}(z_t, \alpha)} \pr(c''') - \sum_{c'' \in P_{\max}(z_s, \alpha) \setminus \C_{\max}(z_s, \alpha)} \pr(c'') \\
={} & \sum_{c_1 \in \C_{\max}(z_t, \alpha)} \pr(c_1) + 
\sum_{c_2 \in P_{\max}(z_t, \alpha) \setminus \C_{\max}(z_t, \alpha)} \pr(c_2) - \sum_{c'' \in P_{\max}(z_s, \alpha) \setminus \C_{\max}(z_s, \alpha)} \pr(c'')\\
={} & \sum_{c_1 \in \C_{\max}(z_t, \alpha)} \pr(c_1) + 
\sum_{c_2 \in \bigcup_{b \in \add_{z_t}(\alpha)} P_{\max}(z_t, \alpha b)} \pr(c_2) - \sum_{c'' \in \bigcup_{b \in \add_{z_s}(\alpha)}P_{\max}(z_s, \alpha b)} \pr(c'') \\
={} & \sum_{c_1 \in \C_{\max}(z_t, \alpha)} \pr(c_1) + 
\sum_{b \in \add_{z_t}(\alpha)} \Big( \sum_{c_2 \in P_{\max}(z_t, \alpha b)} \pr(c_2) \Big) + \\
& - \sum_{b \in \add_{z_s}(\alpha)}( \sum_{c'' \in P_{\max}(z_s, \alpha b)} \pr(c'') ) \\
={} & \sum_{c_1 \in \C_{\max}(z_t, \alpha)} \pr(c_1) + 
\sum_{b \in \add_{z_t}(\alpha)} \pr(\C(z_t, \alpha b)) - \sum_{b \in \add_{z_s}(\alpha)} \pr(\C(z_s, \alpha b)) \\
={} & \sum_{c_1 \in \C_{\max}(z_t, \alpha)} \pr(c_1) \\
={} & \pr(\C_{\max}(z_t, \alpha))
\end{array}
\]
where
\begin{itemize}
\item the second and the sixth steps follow by Equation~\eqref{eq:division};
\item the third, fifth and ninth steps follow by Lemma~\ref{lem:computations_same_sum};
\item the fourth step follows by $\pr(\C(z_s, \alpha)) = \pr(\C(z_t, \alpha))$;
\item the seventh step follows by Equation~\eqref{eq:for_contradiction_1};
\item the tenth step follows by $\add_{z_s}(\alpha) = \add_{z_t}(\alpha)$ (given by Equation~\eqref{eq:same_set_traces}) and the initial assumption which guarantees that for each $b \in \add_{z_s}(\alpha)$, $\pr(\C(z_s, \alpha b)) = \pr(\C(z_t, \alpha b))$. 
\end{itemize}
\end{itemize}
\end{proof}


Then we can derive the characterization result for the strong case: two processes $s,t$ are strong trace equivalent if{f} they satisfy the same formulae in $\LL$.

\begin{theorem}
\label{thm:det_char}
For all $s,t \in \proc$ we have that $s \STr t$ if{f} $\LL(s) = \LL(t)$.
\end{theorem}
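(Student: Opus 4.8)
The plan is to chain together the two structural theorems already proved, thereby reducing strong probabilistic trace equivalence to a set equality of mimicking formulae. First I would unfold Definition~\ref{def:prob_trace_dist}: $s \STr t$ holds exactly when (i) for every resolution $\Z_s \in \res(s)$ there is a resolution $\Z_t \in \res(t)$ with $\pr(\C(z_s,\alpha)) = \pr(\C(z_t,\alpha))$ for all $\alpha \in \Act^{\star}$, and (ii) the symmetric condition obtained by swapping the roles of $s$ and $t$, where as usual $z_s = \corr{\Z_s}^{-1}(s)$ and $z_t = \corr{\Z_t}^{-1}(t)$.

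The key move is to replace the trace-probability matching condition by equality of mimicking formulae via Theorem~\ref{thm:equivalent_resolutions}: for resolutions $\Z_s, \Z_t$ with initial states $z_s, z_t$, we have $\pr(\C(z_s,\alpha)) = \pr(\C(z_t,\alpha))$ for all $\alpha \in \Act^{\star}$ if and only if $\Psi_{\Z_s} = \Psi_{\Z_t}$. Substituting this equivalence into clause (i) turns it into the inclusion $\{\Psi_{\Z} \mid \Z \in \res(s)\} \subseteq \{\Psi_{\Z} \mid \Z \in \res(t)\}$, since it asserts that every mimicking formula of a resolution of $s$ is also the mimicking formula of some resolution of $t$; clause (ii) turns into the reverse inclusion. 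Hence $s \STr t$ holds if and only if $\{\Psi_{\Z} \mid \Z \in \res(s)\} = \{\Psi_{\Z} \mid \Z \in \res(t)\}$ as sets of trace distribution formulae.

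Finally I would invoke Theorem~\ref{thm:LLs_is_cup_tracing_formula_resolution}, which gives $\LL(s) = \{1\top\} \cup \{\Psi_{\Z} \mid \Z \in \res(s)\}$ and likewise for $t$. Since $1\top$ is the mimicking formula of the (always available) resolution that executes no action, it already belongs to both sets of mimicking formulae; consequently $\{1\top\}$ is redundant and the set equality of mimicking formulae is literally $\LL(s) = \LL(t)$. Concretely, $1\top \in \{\Psi_{\Z}\mid\Z\in\res(s)\}$ and $1\top \in \{\Psi_{\Z}\mid\Z\in\res(t)\}$ imply that $\{1\top\}\cup A = \{1\top\}\cup B$ is equivalent to $A = B$ for these two sets. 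Combining the three steps yields $s \STr t$ iff $\LL(s) = \LL(t)$.

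I do not expect a genuine obstacle here, because the two preceding theorems already carry the full technical weight. The only points demanding a little care are the bookkeeping around $1\top$ and the verification that the two existential matching clauses of Definition~\ref{def:prob_trace_dist} correspond precisely to the two inclusions between the sets of mimicking formulae; both become routine once Theorem~\ref{thm:equivalent_resolutions} is available to convert each per-resolution condition into an equality of formulae.
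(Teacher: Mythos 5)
Your proposal is correct and follows essentially the same route as the paper's proof: unfold Definition~\ref{def:prob_trace_dist}, convert the per-resolution matching conditions into equalities of mimicking formulae via Theorem~\ref{thm:equivalent_resolutions}, and conclude with Theorem~\ref{thm:LLs_is_cup_tracing_formula_resolution}. Your explicit handling of the $1\top$ bookkeeping matches the remark following that theorem, so nothing is missing.
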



\begin{proof}
($\Rightarrow$) 
Assume first that $s \STr t$.
We aim to sow that this implies that $\LL(s) = \LL(t)$.
By Definition~\ref{def:prob_trace_dist} $s \STr t$ implies that 
\begin{enumerate}[(i)]
\item \label{item:thm_det_char_1}
for each resolution $\Z_s \in \res(s)$, with $z_s = \corr{\Z_s}^{-1}(s)$, there is a resolution $\Z_t \in \res(t)$, with $z_t = \corr{\Z_t}^{-1}(t)$, s.t.\ for each $\alpha \in \Act^{\star}$ we have $\pr(\C(z_s,\alpha)) = \pr(\C(z_t,\alpha))$;
\item \label{item:thm_det_char_2}
for each resolution $\Z_t \in \res(t)$, with $z_t = \corr{\Z_t}^{-1}(t)$, there is a resolution $\Z_s \in \res(s)$, with $z_s = \corr{\Z_s}^{-1}(s)$, s.t.\ for each $\alpha \in \Act^{\star}$ we have $\pr(\C(z_s,\alpha)) = \pr(\C(z_t,\alpha))$.
\end{enumerate}
Consider any $\Z_s \in \res(s)$, with $z_s = \corr{\Z_s}^{-1}(s)$, and let $\Z_t \in \res(t)$, with $z_t = \corr{\Z_t}^{-1}(t)$, be any resolution of $t$ satisfying item~\eqref{item:thm_det_char_1} above.
By Theorem~\ref{thm:equivalent_resolutions}, $\pr(\C(z_s,\alpha)) = \pr(\C(z_t,\alpha))$ for all $\alpha \in \Act^{\star}$ implies that $\Psi_{\Z_s} = \Psi_{\Z_t}$.
More precisely, we have that
\begin{equation}
\label{eq:thm_det_char_1}
\text{for each } \Z_s \in \res(s) \text{ there is } \Z_t \in \res(t) \text{ s.t.\ } \Psi_{\Z_s} = \Psi_{\Z_t}.
\end{equation}
Symmetrically, item~\eqref{item:thm_det_char_2} above taken together with Theorem~\ref{thm:equivalent_resolutions} gives that 
\begin{equation}
\label{eq:thm_det_char_2}
\text{for each } \Z_t \in \res(t) \text{ there is a } \Z_s \in \res(s) \text{ s.t.\ } \Psi_{\Z_t} = \Psi_{\Z_s}. 
\end{equation}
Therefore, from Equations~\eqref{eq:thm_det_char_1} and~\eqref{eq:thm_det_char_2} we gather 
\begin{equation}
\label{eq:thm_det_char_3}
\{\Psi_{\Z_s} \mid \Z_s \in \res(s)\} = \{\Psi_{\Z_t} \mid \Z_t \in \res(t)\}.
\end{equation}
By Theorem~\ref{thm:LLs_is_cup_tracing_formula_resolution} we have that $\LL(s) = \{1\top\} \cup \{\Psi_{\Z_s} \mid \Z_s \in \res(s)\}$ and similarly $\LL(t) = \{1\top\} \cup \{\Psi_{\Z_t} \mid \Z_t \in \res(t)\}$.
Therefore, from Equation~\eqref{eq:thm_det_char_3} we can conclude that
$
\LL(s) = \LL(t).
$

($\Leftarrow$)
Assume now that $\LL(s) = \LL(t)$.
We aim to show that this implies that $s \STr t$.
By Theorem~\ref{thm:LLs_is_cup_tracing_formula_resolution} we have that $\LL(s) = \{1\top\} \cup \{ \Psi_{\Z_s} \mid \Z_s \in \res(s)\}$ and analogously $\LL(t) = \{1\top\} \cup \{\Psi_{\Z_t} \mid \Z_t \in \res(t)\}$.
Hence, from the assumption we can infer that
$
\{\Psi_{\Z_s} \mid \Z_s \in \res(s)\} = \{\Psi_{\Z_t} \mid \Z_t \in \res(t)\}.
$

Clearly the equality between the two sets implies that 
\begin{itemize}
\item for each $\Z_s \in \res(s)$ there is a $\Z_t \in \res(t)$ s.t.\ $\Psi_{\Z_s} = \Psi_{\Z_t}$ and 
\item for each $\Z_t \in \res(t)$ there is a $\Z_s \in \res(s)$ s.t.\ $\Psi_{\Z_t} = \Psi_{\Z_s}$.
\end{itemize}
By applying Theorem~\ref{thm:equivalent_resolutions} to the two items above we obtain that
\begin{itemize}
\item for each resolution $\Z_s \in \res(s)$, with $z_s = \corr{\Z_s}^{-1}(s)$, there is a resolution $\Z_t \in \res(t)$, with $z_t = \corr{\Z_t}^{-1}(t)$, s.t.\ for each $\alpha \in \Act^{\star}$ we have $\pr(\C(z_s,\alpha)) = \pr(\C(z_t,\alpha))$;
\item for each resolution $\Z_t \in \res(t)$, with $z_t = \corr{\Z_t}^{-1}(t)$, there is a resolution $\Z_s \in \res(s)$, with $z_s = \corr{\Z_s}^{-1}(s)$, s.t.\ for each $\alpha \in \Act^{\star}$ we have $\pr(\C(z_s,\alpha)) = \pr(\C(z_t,\alpha))$;
\end{itemize}
from which we can conclude that $s \STr t$.
\end{proof}


The notions of \emph{tracing formula} and \emph{mimicking formula} and the related results Lemma~\ref{lem:computation_and_tracing_formula}, Lemma~\ref{lem:mimic_well_defined}, Proposition~\ref{prop:s_models_res_mimicking} and Theorem~\ref{thm:LLs_is_cup_tracing_formula_resolution} can be easily extended to the weak case by extending the set of traces $\Act^{\star}$ to the set $\Acttau^{\star}$. 

The following theorem gives the characterization of weak trace distribution equivalence: two resolutions are weak trace distribution equivalent if{f} their mimicking formulae are $\LLw$-equivalent.

To simplify the upcoming proofs, we introduce an alternative version of the \emph{weak mimicking formula}, which captures the weak trace distribution (see Definition~\ref{def:weak_trace_distribution}) of resolutions.

\begin{definition}
\label{def:weak_mimicking}
Consider any resolution $\Z \in \res(\proc)$ with initial state $z$.
We define the \emph{weak mimicking formula} of $\Z$ as the trace distribution formula $\Psiw_{\Z}$ given by
\[
\Psiw_{\Z} = \bigoplus_{\alpha \in \trw(\C_{\max}(z))} \pr(\Cw_{\max}(z,\alpha)) \Phi_{\alpha}
\]  
where, for each $\alpha \in \trw(\C_{\max}(z))$, the formula $\Phi_{\alpha}$ is the tracing formula of $\alpha$.
\end{definition}

Notice that from the definitions of $\Cw_{\max}(\_,\_)$ and $\trw(\_)$ we can infer that $\Psiw_{\_}$ represents a trace distribution formula over the quotient space of $\LLw$ wrt.\ $\eqtrace$, that is $\Psiw_{\_} \in \LLd$.

\begin{lemma}
\label{lem:mimicking_equiv_weak_mimicking}
For each $\Z \in \res(\proc)$ it holds that $\Psi_{\Z} \eqtrace^{\dagger} \Psiw_{\Z}$.
\end{lemma}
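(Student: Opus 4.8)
The plan is to reduce the statement to an elementary identity about probabilities of maximal computations, once we understand how the lifting $\eqtrace^{\dagger}$ acts on the two (weak) mimicking formulae. I would begin by recording two preliminaries. Since $\eqtrace$ on trace formulae is an \emph{equivalence} relation, reading Definition~\ref{def:lifting_relation} in this case shows that $\eqtrace^{\dagger}$ relates two trace distribution formulae exactly when they assign the same total weight to every $\eqtrace$-equivalence class of trace formulae. Second, by a straightforward induction on the structure of traces, comparing Definition~\ref{def:tracing_formula} with Definitions~\ref{def:eqtrace} and~\ref{def:equivalence_of_formulae}, one gets $\Phi_{\alpha} \eqtrace \Phi_{\beta}$ iff $\alpha \eqtrace \beta$; hence the $\eqtrace$-classes of the trace formulae occurring in $\Psi_{\Z}$ and $\Psiw_{\Z}$ correspond to the $\eqtrace$-classes of the traces that index them, and each weak class has a unique $\tau$-free representative in $\Act^{\star}$ (the one used to form the $\tau$-free tracing formula in Definition~\ref{def:weak_mimicking}).

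Next I would fix a weak trace class occurring in $\Psiw_{\Z}$ and let $\gamma \in \Act^{\star}$ be its ($\tau$-free) representative, and establish the core identity
\[
\pr(\Cw_{\max}(z,\gamma)) = \sum_{\alpha \in \tr(\C_{\max}(z)),\, \alpha \eqtrace \gamma} \pr(\C_{\max}(z,\alpha)).
\]
This follows by unfolding the definitions: $\Cw_{\max}(z,\gamma)$ is the set of maximal computations $c$ from $z$ with $\tr(c) \eqtrace \gamma$, and this set is the union $\bigcup_{\alpha \eqtrace \gamma} \C_{\max}(z,\alpha)$, which is disjoint because each maximal computation has a single exact trace and therefore lies in exactly one $\C_{\max}(z,\tr(c))$. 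Summing $\pr(c)$ over this partition yields the identity.

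Finally I would assemble the lifting. The left-hand side above is precisely the weight $\Psiw_{\Z}$ puts on the atom $\Phi_{\gamma}$, while the right-hand side is the total $\Psi_{\Z}$-weight carried by the atoms $\Phi_{\alpha}$ with $\alpha \eqtrace \gamma$, all of which lie in the single $\eqtrace$-class of $\Phi_{\gamma}$ by the second preliminary. Thus $\Psi_{\Z}$ and $\Psiw_{\Z}$ agree on the total mass of every $\eqtrace$-class, so they are related by $\eqtrace^{\dagger}$. Concretely, this also exhibits the witnessing splitting demanded by Definition~\ref{def:lifting_relation}: decompose $\Psi_{\Z}$ into its atoms $\Phi_{\alpha}$ of weight $\pr(\C_{\max}(z,\alpha))$, and match each such atom to the portion of that same weight of the atom $\Phi_{\gamma}$ of $\Psiw_{\Z}$, where $\gamma$ represents $[\alpha]_{\mathrm{w}}$; the identity guarantees these portions reconstitute exactly the weights of $\Psiw_{\Z}$, giving $\Psi_{\Z} \eqtrace^{\dagger} \Psiw_{\Z}$.

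The two inductions/unfoldings are routine; the only point needing care is verifying that the union $\bigcup_{\alpha \eqtrace \gamma} \C_{\max}(z,\alpha)$ is genuinely disjoint and exhausts $\Cw_{\max}(z,\gamma)$, so that no execution probability is counted twice or dropped — exactly the concern addressed by the non-prefix maximality condition built into the definition of $\Cw$. I expect this bookkeeping, reconciling the per-trace weights of $\Psi_{\Z}$ with the per-class weights of $\Psiw_{\Z}$ without double counting, to be the main (if modest) obstacle.
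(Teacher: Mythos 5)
Your proposal is correct and follows essentially the same route as the paper's proof: both hinge on the identity $\pr(\Cw_{\max}(z,\gamma)) = \sum_{\alpha \in \tr(\C_{\max}(z)),\, \alpha \eqtrace \gamma} \pr(\C_{\max}(z,\alpha))$ obtained by partitioning the maximal computations by their exact traces, together with the observation that $\Phi_{\alpha} \eqtrace \Phi_{\beta}$ iff $\alpha \eqtrace \beta$ and that each $\alpha \in \tr(\C_{\max}(z))$ matches exactly one class in $\trw(\C_{\max}(z))$. Your explicit construction of the witnessing splitting for Definition~\ref{def:lifting_relation} is if anything slightly more careful than the paper's chain of $\eqtrace^{\dagger}$ rewritings, but it is the same argument.
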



\begin{proof}
Consider $\Z \in \res(\proc)$ with initial state $z$.
First of all we recall that by definition of mimicking formula (Definition~\ref{def:mimicking_formula_for_traces}) we have
\[
\Psi_{\Z} = \bigoplus_{\alpha \in \tr(\C_{\max}(z))} \pr(\C_{\max}(z,\alpha)) \Phi_{\alpha}
\]
where for each $\alpha \in \tr(\C_{\max}(z))$, the formula $\Phi_{\alpha}$ is the tracing formula of $\alpha$.
By definition of weak mimicking formula (Definition~\ref{def:weak_mimicking}) we have
\[
\Psiw_{\Z} = \bigoplus_{\beta \in \trw(\C_{\max}(z))} \pr(\Cw_{\max}(z,\beta)) \Phi_{\beta}
\]
where for each $\beta \in \trw(\C_{\max}(z))$, the formula $\Phi_{\beta}$ is the tracing formula of $\beta$.
Moreover, we have that for each $\beta \in \trw(\C_{\max}(z))$
\[
\begin{array}{rlr}
\pr(\Cw_{\max}(z,\beta)) ={} & \sum_{c \in \Cw_{\max}(z,\beta)} \pr(c) \\
={} & \sum_{c \in \C_{\max}(z) \text{ s.t. } \tr(c) \eqtrace \beta} \pr(c) \\
={} & \sum_{\alpha \in \tr(\C_{\max}(z)) \text{ s.t. } \alpha \eqtrace \beta} \pr(\C_{\max}(z,\alpha)). 
\end{array}
\]
Furthermore, by definition of tracing formula (Definition~\ref{def:tracing_formula}) and of $\eqtrace$ (Definition~\ref{def:eqtrace}), it is immediate that $\alpha \eqtrace \beta$ if{f} $\Phi_{\alpha} \eqtrace \Phi_{\beta}$, for each $\alpha, \beta \in \Act^{\star}$.
For simplicity, we denote by $\alpha_{\beta}$ each $\alpha \in \tr(\C_{\max}(z))$ s.t.\ $\alpha \eqtrace \beta$ for some $\beta \in \trw(\C_{\max}(z))$.
Notice that by construction of $\trw(\_)$, no trace $\alpha \in \tr(\C_{\max}(z))$ can be equivalent to more than one $\beta \in \trw(\C_{\max}(z))$.
Therefore, we have obtained that
\[
\begin{array}{rlr}
\Psiw_{\Z} ={} & \bigoplus_{\beta \in \trw(\C_{\max}(z))} \pr(\Cw_{\max}(z,\beta)) \Phi_{\beta} \\
\eqtrace^{\dagger} & \bigoplus_{\beta \in \trw(\C_{\max}(z)) \atop \alpha_{\beta} \in \tr(\C_{\max}(z))} \pr(\C_{\max}(z, \alpha_{\beta})) \Phi_{\alpha_{\beta}} \\
\eqtrace^{\dagger} & \bigoplus_{\alpha \in \tr(\C_{\max}(z))} \pr(\C_{\max}(z,\alpha)) \Phi_{\alpha} \\
={} & \Psi_{\Z}.
\end{array}
\]
\end{proof}


\begin{theorem}
\label{thm:equivalent_resolutions_weak}
Let $s,t \in \proc$ and consider $\Z_s \in \res(s)$, with $z_s = \corr{\Z_s}^{-1}(s)$, and $\Z_t \in \res(t)$, with $z_t = \corr{\Z_t}^{-1}(t)$.
Then $\Psi_{\Z_s} \eqtrace^{\dagger} \Psi_{\Z_t}$ if{f} for all $\alpha \in \Act^{\star}$ it holds that $\pr(\Cw(z_s,\alpha)) = \pr(\Cw(z_t,\alpha))$.
\end{theorem}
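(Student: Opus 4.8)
The plan is to reduce the weak statement to the strong one established in Theorem~\ref{thm:equivalent_resolutions} by passing through the weak mimicking formulae and the quotient of $\LLw$ by $\eqtrace$. First I would invoke Lemma~\ref{lem:mimicking_equiv_weak_mimicking}, which gives $\Psi_{\Z_s} \eqtrace^{\dagger} \Psiw_{\Z_s}$ and $\Psi_{\Z_t} \eqtrace^{\dagger} \Psiw_{\Z_t}$. Since $\eqtrace^{\dagger}$ is an equivalence relation (being the lifting of the equivalence $\eqtrace$ on trace formulae), symmetry and transitivity yield that $\Psi_{\Z_s} \eqtrace^{\dagger} \Psi_{\Z_t}$ holds if{f} $\Psiw_{\Z_s} \eqtrace^{\dagger} \Psiw_{\Z_t}$. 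This replaces the two mimicking formulae, indexed by representatives of $\tr(\C_{\max}(\cdot))$, by the weak mimicking formulae, which by Definition~\ref{def:weak_mimicking} are indexed by $\trw(\C_{\max}(\cdot))$ and carry the weak weights $\pr(\Cw_{\max}(\cdot,\cdot))$.

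Next I would exploit Remark~\ref{rmk:equivalence_is_equality}: on the quotient $\LLw_{/\eqtrace} = \LL$ the relation $\eqtrace^{\dagger}$ collapses to equality of distributions over $\eqtrace$-classes of trace formulae. Using the fact recorded in the proof of Lemma~\ref{lem:mimicking_equiv_weak_mimicking} that $\alpha \eqtrace \beta$ if{f} $\Phi_{\alpha} \eqtrace \Phi_{\beta}$, the condition $\Psiw_{\Z_s} \eqtrace^{\dagger} \Psiw_{\Z_t}$ becomes the assertion that the weak trace distributions of $\Z_s$ and $\Z_t$ agree as distributions over $\Acttau^{\star}/{\eqtrace}$: the two formulae have the same support of $\eqtrace$-classes and assign equal weight to each class. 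In other words, the left-hand side of the theorem is precisely the strong comparison of Theorem~\ref{thm:equivalent_resolutions}, but carried out in the quotient where traces are identified up to $\eqtrace$ and $\C_{\max}$, $\tr$ are read as $\Cw_{\max}$, $\trw$.

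The remaining work is then to run the two directions of the proof of Theorem~\ref{thm:equivalent_resolutions} with every occurrence of $\C(\cdot,\cdot)$, $\C_{\max}(\cdot,\cdot)$ and $\tr$ replaced by its weak counterpart $\Cw(\cdot,\cdot)$, $\Cw_{\max}(\cdot,\cdot)$ and $\trw$, and with equality of traces replaced by $\eqtrace$. For the forward direction I would first derive $\trw(\C_{\max}(z_s)) = \trw(\C_{\max}(z_t))$ together with matching weak weights, strengthen this to $\trw(\C(z_s)) = \trw(\C(z_t))$ via the prefix argument, and then, for an arbitrary $\alpha$, rewrite $\pr(\Cw(z,\alpha))$ as a sum of weak maximal weights. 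For the converse I would reproduce the contradiction argument built on $\add_{z}(\alpha)$ and the disjoint decomposition of the weak $P_{\max}$ sets to show that the maximal weak trace sets coincide and the maximal weak weights match.

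The main obstacle I anticipate is exactly this last transfer: in the weak setting a computation may be weakly compatible with $\alpha$ without its length being pinned down by $\alpha$, since $\tau$-steps are absorbed by $\eqtrace$. Consequently, before the algebra of Theorem~\ref{thm:equivalent_resolutions} can be reused, one must verify the weak analogue of Lemma~\ref{lem:computations_same_sum} for $\Cw(z,\alpha)$ and $\Cw_{\max}(z,\alpha)$, and check that the decompositions of the form $\bigcup_{a}P_{\max}(z,\alpha a)$ remain disjoint and exhaustive after quotienting by $\eqtrace$. Once these weak bookkeeping facts are in place, the identical chains of equalities close both directions and deliver the claimed equivalence.
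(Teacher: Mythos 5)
Your proposal is correct and follows essentially the same route as the paper: reduce via Lemma~\ref{lem:mimicking_equiv_weak_mimicking} to the weak mimicking formulae, observe that on representatives of $\eqtrace$-classes (chosen in $\Act^{\star}$) the relation $\eqtrace^{\dagger}$ collapses to equality, and then rerun both directions of Theorem~\ref{thm:equivalent_resolutions} with $\C$, $\C_{\max}$, $\tr$ replaced by their weak counterparts. If anything, you are more explicit than the paper about the remaining bookkeeping (the weak analogue of Lemma~\ref{lem:computations_same_sum} and the disjointness of the $P_{\max}$ decompositions after quotienting), which the paper subsumes under ``the same argumentations.''
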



\begin{proof}
($\Rightarrow$)
Assume first that $\Psi_{\Z_s} \eqtrace^{\dagger} \Psi_{\Z_t}$.
We aim to show that $\pr(\Cw(z_s,\alpha)) = \pr(\Cw(z_t,\alpha))$ for all $\alpha \in \Act^{\star}$.
By Lemma~\ref{lem:mimicking_equiv_weak_mimicking} we have that
\[
\Psi_{\Z_s} \eqtrace^{\dagger} \Psiw_{\Z_s} 
\quad\text{ and }\quad
\Psi_{\Z_t} \eqtrace^{\dagger} \Psiw_{\Z_t}.
\]  
Thus, $\Psi_{\Z_s} \eqtrace^{\dagger} \Psi_{\Z_t}$ implies $\Psiw_{\Z_s} \eqtrace^{\dagger} \Psiw_{\Z_t}$.
Hence the prove the proof obligation, it is enough to prove that
\begin{equation}
\label{eq:thm_weak_equivalent_proof_obligation}
\Psiw_{\Z_s} \eqtrace^{\dagger} \Psiw_{\Z_t} \;\text{ implies }\; \pr(\Cw(z_s,\alpha)) = \pr(\Cw(z_t,\alpha)) \text{ for each } \alpha \in \Act^{\star}.
\end{equation}
From $\Psiw_{\Z_s} \eqtrace^{\dagger} \Psiw_{\Z_t}$ we get that 
\[
\Psiw_{\Z_t} = \bigoplus_{\alpha \in \trw(\C_{\max}(z_s)) \atop \beta_{\alpha} \in \trw(\C_{\max}(z_t)) \cap [\alpha]_{\mathrm{w}}}
\pr(\Cw_{\max}(z_t,\beta_{\alpha})) \Phi_{\beta_{\alpha}}
\]
where, for each $\alpha \in \trw(\C_{\max}(z_s))$, $\sum_{\beta_{\alpha} \in \trw(\C_{\max}(z_t)) \cap [\alpha]_{\mathrm{w}}} \pr(\Cw_{\max}(z_t,\beta_{\alpha})) = \pr(\Cw_{\max}(z_s,\alpha))$ and $\Phi_{\beta_{\alpha}} \eqtrace \Phi_{\alpha}$ for each $\beta_{\alpha} \in \trw(\C_{\max}(z_t)) \cap [\alpha]_{\mathrm{w}}$.

We notice that by definition the elements of $\trw(\C_{\max}(z_t))$ represent distinct equivalence classes with respect to $\eqtrace$.
Thus we are guaranteed that for each $\alpha \in \trw(\C_{\max}(z_t)) \cap [\alpha]_{\mathrm{w}}$ contains a single trace $\beta_{\alpha}$.
Therefore, in this particular case, $\Psiw_{\Z_s} \eqtrace^{\dagger} \Psiw_{\Z_t}$ is equivalent to say that $\Psiw_{\Z_s} = \Psiw_{\Z_t}$.
Moreover, since the representative of the equivalence classes wrt $\eqtrace$ can always be chosen in $\Act^{\star}$, we can always construct the sets $\trw(\C_{\max}(z_s))$ and $\trw(\C_{\max}(z_t))$ in such a way that $\trw(\C_{\max}(z_s)) \cap \Act^{\star} = \trw(\C_{\max}(z_s))$ and $\trw(\C_{\max}(z_t)) \cap \Act^{\star} = \trw(\C_{\max}(z_t))$. 
Hence, the same argumentations presented in the first part of the proof of Theorem~\ref{thm:equivalent_resolutions} allow us to prove the proof obligation Equation~\eqref{eq:thm_weak_equivalent_proof_obligation}.

($\Leftarrow$)
Assume now that for all $\alpha \in \Act^{\star}$ it holds that $\pr(\Cw(z_s,\alpha)) = \pr(\Cw(z_t,\alpha))$.
We aim to show that this implies that $\Psi_{\Z_s} \eqtrace^{\dagger} \Psi_{\Z_t}$.
To this aim we show that the assumption $\pr(\Cw(z_s,\alpha)) = \pr(\Cw(z_t,\alpha))$ for all $\alpha \in \Act^{\star}$ implies $\Psiw_{\Z_s} = \Psiw_{\Z_t}$.
This follows from the same argumentations presented in the second part of the proof of Theorem~\ref{thm:equivalent_resolutions}.
Then, since from Lemma~\ref{lem:mimicking_equiv_weak_mimicking} we have $\Psi_{\Z_s} \eqtrace^{\dagger} \Psiw_{\Z_s}$ and $\Psi_{\Z_t} \eqtrace^{\dagger} \Psiw_{\Z_t}$, we can conclude that $\Psi_{\Z_s} \eqtrace^{\dagger} \Psi_{\Z_t}$ as required.
\end{proof}


Then we can derive the characterization result for the weak case: two processes $s,t$ are weak trace equivalent if{f} they satisfy equivalent formulae in $\LLw$.

\begin{theorem}
\label{thm:weak_char}
For all $s,t \in \proc$ we have that $s \WTr t$ if{f} $\LLw(s) \eqtrace^{\dagger} \LLw(t)$.
\end{theorem}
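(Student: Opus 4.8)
The plan is to mirror, step for step, the proof of Theorem~\ref{thm:det_char} for the strong case, replacing each ingredient by its weak analogue. The two tools I would lean on are the weak version of Theorem~\ref{thm:LLs_is_cup_tracing_formula_resolution} — obtained, as indicated just after the proof of Theorem~\ref{thm:det_char}, by extending traces from $\Act^{\star}$ to $\Acttau^{\star}$ — which gives $\LLw(s) = \{1\top\} \cup \{\Psi_{\Z} \mid \Z \in \res(s)\}$ and likewise for $t$, together with Theorem~\ref{thm:equivalent_resolutions_weak}, which relates $\eqtrace^{\dagger}$ on mimicking formulae to equality of the weak trace probabilities $\pr(\Cw(\cdot,\alpha))$.

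For the forward direction I would start from $s \WTr t$ and unfold Definition~\ref{def:weak_trace_equivalence}, obtaining that every $\Z_s \in \res(s)$ is matched by some $\Z_t \in \res(t)$ with $\pr(\Cw(z_s,\alpha)) = \pr(\Cw(z_t,\alpha))$ for all $\alpha \in \Act^{\star}$, and symmetrically. Applying Theorem~\ref{thm:equivalent_resolutions_weak} to each matched pair turns these equalities into $\Psi_{\Z_s} \eqtrace^{\dagger} \Psi_{\Z_t}$, so the two matching conditions become exactly the two clauses of the lifting of $\eqtrace^{\dagger}$ to sets (the set-lifting from the paragraph after Definition~\ref{def:lifting_relation}) applied to $\{\Psi_{\Z_s} \mid \Z_s \in \res(s)\}$ and $\{\Psi_{\Z_t} \mid \Z_t \in \res(t)\}$. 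Since $1\top \eqtrace^{\dagger} 1\top$, adjoining $1\top$ to both sides preserves the relation, and the weak version of Theorem~\ref{thm:LLs_is_cup_tracing_formula_resolution} then yields $\LLw(s) \eqtrace^{\dagger} \LLw(t)$.

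For the converse I would read $\LLw(s) \eqtrace^{\dagger} \LLw(t)$ through that same weak version of Theorem~\ref{thm:LLs_is_cup_tracing_formula_resolution} and the set-lifting: for each $\Z_s \in \res(s)$ the formula $\Psi_{\Z_s} \in \LLw(s)$ is $\eqtrace^{\dagger}$-related to some $\Psi \in \LLw(t)$. The point is that this $\Psi$ is again a mimicking formula $\Psi_{\Z_t}$ of some $\Z_t \in \res(t)$ — and here I would invoke the Remark following Theorem~\ref{thm:LLs_is_cup_tracing_formula_resolution}, which identifies $1\top$ with the mimicking formula of the trivial resolution executing no action, so that even a match of the form $\Psi_{\Z_s} \eqtrace^{\dagger} 1\top$ corresponds to a genuine resolution of $t$. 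Theorem~\ref{thm:equivalent_resolutions_weak} then converts $\Psi_{\Z_s} \eqtrace^{\dagger} \Psi_{\Z_t}$ back into $\pr(\Cw(z_s,\alpha)) = \pr(\Cw(z_t,\alpha))$ for all $\alpha$, and symmetrically; these are precisely the two clauses of Definition~\ref{def:weak_trace_equivalence}, giving $s \WTr t$.

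The routine part is the translation between the resolution-matching conditions and the set-lifting clauses, which is identical in form to the strong case. The one genuinely weak-specific subtlety — and the step I expect to require the most care — is that $\eqtrace^{\dagger}$ is only an equivalence, not equality, so, unlike in Theorem~\ref{thm:det_char} where $\LL(s) = \LL(t)$ lets one cancel the common summand $\{1\top\}$, here I cannot cancel $\{1\top\}$ from both sides. Instead I must treat $1\top$ as the mimicking formula of the trivial resolution, so that it is matched within the set-lifting like any other mimicking formula; with that observation in place, everything else follows verbatim from the strong-case argument.
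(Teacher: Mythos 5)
Your proof follows the paper's argument essentially verbatim: both directions combine the weak version of Theorem~\ref{thm:LLs_is_cup_tracing_formula_resolution} with Theorem~\ref{thm:equivalent_resolutions_weak} and the set-lifting of $\eqtrace^{\dagger}$ applied to the sets of mimicking formulae. Your additional care in handling the summand $\{1\top\}$ via the trivial resolution addresses a point the paper's proof passes over silently (it simply drops $\{1\top\}$ from both sides despite $\eqtrace^{\dagger}$ not being equality), but this does not change the route.
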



\begin{proof}
($\Rightarrow$) 
Assume first that $s \WTr t$.
We aim to sow that this implies that $\LLw(s) \eqtrace^{\dagger} \LLw(t)$.
By Definition~\ref{def:weak_trace_equivalence} $s \WTr t$ implies that 
\begin{enumerate}[(i)]
\item \label{item:thm_weak_char_1}
for each resolution $\Z_s \in \res(s)$, with $z_s = \corr{\Z_s}^{-1}(s)$, there is a resolution $\Z_t \in \res(t)$, with $z_t = \corr{\Z_t}^{-1}(t)$, s.t.\ for each $\alpha \in \Act^{\star}$ we have $\pr(\Cw(z_s,\alpha)) = \pr(\Cw(z_t,\alpha))$;
\item \label{item:thm_weak_char_2}
for each resolution $\Z_t \in \res(t)$, with $z_t = \corr{\Z_t}^{-1}(t)$, there is a resolution $\Z_s \in \res(s)$, with $z_s = \corr{\Z_s}^{-1}(s)$, s.t.\ for each $\alpha \in \Act^{\star}$ we have $\pr(\Cw(z_s,\alpha)) = \pr(\Cw(z_t,\alpha))$.
\end{enumerate}
Consider any $\Z_s \in \res(s)$, with $z_s = \corr{\Z_s}^{-1}(s)$, and let $\Z_t \in \res(t)$, with $z_t = \corr{\Z_t}^{-1}(t)$, be any resolution of $t$ satisfying item~\eqref{item:thm_weak_char_1} above.
By Theorem~\ref{thm:equivalent_resolutions_weak}, $\pr(\Cw(z_s,\alpha)) = \pr(\Cw(z_t,\alpha))$ for all $\alpha \in \Act^{\star}$ implies that $\Psi_{\Z_s} \eqtrace^{\dagger} \Psi_{\Z_t}$.
More precisely, we have that
\begin{equation}
\label{eq:thm_weak_char_1}
\text{for each } \Z_s \in \res(s) \text{ there is } \Z_t \in \res(t) \text{ s.t.\ } \Psi_{\Z_s} \eqtrace^{\dagger} \Psi_{\Z_t}.
\end{equation}
Symmetrically, item~\eqref{item:thm_weak_char_2} above taken together with Theorem~\ref{thm:equivalent_resolutions_weak} gives that 
\begin{equation}
\label{eq:thm_weak_char_2}
\text{for each } \Z_t \in \res(t) \text{ there is a } \Z_s \in \res(s) \text{ s.t.\ } \Psi_{\Z_t} \eqtrace^{\dagger} \Psi_{\Z_s}. 
\end{equation}
Therefore, from Equations~\eqref{eq:thm_weak_char_1} and~\eqref{eq:thm_weak_char_2} we gather 
\begin{equation}
\label{eq:thm_weak_char_3}
\{\Psi_{\Z_s} \mid \Z_s \in \res(s)\} \eqtrace^{\dagger} \{\Psi_{\Z_t} \mid \Z_t \in \res(t)\}.
\end{equation}
By Theorem~\ref{thm:LLs_is_cup_tracing_formula_resolution} we have that $\LLw(s) = \{1\top\} \cup \{\Psi_{\Z_s} \mid \Z_s \in \res(s)\}$ and similarly $\LLw(t) = \{1\top\} \cup \{\Psi_{\Z_t} \mid \Z_t \in \res(t)\}$.
Therefore, from Equation~\eqref{eq:thm_weak_char_3} we can conclude that
$
\LLw(s) \eqtrace^{\dagger} \LLw(t).
$

($\Leftarrow$)
Assume now that $\LLw(s) \eqtrace^{\dagger} \LLw(t)$.
We aim to show that this implies that $s \WTr t$.
By Theorem~\ref{thm:LLs_is_cup_tracing_formula_resolution} we have that $\LLw(s) = \{1\top\} \cup \{ \Psi_{\Z_s} \mid \Z_s \in \res(s)\}$ and analogously $\LLw(t) = \{1\top\} \cup \{\Psi_{\Z_t} \mid \Z_t \in \res(t)\}$.
Hence, from the assumption we can infer that
$
\{\Psi_{\Z_s} \mid \Z_s \in \res(s)\} \eqtrace^{\dagger} \{\Psi_{\Z_t} \mid \Z_t \in \res(t)\}.
$

Clearly the equivalence between the two sets implies that 
\begin{itemize}
\item for each $\Z_s \in \res(s)$ there is a $\Z_t \in \res(t)$ s.t.\ $\Psi_{\Z_s} \eqtrace^{\dagger} \Psi_{\Z_t}$ and 
\item for each $\Z_t \in \res(t)$ there is a $\Z_s \in \res(s)$ s.t.\ $\Psi_{\Z_t} \eqtrace^{\dagger} \Psi_{\Z_s}$.
\end{itemize}
By applying Theorem~\ref{thm:equivalent_resolutions_weak} to the two items above we obtain that
\begin{itemize}
\item for each resolution $\Z_s \in \res(s)$, with $z_s = \corr{\Z_s}^{-1}(s)$, there is a resolution $\Z_t \in \res(t)$, with $z_t = \corr{\Z_t}^{-1}(t)$, s.t.\ for each $\alpha \in \Act^{\star}$ we have $\pr(\Cw(z_s,\alpha)) = \pr(\Cw(z_t,\alpha))$;
\item for each resolution $\Z_t \in \res(t)$, with $z_t = \corr{\Z_t}^{-1}(t)$, there is a resolution $\Z_s \in \res(s)$, with $z_s = \corr{\Z_s}^{-1}(s)$, s.t.\ for each $\alpha \in \Act^{\star}$ we have $\pr(\Cw(z_s,\alpha)) = \pr(\Cw(z_t,\alpha))$;
\end{itemize}
from which we can conclude that $s \WTr t$.
\end{proof}


\section{Logical characterization of trace metrics}
\label{sec:char_of_trace_metric}

In this section we present the logical characterization of strong and weak trace metric (resp.\ Theorem~\ref{thm:LL_metric_char} and Theorem~\ref{thm:LLw_metric_char}).
We define a suitable distance on formulae in $\LL$ (resp.\ $\LLw$) and we characterize the strong (resp.\ weak) trace metric between processes as the distance between the sets of formulae satisfied by them.


\subsection{$\LL$-characterization of strong trace metric}
\label{sec:char_metric_strong}

Firstly, we need to define a distance on trace formulae.

\begin{definition}
[Distance on $\LLt$]
\label{def:metric_on_LLt}
The function $\Dtrtd \colon \LLt \times \LLt \to [0,1]$ is defined over $\LLt$ as follows:
\[
\Dtrtd( \Phi_1, \Phi_2 ) = 
\begin{cases}
0 & \text{ if } \Phi_1 = \Phi_2 \\
1 & \text{ otherwise.} 
\end{cases}
\]
\end{definition}

\begin{proposition}
\label{prop:distance_on_LLt_is_metric}
The function $\Dtrtd$ is a $1$-bounded metric over $\LLt$.
\end{proposition}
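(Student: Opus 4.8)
The plan is to verify directly that $\Dtrtd$ satisfies the four defining conditions of a $1$-bounded metric, reasoning from the fact that $\Dtrtd$ is simply the discrete metric on $\LLt$. Since by Definition~\ref{def:metric_on_LLt} the function $\Dtrtd$ takes values only in $\{0,1\}$, the non-negativity and $1$-boundedness conditions, namely $0 \le \Dtrtd(\Phi_1,\Phi_2) \le 1$ for all $\Phi_1,\Phi_2 \in \LLt$, are immediate. The identity of indiscernibles, $\Dtrtd(\Phi_1,\Phi_2) = 0$ if and only if $\Phi_1 = \Phi_2$, holds by the very definition of $\Dtrtd$, which assigns value $0$ exactly to pairs of identical trace formulae. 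Symmetry, $\Dtrtd(\Phi_1,\Phi_2) = \Dtrtd(\Phi_2,\Phi_1)$, follows because the condition $\Phi_1 = \Phi_2$ discriminating the two cases in the definition is itself symmetric.

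The only condition requiring a genuine (if short) argument is the triangle inequality $\Dtrtd(\Phi_1,\Phi_3) \le \Dtrtd(\Phi_1,\Phi_2) + \Dtrtd(\Phi_2,\Phi_3)$, which I would establish by a case analysis on whether $\Phi_1 = \Phi_3$. If $\Phi_1 = \Phi_3$, then the left-hand side is $0$ and the inequality holds trivially, as the right-hand side is non-negative. If instead $\Phi_1 \neq \Phi_3$, then the left-hand side equals $1$, and it suffices to show that the right-hand side is at least $1$: since $\Phi_2$ cannot coincide with both $\Phi_1$ and $\Phi_3$ (otherwise $\Phi_1 = \Phi_2 = \Phi_3$ would contradict $\Phi_1 \neq \Phi_3$), at least one of $\Dtrtd(\Phi_1,\Phi_2)$ and $\Dtrtd(\Phi_2,\Phi_3)$ equals $1$, so their sum is at least $1$.

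I expect no real obstacle: the statement is the standard fact that the discrete metric is a metric, and the triangle inequality case analysis above is the only place where one does more than read the property off the definition.
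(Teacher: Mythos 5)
Your proposal is correct and takes essentially the same route as the paper, which simply observes that $\Dtrtd$ is the discrete metric over $\LLt$ and leaves the standard verification implicit; you have merely spelled out that verification, including the only nontrivial step (the triangle inequality by case analysis on $\Phi_1 = \Phi_3$).
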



\begin{proof}
The thesis follows by noticing that $\Dtrtd$ is the discrete metric over $\LLt$.
\end{proof}


To define a distance over trace distribution formulae we see them as probability distribution over trace formulae and we define the distance over $\LLd$ as the Kantorovich lifting of the metric $\Dtrtd$.

\begin{definition}
[Distance on $\LLd$]
\label{def:metric_on_LLd}
The function $\Dtrdd \colon \LLd \times \LLd \to [0,1]$ is defined over $\LLd$ as follows:
\[
\Dtrdd(\Psi_1, \Psi_2) = \Kantorovich(\Dtrtd)(\Psi_1, \Psi_2).
\]
\end{definition}

\begin{proposition}
\label{prop:distance_on_LLd_is_metric}
The function $\Dtrdd$ is a $1$-bounded metric over $\LLd$.
\end{proposition}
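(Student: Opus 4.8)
The plan is to exploit the fact that $\Dtrdd$ is, by Definition~\ref{def:metric_on_LLd}, nothing but the Kantorovich lifting $\Kantorovich(\Dtrtd)$ of the distance $\Dtrtd$ on trace formulae, together with the classical fact recorded in Definition~\ref{def:Kantorovich} that the Kantorovich lifting of a $1$-bounded metric is again a $1$-bounded metric. Hence the argument reduces to checking that the hypotheses of Definition~\ref{def:Kantorovich} are met and that its conclusion transfers from $\ProbDist{\LLt}$ to $\LLd$.

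First I would recall, by Proposition~\ref{prop:distance_on_LLt_is_metric}, that $\Dtrtd$ is a $1$-bounded metric over $\LLt$, so it is a legitimate argument for the Kantorovich functional. Next I would make explicit the identification of each trace distribution formula $\Psi = \bigoplus_{i \in I} r_i \Phi_i \in \LLd$ with the finitely-supported distribution in $\ProbDist{\LLt}$ that assigns weight $r_i$ to the (pairwise distinct) trace formulae $\Phi_i$; this embeds $\LLd$ as a subset of $\ProbDist{\LLt}$. By Remark~\ref{rmk:equivalence_is_equality}, equality of formulae in $\LLd$ is defined to be equality of the associated distributions, so the identification is well defined and injective, which is exactly what is needed for the identity of indiscernibles to carry over.

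With these observations in place, Definition~\ref{def:Kantorovich} gives that $\Kantorovich(\Dtrtd)$ is a $1$-bounded metric on the whole of $\ProbDist{\LLt}$, and since the restriction of a metric to any subset is again a metric, the restriction of $\Kantorovich(\Dtrtd)$ to $\LLd \times \LLd$, namely $\Dtrdd$, is a $1$-bounded metric over $\LLd$. I do not expect a genuine obstacle here, as symmetry, the triangle inequality and the $1$-boundedness are all inherited directly from the Kantorovich lifting; the only point deserving care is the bookkeeping that equality of $\LLd$-formulae coincides with equality of the corresponding distributions, so that $\Dtrdd(\Psi_1,\Psi_2)=0$ forces $\Psi_1 = \Psi_2$, which is precisely the content of Remark~\ref{rmk:equivalence_is_equality}.
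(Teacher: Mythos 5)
Your proposal is correct, but it takes a genuinely different route from the paper. You reduce the statement to the classical metatheorem that the Kantorovich lifting of a $1$-bounded metric on $X$ is a $1$-bounded metric on $\ProbDist{X}$, and then observe that the injective identification of trace distribution formulae with finitely supported distributions over $\LLt$ (justified by the pairwise-distinctness of the $\Phi_i$ and by Remark~\ref{rmk:equivalence_is_equality}) transfers the conclusion to $\LLd$. The paper instead proves all the axioms from scratch: the separation axiom is obtained by analysing an optimal matching $\bar{\w}$ and showing that, since $\Dtrtd$ is discrete and the $\Phi_i$ occurring in a trace distribution formula are pairwise distinct, $\bar{\w}$ must match equal formulae with equal weights, forcing $\Psi_1 = \Psi_2$ as distributions; the triangle inequality is proved via the explicit gluing $f(i,j,h) = \w_{1,3}(\Phi_i,\Phi_h)\,\w_{2,3}(\Phi_j,\Phi_h)/r_h$ of two optimal matchings; and $1$-boundedness is a direct estimate. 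Your approach buys brevity and modularity, and it correctly isolates the only point that is specific to $\LLd$ (injectivity of the identification, hence identity of indiscernibles). The one caveat is that Definition~\ref{def:Kantorovich} merely \emph{asserts} in passing that $\Kantorovich(d)$ is a $1$-bounded metric; it does not prove it, and the paper evidently does not take it for granted. So if the classical fact is not available as a citable black box, your argument still owes exactly the two non-trivial verifications the paper carries out, namely the gluing construction for the triangle inequality and the zero-distance analysis for separation; everything else (symmetry, boundedness) is indeed immediate.
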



\begin{proof}
First we prove that $\Dtrdd$ is a metric over $\LLd$, namely that
\begin{enumerate}
\item \label{item:identity}
$\Dtrdd(\Psi_1,\Psi_2) = 0$ if{f} $\Psi_1 = \Psi_2$;
\item \label{item:symmetry}
$\Dtrdd(\Psi_1,\Psi_2) = \Dtrdd(\Psi_2,\Psi_1)$;
\item \label{item:triang_ineq}
$\Dtrdd(\Psi_1,\Psi_2) \le \Dtrdd(\Psi_1,\Psi_3) + \Dtrdd(\Psi_3,\Psi_2)$.
\end{enumerate}

\underline{\emph{Proof of item~\ref{item:identity}}}

($\Leftarrow$) Assume first that $\Psi_1 = \Psi_2$.
Then $\Dtrdd(\Psi_1,\Psi_2) = 0$ immediately follows from Definition~\ref{def:metric_on_LLd}, since the Kantorovich metric is a pseudometric.

($\Rightarrow$) Assume now that $\Dtrdd(\Psi_1,\Psi_2) = 0$. 
We aim to show that this implies that $\Psi_1 = \Psi_2$.
Assume wlog. that $\Psi_1 = \bigoplus_{i \in I} r_i \Phi_i$ and that $\Psi_2 = \bigoplus_{j \in J} r_j \Phi_j$.
Then we have
\begin{equation}
\label{eq:Kantorovich_0}
\Dtrdd(\Psi_1, \Psi_2) = \min_{\w \in \W(\Psi_1,\Psi_2)} \sum_{i \in I, j \in J} \w(\Phi_i,\Phi_j) \Dtrtd(\Phi_i,\Phi_j)
\end{equation}
and the distance in Equation~\eqref{eq:Kantorovich_0} is $0$ if, given the optimal matching $\bar{\w}$ 
\[
\bar{\w}(\Phi_i,\Phi_j) > 0 \text{ if{f} } \Dtrtd(\Phi_i,\Phi_j) = 0.
\]
By Proposition~\ref{prop:distance_on_LLt_is_metric} we have that $\Dtrtd(\Phi_i,\Phi_j) = 0$ if{f} $\Phi_i = \Phi_j$.
In particular, let $\Phi_{j_i}$ be any formula in $\{\Phi_j \mid j \in J\}$ s.t.\ $\Phi_i = \Phi_{j_i}$.
Since by Definition~\ref{def:logic_LL} the trace formulae $\Phi_i$ occurring in $\Psi_1$ are pairwise distinct and, analogously, the trace formulae $\Phi_j$ occurring in $\Psi_2$ are pairwise distinct, we gather that
\[
\begin{array}{rlr}
r_i = \sum_{j \in J} \bar{\w}(\Phi_i,\Phi_j) = \sum_{j_i \in J} \bar{\w}(\Phi_i,\Phi_{j_i}) = \bar{\w}(\Phi_i,\Phi_{j_i}) \\
r_j = \sum_{i \in I} \bar{\w}(\Phi_i,\Phi_j) = \sum_{i_j \in I} \bar{\w}(\Phi_{i_j},\Phi_j) = \bar{\w}(\Phi_{i_j},\Phi_j).
\end{array}
\]
Therefore we can infer that $\Psi_1 = \Psi_2$ as probability distributions over $\LLt$.


\underline{\emph{Proof of item~\ref{item:symmetry}}}
Immediate from the discrete metric and the matching being both symmetric.

\underline{\emph{Proof of item~\ref{item:triang_ineq}}}
Assume wlog. that $\Psi_1 = \bigoplus_{i \in I} r_i \Phi_i$, $\Psi_2 = \bigoplus_{j \in J} r_j \Phi_j$ and $\Psi_3 = \bigoplus_{h \in H} r_h \Phi_h$.

Let $\w_{1,3} \in \W(\Psi_1, \Psi_3)$ be an optimal matching for $\Psi_1, \Psi_3$, namely
\[
\Dtrdd(\Psi_1, \Psi_3) = \min_{\w \in \W(\Psi_1,\Psi_3)} \sum_{i \in I \atop h \in H} \w(\Phi_i,\Phi_h) \Dtrtd(\Phi_i,\Phi_h) = \sum_{i \in I \atop h \in H} \w_{1,3}(\Phi_i,\Phi_h) \Dtrtd(\Phi_i,\Phi_h)
\]
and let $\w_{2,3}\in \W(\Psi_2, \Psi_3)$ be an optimal matching for $\Psi_2, \Psi_3$, that is
\[
\Dtrdd(\Psi_2,\Psi_3) = \min_{\w \in \W(\Psi_2,\Psi_3)} \sum_{j \in J \atop h \in H} \w(\Phi_j,\Phi_h) \Dtrtd(\Phi_j,\Phi_h) = \sum_{j \in J \atop h \in H} \w_{2,3}(\Phi_j, \Phi_h) \Dtrtd(\Phi_j,\Phi_h).
\]
Consider now the function $f \colon I \times J \times H \to [0,1]$ defined by 
\[
f(i,j,h) = \w_{1,3}(\Phi_i, \Phi_h) \cdot \w_{2,3}(\Phi_j, \Phi_h) \cdot \frac{1}{r_h}.
\]
Then, we have $\sum_{j \in J} f(i,j,h) = \w_{1,3}(\Phi_i, \Phi_h)$ namely the projection of $f$ over the first and third components coincides with the optimal matching for $\Psi_1, \Psi_3$.
Similarly, $\sum_{i \in I} f(i,j,h) = \w_{2,3}(\Phi_j, \Phi_h)$ namely the projection of $f$ over the second and third components coincides with the optimal matching for $\Psi_2, \Psi_3$.
Moreover, it holds that $\sum_{j \in J,\, h \in H} f(i,j,h) = r_i$ and $\sum_{i \in I,\, h \in H} f(i,j,h) = r_j$, that is $f(i,j,h)$ is a matching in $\W(\Psi_1,\Psi_2)$.
Therefore,
\[
\begin{array}{rlr}
\Dtrdd(\Psi_1, \Psi_2) ={} & \min_{\w \in \W(\Psi_1,\Psi_2)} \sum_{i \in I,\, j \in J} \w(\Phi_i,\Phi_j) \Dtrtd(\Phi_i,\Phi_j)  & \text{(by definition)}\\
\le{} & \sum_{i \in I,\, j \in J,\, h \in H} f(i,j,h) \Dtrtd(\Phi_i,\Phi_j)  & \text{(by construction of $f$)}\\
\le{} & \sum_{i \in I,\, j \in J,\, h \in H} f(i,j,h) \big( \Dtrtd(\Phi_i, \Phi_h)\, 
+ \,\Dtrtd(\Phi_j,\Phi_h) \big) & \text{(since $\Dtrtd$ is a metric)}\\
={} & \sum_{i \in I,\, j \in J,\, h \in H} f(i,j,h) \Dtrtd(\Phi_i, \Phi_h) + \\
 & \sum_{i \in I,\, j \in J,\, h \in H} f(i,j,h) \Dtrtd(\Phi_j,\Phi_h) \\
={} & \sum_{i \in I,\, h \in H} \Big(\sum_{j \in J} f(i,j,h) \Big) \cdot \Dtrtd(\Phi_i, \Phi_h) + \\
&  \sum_{j \in J,\, h \in H} \Big( \sum_{i \in I} f(i,j,h) \Big) \cdot \Dtrtd(\Phi_j,\Phi_h) \\
={} & \sum_{i \in I,\, h \in H} \w_{1,3}(\Phi_i, \Phi_h) \Dtrtd(\Phi_i, \Phi_h) +\\
 & \sum_{j \in J,\, h \in H} \w_{2,3}(\Phi_j, \Phi_h) \Dtrtd(\Phi_j, \Phi_h) & \text{(by construction of $f$)}\\
={} & \Kantorovich(\Dtrtd)(\Psi_1, \Psi_3) + \Kantorovich(\Dtrtd)(\Psi_3, \Psi_2) & \text{(by definition of $\w_{1,3}, \w_{2,3}$)}
\\
={} & \Dtrdd(\Psi_1, \Psi_3) + \Dtrdd(\Psi_3, \Psi_2)& \text{(by definition)}.
\end{array}
\]

To conclude, we need to show that $\Dtrdd$ is $1$-bounded, namely that for each $\Psi_1,\Psi_2 \in \LLd$ we have $\Dtrdd(\Psi_1,\Psi_2) \le 1$.
Assume wlog that $\Psi_1 = \bigoplus_{i \in I} r_i \Phi_i$ and $\Psi_2 = \bigoplus_{j \in J} r_j \in \Phi_j$.
We have
\[
\begin{array}{rlr}
\Dtrdd(\Psi_1,\Psi_2) ={} & \min_{\w \in \W(\Psi_1,\Psi_2)} \sum_{i \in I, j \in J} \w(\Phi_i, \Phi_j) \Dtrtd(\Phi_i, \Phi_j) \\
\le & \sum_{i \in I, j \in J} \w(\Phi_i, \Phi_j) \Dtrtd(\Phi_i, \Phi_j) & \text{(for an arbitrary $\w$)} \\
\le & \sum_{i \in I, j \in J} \w(\Phi_i, \Phi_j) & \text{($\Dtrtd$ is either $1$ or $0$)} \\
={} & 1 & \text{($\w$ is probability distribution)}.
\end{array}
\]
\end{proof}


\begin{example}
\label{ex:Dtrdd}
Consider the formulae $\Psi_1 = 0.6 \diam{a}\diam{b}\top \oplus 0.4 \diam{a}\diam{c}\top$ and $\Psi_2 = 0.7 \diam{a}\diam{c}\top \oplus 0.3 \diam{a}\diam{b}\top$.
We have that 
\[
\begin{array}{rlr}
\Dtrdd(\Psi_1,\Psi_2) ={} & \min_{\w \in \W(\Psi_1,\Psi_2)} \sum_{\Phi \in \support(\Psi_1) \atop \Phi' \in \support(\Psi_2)} \w(\Phi,\Phi') \Dtrtd(\Phi,\Phi') \\
\le & 0.3 \Dtrtd(\diam{a}\diam{b}\top, \diam{a}\diam{b}\top) + 0.4 \Dtrtd(\diam{a}\diam{c}\top, \diam{a}\diam{c}\top) + 0.3 \Dtrtd(\diam{a}\diam{b}\top, \diam{a}\diam{c}\top) \\
={} & 0.3 \cdot 0 + 0.4 \cdot 0 + 0.3 \cdot 1 \\
={} & 0.3
\end{array}
\]
\end{example}

Next result derives from our characterization of trace distribution equivalence of resolutions (Theorem~\ref{thm:equivalent_resolutions}).

\begin{theorem}
\label{thm:kernel_of_Dtrdd}
The kernel of $\Dtrdd$ is trace distribution equivalence of resolutions.
\end{theorem}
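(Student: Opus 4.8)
The plan is to read off this result as a direct consequence of the fact that $\Dtrdd$ is a genuine metric, combined with the characterization of trace distribution equivalence already obtained in Theorem~\ref{thm:equivalent_resolutions}. Concretely, I identify each resolution $\Z \in \res(\proc)$ with its mimicking formula $\Psi_{\Z}$, so that the kernel of $\Dtrdd$ induces a relation on resolutions via $\Z_1 \sim \Z_2$ iff $\Dtrdd(\Psi_{\Z_1}, \Psi_{\Z_2}) = 0$, and I show that this relation coincides with trace distribution equivalence of $\Z_1$ and $\Z_2$.

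First I would invoke Proposition~\ref{prop:distance_on_LLd_is_metric}, which establishes that $\Dtrdd$ is a $1$-bounded metric on $\LLd$. In particular, the identity-of-indiscernibles part of that proof gives that $\Dtrdd(\Psi_1, \Psi_2) = 0$ holds if and only if $\Psi_1 = \Psi_2$ as probability distributions over $\LLt$. Hence the kernel of $\Dtrdd$ is exactly equality on trace distribution formulae, and specialised to mimicking formulae this reads $\Dtrdd(\Psi_{\Z_1}, \Psi_{\Z_2}) = 0$ iff $\Psi_{\Z_1} = \Psi_{\Z_2}$.

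Next I would chain this with the previously established correspondences. By Theorem~\ref{thm:equivalent_resolutions}, $\Psi_{\Z_1} = \Psi_{\Z_2}$ iff $\pr(\C(z_1, \alpha)) = \pr(\C(z_2, \alpha))$ for all traces $\alpha \in \Act^{\star}$, where $z_1, z_2$ are the initial states of $\Z_1, \Z_2$. By Proposition~\ref{prop:equivalence_S95tr_BdNL14_strong}, this latter condition is in turn equivalent to $\TD_{\Z_1} = \TD_{\Z_2}$, i.e.\ to $\Z_1$ and $\Z_2$ being trace distribution equivalent. Composing these equivalences yields $\Dtrdd(\Psi_{\Z_1}, \Psi_{\Z_2}) = 0$ iff $\Z_1, \Z_2$ are trace distribution equivalent, which is the claim.

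I do not expect any genuine obstacle here: all the substantive work lies in Theorem~\ref{thm:equivalent_resolutions} (that equality of mimicking formulae captures equality of trace probabilities) and in the metric property of the Kantorovich lifting (Proposition~\ref{prop:distance_on_LLd_is_metric}). The only point requiring a word of care is the reading of the statement itself: since $\Dtrdd$ is already a metric, its kernel on $\LLd$ is literally equality of formulae, so the content of the theorem is precisely that, under the identification of a resolution with its mimicking formula, this equality on formulae transports to trace distribution equivalence on resolutions.
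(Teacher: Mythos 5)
Your proposal is correct and follows essentially the same route as the paper: the paper's proof also chains the identity-of-indiscernibles property of $\Dtrdd$ from Proposition~\ref{prop:distance_on_LLd_is_metric} with the characterization of equality of mimicking formulae in Theorem~\ref{thm:equivalent_resolutions}. Your explicit extra step through Proposition~\ref{prop:equivalence_S95tr_BdNL14_strong}, connecting equality of the trace probabilities $\pr(\C(z_i,\alpha))$ to equality of the trace distributions $\TD_{\Z_i}$, is a slightly more careful spelling-out of what the paper leaves implicit, but it is the same argument.
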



\begin{proof}
Let $s,t \in \proc$ and consider $\Z_s \in \res(s)$, with $z_s = \corr{\Z_s}^{-1}(s)$, and $\Z_t \in \res(t)$, with $z_t = \corr{\Z_s}^{-1}(t)$.
By Theorem~\ref{thm:equivalent_resolutions} we have that $z_s \STr z_t$ if{f} $\Psi_{\Z_s} = \Psi_{\Z_t}$.
Since by Proposition~\ref{prop:distance_on_LLd_is_metric} $\Dtrdd$ is a metric on $\LLd$, we have that $\Dtrdd(\Psi_{\Z_s}, \Psi_{\Z_t}) = 0$ if{f} $\Psi_{\Z_s} = \Psi_{\Z_t}$.
Thus we can conclude that
\[
z_s \STr z_t \quad \text{ if{f} } \quad \Psi_{\Z_s} = \Psi_{\Z_t} \quad \text{ if{f} } \quad \Dtrdd(\Psi_{\Z_s}, \Psi_{\Z_t}) = 0.
\]
\end{proof}


We lift the distance over formulae to a distance over processes as the Hausdorff distance between the sets of formulae satisfied by them.

\begin{definition}
\label{def:LL_logical_distance}
The $\LL$-\emph{distance} over processes $\DLLd \colon \proc \times \proc \to [0,1]$ is defined, for all $s,t \in \proc$, by 
\[
\DLLd(s,t) = \Hausdorff(\Dtrdd)(\LL(s), \LL(t)).
\]
\end{definition}

\begin{proposition}
\label{prop:DLLd_is_metric}
The mapping $\DLLd$ is a $1$-bounded pseudometric over $\proc$.
\end{proposition}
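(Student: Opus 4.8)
The plan is to show that $\DLLd$ inherits the pseudometric axioms and $1$-boundedness directly from the Hausdorff lifting functional applied to the $1$-bounded metric $\Dtrdd$. By Definition~\ref{def:LL_logical_distance} we have $\DLLd(s,t) = \Hausdorff(\Dtrdd)(\LL(s),\LL(t))$, and by Proposition~\ref{prop:distance_on_LLd_is_metric} the function $\Dtrdd$ is a genuine $1$-bounded metric over $\LLd$. Hence the task reduces to verifying that the Hausdorff lifting of a $1$-bounded metric is a $1$-bounded pseudometric when applied to the sets of formulae satisfied by processes.

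First I would record that each set $\LL(s)$ is nonempty, since by Theorem~\ref{thm:LLs_is_cup_tracing_formula_resolution} it contains at least the formula $1\top$. This is the key point that lets us avoid the degenerate conventions $\inf\emptyset = 1$ and $\sup\emptyset = 0$ of Definition~\ref{def:Hausdorff}, and in particular guarantees that the intermediate element needed for the triangle inequality always exists.

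Then I would check the remaining requirements. For \emph{reflexivity}, for any $\Psi \in \LL(s)$ we have $\inf_{\Psi' \in \LL(s)} \Dtrdd(\Psi,\Psi') \le \Dtrdd(\Psi,\Psi) = 0$, so both $\sup$-$\inf$ terms in $\Hausdorff(\Dtrdd)(\LL(s),\LL(s))$ vanish and $\DLLd(s,s)=0$. \emph{Symmetry} is immediate from the definition of $\Hausdorff$ as the maximum of the two symmetric $\sup$-$\inf$ expressions together with the symmetry of $\Dtrdd$. For \emph{$1$-boundedness}, every value of $\Dtrdd$ lies in $[0,1]$ and, the sets being nonempty, no empty-set convention is invoked; hence every $\sup$-$\inf$ term, and thus their maximum, lies in $[0,1]$.

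The main step, and the only nontrivial one, is the triangle inequality $\DLLd(s,t) \le \DLLd(s,r) + \DLLd(r,t)$. I would prove it by the standard argument for Hausdorff distances: fixing $\Psi \in \LL(s)$, for every $\Psi' \in \LL(r)$ and $\Psi'' \in \LL(t)$ the triangle inequality for $\Dtrdd$ gives $\Dtrdd(\Psi,\Psi'') \le \Dtrdd(\Psi,\Psi') + \Dtrdd(\Psi',\Psi'')$; taking the infimum over $\Psi''\in\LL(t)$, then the infimum over $\Psi'\in\LL(r)$ (where nonemptiness of $\LL(r)$ is used), and finally the supremum over $\Psi\in\LL(s)$, bounds the first $\sup$-$\inf$ component of $\DLLd(s,t)$ by $\DLLd(s,r)+\DLLd(r,t)$. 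The symmetric argument bounds the second component, and taking the maximum yields the claim. I expect this transfer of the triangle inequality through the nested optimizations to be the part requiring the most care, although it is entirely routine once the nonemptiness of the middle set is secured.
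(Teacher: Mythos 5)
Your proof is correct and follows essentially the same route as the paper: reflexivity, symmetry and $1$-boundedness read off directly from the definition and the $1$-boundedness of $\Dtrdd$, and the triangle inequality is obtained by pushing the triangle inequality of $\Dtrdd$ through the nested $\sup$-$\inf$ structure of the Hausdorff lifting. The only cosmetic difference is that the paper realizes the infima up to explicit $\varepsilon_1,\varepsilon_2$ and lets them tend to $0$, whereas you manipulate the infima directly; your explicit remark that $\LL(s)$ is nonempty (so the $\inf\emptyset$/$\sup\emptyset$ conventions are never invoked) is a small point the paper leaves implicit.
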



\begin{proof}
First we show that $\DLLd$ is a pseudometric over $\proc$,
namely that for each $s,t,u \in \proc$
\begin{flalign}
& \label{eq:identity}
\DLLd(s,s) = 0 \\
& \label{eq:symmetry}
\DLLd(s,t) = \DLLd(t,s) \\
& \label{eq:triang_ineq}
\DLLd(s,t) \le \DLLd(s,u) + \DLLd(u,t)
\end{flalign}
Equation~\eqref{eq:identity} and Equation~\eqref{eq:symmetry} are immediate from the definition of $\DLLd$ (Definition~\ref{def:LL_logical_distance}).

Let us prove Equation~\eqref{eq:triang_ineq}.
Firstly, we notice that from the definition of Hausdorff distance we have
\[
\DLLd(s,t) = \max \{ \sup_{\Psi \in \LL(s)} \, \inf_{\Psi' \in \LL(t)} \, \Dtrdd(\Psi,\Psi'), \; \sup_{\Psi' \in \LL(t)} \, \inf_{\Psi \in \LL(s)} \, \Dtrdd(\Psi,\Psi') \}.
\] 
Thus, for all $s,t,u \in \proc$ we can infer that
\begin{flalign}
& \label{eq:Hausdorff_su}
\sup_{\Psi \in \LL(s)}\, \inf_{\Psi'' \in \LL(u)} \, \Dtrdd(\Psi, \Psi'') \le \DLLd(s,u)\\
& \label{eq:Hausdorff_ut}
\sup_{\Psi'' \in \LL(u)}\, \inf_{\Psi' \in \LL(t)} \, \Dtrdd(\Psi'', \Psi') \le \DLLd(u,t).
\end{flalign}
As a first step, we aim to show that
\begin{equation}
\label{eq:limited_sup_st}
\sup_{\Psi \in \LL(s)}\, \inf_{\Psi' \in \LL(t)} \, \Dtrdd(\Psi, \Psi') \le \DLLd(s,u) + \DLLd(u,t).
\end{equation}
For sake of simplicity, we index formulae in $\LL(s)$ by indexes in the set $J$, formulae in $\LL(t)$ by indexes in set $I$ and formulae in $\LL(u)$ by indexes in $H$.
By definition of infimum we have that for each $\varepsilon_1 > 0$
\begin{equation}
\label{eq:epsilon_su}
\text{for each } \Psi_j \in \LL(s) \text{ there is a } \Psi_{h_{j}} \in \LL(u) \text{ s.t. } \Dtrdd(\Psi_j, \Psi_{h_j}) < \inf_{\Psi_h \in \LL(u)} \,\Dtrdd(\Psi_j, \Psi_h) + \varepsilon_1
\end{equation}  
and analogously for each $\varepsilon_2 > 0$
\begin{equation}
\label{eq:epsilon_ut}
\text{for each } \Psi_h \in \LL(u) \text{ there is a } \Psi_{i_h} \in \LL(t) \text{ s.t. } \Dtrdd(\Psi_h, \Psi_{i_h}) < \inf_{\Psi_i \in \LL(t)} \, \Dtrdd(\Psi_h, \Psi_i) + \varepsilon_2.
\end{equation}  
In particular given $\Psi_j \in \LL(s)$ let $\Psi_{h_j} \in \LL(u)$ be the index realizing Equation~\eqref{eq:epsilon_su}, with respect to $\varepsilon_1$, and let $\Psi_{i_{h_j}} \in \LL(t)$ be the index realizing Equation~\eqref{eq:epsilon_ut} with respect to $\Psi_{h_j}$ and $\varepsilon_2$.
Then we have 
\[
\begin{array}{rlr}
& \Dtrdd(\Psi_j, \Psi_{i_{h_j}}) \\
\le & \Dtrdd(\Psi_j, \Psi_{h_j}) + \Dtrdd(\Psi_{h_j}, \Psi_{i_{h_j}}) \\
< &  \big( \inf_{\Psi_h \in \LL(u)} \, \Dtrdd(\Psi_j, \Psi_h) + \varepsilon_1 \big) + \big( \inf_{\Psi_i \in \LL(t)} \, \Dtrdd(\Psi_{h_j}, \Psi_i) + \varepsilon_2 \big) & \text{(Eq.~\ref{eq:epsilon_su},\ref{eq:epsilon_ut})}\\
\le &  \big( \sup_{\Psi_j \in \LL(s)}\, \inf_{\Psi_h \in \LL(u)} \, \Dtrdd(\Psi_j, \Psi_h) + \varepsilon_1 \big) + \big( \sup_{\Psi_h \in \LL(u)}\, \inf_{\Psi_i \in \LL(t)} \, \Dtrdd(\Psi_h, \Psi_i) + \varepsilon_2 \big)
\end{array}
\]
from which we gather
\[
\inf_{\Psi_i \in \LL(t)} \Dtrdd(\Psi_j, \Psi_i) \le \Dtrdd(\Psi_j, \Psi_{i_{h_j}}) < \sup_{\Psi_j \in \LL(s)}\, \inf_{\Psi_h \in \LL(u)} \, \Dtrdd(\Psi_j, \Psi_h) + \sup_{\Psi_h \in \LL(u)}\, \inf_{\Psi_i \in \LL(t)} \, \Dtrdd(\Psi_h, \Psi_i) + \varepsilon_1 + \varepsilon_2.
\]
Thus, since $j$ was arbitrary, we obtain
\[
\sup_{\Psi_j \in \LL(s)}\, \inf_{\Psi_i \in \LL(t)}\, \Dtrdd(\Psi_j, \Psi_i) \le \sup_{\Psi_j \in \LL(s)}\, \inf_{\Psi_h \in \LL(u)} \, \Dtrdd(\Psi_j, \Psi_h) + \sup_{\Psi_h \in \LL(u)}\, \inf_{\Psi_i \in \LL(t)} \, \Dtrdd(\Psi_h, \Psi_i) + \varepsilon_1 + \varepsilon_2
\]
and since this relation holds for any $\varepsilon_1$ and $\varepsilon_2$ we can conclude that
\[
\sup_{\Psi_j \in \LL(s)}\, \inf_{\Psi_i \in \LL(t)}\, \Dtrdd(\Psi_j, \Psi_i) \le \sup_{\Psi_j \in \LL(s)}\, \inf_{\Psi_h \in \LL(u)} \, \Dtrdd(\Psi_j, \Psi_h) +  \sup_{\Psi_h \in \LL(u)}\, \inf_{\Psi_i \in \LL(t)} \, \Dtrdd(\Psi_h, \Psi_i).
\]
Then, by the inequalities in Equation~\eqref{eq:Hausdorff_su} and Equation~\eqref{eq:Hausdorff_ut} we can conclude that
\[
\sup_{\Psi_j \in \LL(s)}\, \inf_{\Psi_i \in \LL(t)}\, \Dtrdd(\Psi_j,\Psi_i) \le \DLLd(s,u) + \DLLd(u, t)
\]
and thus Equation~\eqref{eq:limited_sup_st} holds.
Switching the roles of $s$ and $t$ in the steps above allows us to infer
\begin{equation}
\label{eq:limited_sup_ts}
\sup_{\Psi_i \in \LL(t)}\, \inf_{\Psi_j \in \LL(s)}\, \Dtrdd(\Psi_j,\Psi_i) \le \DLLd(s, u) + \DLLd(u,t).
\end{equation}
Finally, we have
\[
\begin{array}{rlr}
\DLLd(s, t) ={} & \max\{\sup_{\Psi_j \in \LL(s)}\, \inf_{\Psi_i \in \LL(t)}\, \Dtrdd(\Psi_j, \Psi_i), \sup_{\Psi_i \in \LL(t)}\, \inf_{\Psi_j \in \LL(s)}\, \Dtrdd(\Psi_j, \Psi_i)\} \\
\le{} & \DLLd(s,u) + \DLLd(u,t)
\end{array}
\]
where the last relation follows by Equations~\eqref{eq:limited_sup_st} and \eqref{eq:limited_sup_ts}.

To conclude, we need to show that $\DLLd$ is $1$-bounded.
We recall that by Proposition~\ref{prop:distance_on_LLd_is_metric}, $\Dtrdd$ is $1$-bounded.
We have
\[
\begin{array}{rlr}
\DLLd(s,t) ={} & \Hausdorff(\Dtrdd)(\LL(s), \LL(t)) \\
={} & \max \left\{ \sup_{\Psi_i \in \LL(s)}\, \inf_{\Psi_j \in \LL(t)}\, \Dtrdd(\Psi_i, \Psi_j),\; \sup_{\Psi_j \in \LL(t)}\, \inf_{\Psi_i \in \LL(s)}\, \Dtrdd(\Psi_i, \Psi_j) \right\} \\
\le & \max \{ 1,\,1 \} \\
={} & 1.
\end{array}
\]
\end{proof}


\begin{proposition}
\label{prop:LLs_closure}
Let $s \in \proc$.
The set $\LL(s)$ is a closed subset of $\LL$ wrt.\ the topology induced by $\Dtrdd$.
\end{proposition}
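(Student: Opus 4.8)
The plan is to prove closedness sequentially: since $(\LLd,\Dtrdd)$ is a metric space, it suffices to show that whenever a sequence $(\Psi_n)_n$ of formulae in $\LL(s)$ converges to some $\Psi\in\LLd$, the limit $\Psi$ again belongs to $\LL(s)$. First I would rewrite $\LL(s)$ through Theorem~\ref{thm:LLs_is_cup_tracing_formula_resolution} as $\{1\top\}\cup\{\Psi_{\Z}\mid \Z\in\res(s)\}$. If infinitely many $\Psi_n$ equal $1\top$, then that constant subsequence forces $\Psi=1\top\in\LL(s)$; otherwise I may assume $\Psi_n=\Psi_{\Z_n}$ for resolutions $\Z_n\in\res(s)$ with initial states $z_n$.

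Next I would exploit that $\Dtrtd$ is the discrete metric (Definition~\ref{def:metric_on_LLt}), so its Kantorovich lifting $\Dtrdd$ (Definition~\ref{def:metric_on_LLd}) coincides with the total-variation distance $\Dtrdd(\Psi,\Psi')=\tfrac12\sum_{\Phi}|\Psi(\Phi)-\Psi'(\Phi)|$ on distributions over $\LLt$. Consequently, convergence $\Psi_n\to\Psi$ yields two facts: the weight $\Psi_n(\Phi)$ tends to $\Psi(\Phi)$ for every trace formula $\Phi$, and the \emph{escaping mass} $\epsilon_n:=\sum_{\Phi\notin\support(\Psi)}\Psi_n(\Phi)$ tends to $0$. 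Since $\Psi\in\LLd$ has finite support by Definition~\ref{def:logic_LL}, I may write $S=\support(\Psi)$, a finite set of tracing formulae, and set $D=\depth{\Psi}$, the maximal depth occurring in $S$.

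The core step is a compactness argument on the $\Z_n$. By image-finiteness (so that, up to depth $D$, only finitely many transitions are available), the unfolding of $s$ up to depth $D$ is a finite tree with finitely many transition choices, hence resolutions admit only finitely many distinct depth-$D$ prefixes, where a prefix records the transitions down to depth $D$ together with, for each depth-$D$ node, a bit telling whether the resolution stops or continues there. Therefore some prefix is shared by infinitely many $\Z_n$, and I pass to that subsequence. Along it, for every $\Phi=\Phi_\beta\in S$ (so $\depth{\Phi}\le D$) the weight $\Psi_{\Z_n}(\Phi_\beta)=\pr(\C_{\max}(z_n,\beta))$ is determined by the common prefix, hence constant in $n$, and so is the total mass $\mu$ surviving beyond depth $D$. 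As every trace longer than $D$ lies outside $S$, the escaping mass along the subsequence is the constant $\epsilon=a+\mu$, where $a$ collects the mass placed on formulae of depth $\le D$ lying outside $S$. A constant subsequence of the null sequence $(\epsilon_n)$ must vanish, forcing $a=\mu=0$: no mass escapes onto short formulae outside $S$, and none continues past depth $D$. Combined with the constant weights on $S$ equalling their limits $\Psi(\Phi)$, this gives $\Psi_{\Z_n}=\Psi$ for each $n$ in the subsequence, whence $\Psi=\Psi_{\Z_{n_0}}\in\LL(s)$.

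The main obstacle I anticipate is the boundary bookkeeping at depth $D$: one must account precisely for the mass that, in $\Z_n$, terminates at length $<D$ on a trace outside $S$, terminates exactly at length $D$, or keeps running beyond $D$, and verify that all three are fixed along the shared-prefix subsequence. The clean payoff is that vanishing escaping mass together with this constancy pins the subsequence exactly onto $\Psi$, so that no new resolution need be constructed; the only external ingredient is the finiteness of the depth-$D$ unfolding, which is precisely where image-finiteness (finite branching up to depth $D$) enters.
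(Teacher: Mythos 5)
Your overall route coincides with the paper's: both proofs reduce closedness to sequential closedness, decompose $\LL(s)$ via Theorem~\ref{thm:LLs_is_cup_tracing_formula_resolution}, and exploit the discreteness of $\Dtrtd$ to force a convergent sequence of mimicking formulae to stabilize onto its limit. You differ in how stabilization is obtained. The paper works formula-by-formula: it claims the supports $I_n$ eventually coincide with $\support(\Psi)$ and that each weight $\pr(\C_{\max}(z_n,\beta))$, ranging over a finite set and converging, is eventually constant. You instead run a single pigeonhole on depth-$D$ prefixes of the resolutions and track the escaping mass explicitly. In one respect your bookkeeping is tighter than the paper's: the paper's step from $\lim_{n}\Psi_n(\Phi)=0$ for $\Phi$ outside the support to ``$I_n=J$ for all large $n$'' is not justified by pointwise convergence alone, whereas your identity $\epsilon_n=a+\mu$ combined with $\epsilon_n\to 0$ addresses exactly that point.

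There is, however, one genuine gap: your pigeonhole rests on the claim that the unfolding of $s$ up to depth $D$ is a finite tree, and this does not follow from image-finiteness as the paper defines it. Image-finiteness only bounds $\der{s',a}$ for each single action $a$, while $\Act$ is countable and $\init{s'}$ may be infinite; a state may therefore enable infinitely many actions, so the set of depth-$D$ prefixes of resolutions need not be finite and no prefix need recur infinitely often. The repair is local. Since $S=\support(\Psi)$ is finite, the only choices of $\Z_n$ that matter are the transitions labelled by actions occurring in traces of $S$, taken along paths compatible with prefixes of those traces, together with the stop/continue bits at the corresponding endpoints; by image-finiteness this restricted data ranges over a finite set, and it already determines every weight $\Psi_n(\Phi_\beta)$ for $\Phi_\beta\in S$ and hence the escaping mass $\epsilon_n=1-\sum_{\Phi\in S}\Psi_n(\Phi)$. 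Running your subsequence-and-constancy argument on this restricted data (or, equivalently, noting that a convergent real sequence taking finitely many values is eventually constant, which is essentially the paper's device) closes the gap; the rest of your proof then goes through as written.
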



\begin{proof}
As we are working on a metric space, the proof obligation is equivalent to prove that each sequence in $\LL(s)$ that admits a limit converges in $\LL(s)$, namely
\begin{equation}
\label{eq:prop_closure_proof_obligation}
\text{for each } \{\Psi_n\}_{n \in \N} \subseteq \LL(s) \text{ s.t. there is } \Psi \in \LL \text{ with } \lim_{n \to \infty} \Psi_n = \Psi \text{ then } \Psi \in \LL(s).
\end{equation}
From Theorem~\ref{thm:LLs_is_cup_tracing_formula_resolution} we have that
$
\LL(s) = \{\top\} \cup \{\Psi_{\Z} \mid \Z \in \res(s) \}.
$
Since a finite union of closed sets is closed, the proof obligation Equation~\eqref{eq:prop_closure_proof_obligation} is equivalent to prove that 
\begin{flalign}
\label{eq:true}
& \{ \top \} \text{ is closed}\\
\label{eq:diam}
& \{\Psi_{\Z} \mid \Z \in \res(s)\} \text{ is closed}
\end{flalign}

Equation~\eqref{eq:true} is immediate since the only sequence in $\{\top\}$ admitting a limit is the constant sequence $\Psi_n = \top$ for all $n \in \N$.

Let us deal now with Equation~\eqref{eq:diam}.
First of all, we notice that sequences in $\{\Psi_{\Z} \mid \Z \in \res(s)\}$ can be written in the general form
\[
\Psi_n = \bigoplus_{i \in I_n} r_i^{(n)} \Phi_i^{(n)}
\]
with $\{\bigoplus_{i \in I_n} r_i^{(n)}\Phi_i^{(n)}\}_{n \in \N} \subseteq \LL(s) \setminus \{\top\}$.

Assume that there is a trace distribution formula $\Psi \in \LLd$ s.t.\ $\lim_{n \to \infty} \Psi_n = \Psi$.
We aim to show that $\Psi \in \LL(s)$, namely that
\begin{equation}
\label{eq:seq_diam}
\Psi = \Psi_{\Z} \text{ for some } \Z \in \res(s).
\end{equation}
In what follows, we assume wlog that limit trace distribution formula $\Psi$ has the form $\Psi = \bigoplus_{j \in J} r_j \Phi_j$.

From $\{\bigoplus_{i \in I_n} r_i^{(n)} \Phi_i^{(n)}\}_{n \in \N} \subseteq \LL(s) \setminus \{\top\}$ we gather that for each $n \in \N$ there is a resolution $\Z_n \in \res(s)$ s.t.\ $\Psi_{\Z_n} = \bigoplus_{i \in I_n} r_i^{(n)} \Phi_i^{(n)}$.
For each $n \in \N$, let $z_n = \corr{\Z_n}^{-1}(s)$.
Then $\Psi_{\Z_n} = \bigoplus_{i \in I_n} r_i^{(n)} \Phi_i^{(n)}$ implies that $\displaystyle I_n = \tr(\C_{\max}(z_n))$, namely $I_n$ is the set of traces to which the maximal computations of the process $z_n$ are compatible.
Hence, for each $i \in I_n$ we have that $\Phi_i^{(n)}$ is the tracing formula of the trace indexed by $i$ and $\displaystyle r_i^{(n)} = \pr(\C_{\max}(z_n,i))$.

We notice that
\[
\begin{array}{rlr}
& \lim_{n \to \infty} \Psi_n = \Psi \\
\text{if{f} } & \lim_{n \to \infty} \Dtrdd(\Psi_n, \Psi) = 0 \\
\text{if{f} } & \lim_{n \to \infty} \Kantorovich(\Dtrtd)(\Psi_n, \Psi) = 0
\end{array}
\]
that is if{f} the sequence $\{\Psi_n\}_{n \in \N}$ converges to $\Psi$ with respect to the Kantorovich metric.
Since we are considering distributions with finite support, the convergence with respect to the Kantorovich metric is equivalent to the weak convergence of probability distributions (also called convergence in distribution) which states that $\lim_{n \to \infty}\Psi_n(\Phi) = \Psi(\Phi)$ for each continuity point $\Phi \in \LLt$ of $\Psi$.
Since the probability distribution over trace formuale $\Psi$ is discrete and with finite support, its continuity points are the trace formulae which are not in its support.
Hence, we have that $\lim_{n \to \infty}\Psi_n(\Phi) = 0$ for each $\Phi \not\in \{\Phi_j \mid j \in J\}$.
More specifically, we obtain that $\lim_{n \to \infty} I_n = J$ which gives that if there is an index $\tilde{i}$ s.t.\ $\lim_{n \to \infty}\Phi_{\tilde{i}}^{(n)} \not \in \{\Phi_j \mid j \in J\}$, or if $\{\Phi_{\tilde{i}}^{(n)}\}_{n \in N}$ has no limit, then $\lim_{n \to \infty} r_{\tilde{i}}^{(n)} = 0$.
Furthermore, since $\Dtrtd$ is the discrete metric over $\LLt$, we have that a sequence of trace formulae $\{\Phi^{(n)}\}_{n \in \N}$ converges to $\Phi$ if{f} the sequence is definitively constant, namely if{f} there is an $N \in \N$ s.t.\ $\Phi^{(n)} = \Phi$ for all $n \ge N$.
Therefore, from $\lim_{n \to \infty} I_n = J$ we can infer that there is an $N \in \N$ s.t.\ $I_n = J$ for all $n \ge N$.
Consequently, by construction of the sets $I_n$, we obtain that $J = \tr(\C_{\max}(z_N))$ thus giving that, for each $j \in J$, $\Phi_j$ is the tracing formula of the trace indexed by $j$.
Since moreover we are considering image-finite processes, for each $j \in J$ $\pr(\C_{\max}(z_n,j))$ assumes only a finite number of values wrt\ $n \ge N$.
Therefore, we can infer that there is an $M \ge N \in \N$ s.t.\ for each $j \in J$ we have $\pr(\C_{\max}(z_n,j)) = \pr(\C_{\max}(z_m,j))$ for all $n,m \ge M$.
Thus, from Definition~\ref{def:mimicking_formula_for_traces}, we infer that the resolution $\Z_M \in \res(s)$, namely the resolution whose mimicking formula corresponds to the $M$-th trace distribution formula in the sequence $\{\Psi_n\}_{n \in \N}$, is s.t.\ $\Psi = \Psi_{\Z_M}$, thus proving Equation~\eqref{eq:seq_diam} and concluding the proof.
\end{proof}


From our $\LL$-characterization of strong trace equivalence (Theorem~\ref{thm:det_char}) we obtain the following result.

\begin{theorem}
\label{thm:kernel_of_DLLd}
The kernel of $\DLLd$ is trace equivalence.
\end{theorem}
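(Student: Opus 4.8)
The plan is to reduce the claim to three facts already available: that $\Dtrdd$ is a genuine $1$-bounded metric on $\LLd$ (Proposition~\ref{prop:distance_on_LLd_is_metric}), that each satisfaction set $\LL(s)$ is closed in the topology induced by $\Dtrdd$ (Proposition~\ref{prop:LLs_closure}), and that logical equivalence coincides with strong trace equivalence (Theorem~\ref{thm:det_char}). Since $\DLLd(s,t) = \Hausdorff(\Dtrdd)(\LL(s),\LL(t))$ by Definition~\ref{def:LL_logical_distance}, proving that the kernel of $\DLLd$ is $\STr$ reduces to establishing that $\Hausdorff(\Dtrdd)(\LL(s),\LL(t)) = 0$ if and only if $\LL(s) = \LL(t)$, and then appealing to Theorem~\ref{thm:det_char}.

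First I would prove the general fact that for any two closed subsets $\Pi_1,\Pi_2$ of a metric space, $\Hausdorff(\Dtrdd)(\Pi_1,\Pi_2) = 0$ if and only if $\Pi_1 = \Pi_2$. The implication from equality is immediate, since in Definition~\ref{def:Hausdorff} every element is then at distance $0$ from itself and both suprema vanish. For the converse, assume $\Hausdorff(\Dtrdd)(\Pi_1,\Pi_2) = 0$. Then in particular $\sup_{\Psi_1 \in \Pi_1} \inf_{\Psi_2 \in \Pi_2} \Dtrdd(\Psi_1,\Psi_2) = 0$, so for every $\Psi_1 \in \Pi_1$ we have $\inf_{\Psi_2 \in \Pi_2} \Dtrdd(\Psi_1,\Psi_2) = 0$, i.e.\ $\Psi_1$ lies in the closure of $\Pi_2$. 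As $\Pi_2$ is closed this gives $\Psi_1 \in \Pi_2$, hence $\Pi_1 \subseteq \Pi_2$; the symmetric supremum yields $\Pi_2 \subseteq \Pi_1$, so $\Pi_1 = \Pi_2$.

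I would then instantiate this fact with $\Pi_1 = \LL(s)$ and $\Pi_2 = \LL(t)$, which are closed by Proposition~\ref{prop:LLs_closure}, obtaining $\DLLd(s,t) = 0$ iff $\LL(s) = \LL(t)$. Combining this with Theorem~\ref{thm:det_char}, which gives $\LL(s) = \LL(t)$ iff $s \STr t$, yields the chain $\DLLd(s,t) = 0 \iff \LL(s) = \LL(t) \iff s \STr t$, which is exactly the statement that the kernel of $\DLLd$ is strong trace equivalence.

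The crux of the argument, and the reason Proposition~\ref{prop:LLs_closure} was proved beforehand, is the step that upgrades $\inf_{\Psi_2 \in \Pi_2} \Dtrdd(\Psi_1,\Psi_2) = 0$ to genuine membership $\Psi_1 \in \Pi_2$. Without closedness one could only conclude that $\Psi_1$ is a limit of elements of $\Pi_2$, and two distinct sets can well be at Hausdorff distance $0$ when one of them fails to be closed. It is precisely the combination of $\Dtrdd$ being a metric (so that vanishing distance forces equality of formulae, by Proposition~\ref{prop:distance_on_LLd_is_metric}) with the closedness of $\LL(s)$ and $\LL(t)$ that makes distance-zero approximability coincide with actual membership, and this is the point I expect to require the most care.
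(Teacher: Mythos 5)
Your proposal is correct and follows essentially the same route as the paper: both directions reduce to Theorem~\ref{thm:det_char} via the equivalence $\DLLd(s,t)=0 \iff \LL(s)=\LL(t)$, with the backward direction resting on the closedness of the satisfaction sets (Proposition~\ref{prop:LLs_closure}) and the metric property of $\Dtrdd$. The only difference is that you spell out the standard argument that two closed sets at Hausdorff distance $0$ coincide, which the paper leaves implicit.
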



\begin{proof}
($\Rightarrow$)
Assume first that $s \STr t$.
We aim to show that this implies that $\DLLd(s,t) = 0$.
By Theorem~\ref{thm:det_char} we have that $s \STr t$ implies that $\LL(s) = \LL(t)$ from which we gather
\[
\DLLd(s,t) = \Hausdorff(\Dtrdd)(\LL(s), \LL(t)) = 0.
\]
($\Leftarrow$)
Assume now that $\DLLd(s,t) = 0$.
We aim to show that this implies that $s \STr t$.
Since $\LL(s)$ and $\LL(t)$ are closed by Proposition~\ref{prop:LLs_closure} and since $\DLLd$ is a pseudometric by Proposition~\ref{prop:DLLd_is_metric}, from $\DLLd(s,t) = 0$ we can infer that $\LL(s) = \LL(t)$.
By Theorem~\ref{thm:det_char} we can conclude that $s \STr t$. 
\end{proof}


Finally, we obtain the characterization of the strong trace metric.

\begin{theorem}
[Characterization of strong trace metric]
\label{thm:LL_metric_char}
For all $s,t \in \proc$ we have $\TraceMetric(s,t) = \DLLd(s,t)$.
\end{theorem}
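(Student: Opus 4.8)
The plan is to reduce both sides to Hausdorff liftings over a single underlying correspondence between resolutions and their mimicking formulae. The crucial observation is that the map $\Z \mapsto \Psi_{\Z}$ from $\res(\proc)$ to $\LLd$ is an \emph{isometry} from the trace distance $D_T$ on resolutions to the formula distance $\Dtrdd$. I would isolate this as the key lemma: for all $\Z_1,\Z_2 \in \res(\proc)$,
\[
D_T(\Z_1,\Z_2) = \Dtrdd(\Psi_{\Z_1},\Psi_{\Z_2}).
\]
To prove it, recall that $D_T(\Z_1,\Z_2) = \Kantorovich(d_T)(\TD_{\Z_1},\TD_{\Z_2})$ while $\Dtrdd(\Psi_{\Z_1},\Psi_{\Z_2}) = \Kantorovich(\Dtrtd)(\Psi_{\Z_1},\Psi_{\Z_2})$. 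The map $\alpha \mapsto \Phi_{\alpha}$ from traces to tracing formulae (Definition~\ref{def:tracing_formula}) is a bijection carrying $\TD_{\Z}$ to $\Psi_{\Z}$, since both assign the weight $\pr(\C_{\max}(z,\alpha))$ to $\alpha$, resp.\ to $\Phi_{\alpha}$; and it preserves the discrete metrics, i.e.\ $d_T(\alpha,\beta) = \Dtrtd(\Phi_{\alpha},\Phi_{\beta})$ because $\alpha = \beta$ iff $\Phi_{\alpha} = \Phi_{\beta}$. Hence the matching sets $\W(\TD_{\Z_1},\TD_{\Z_2})$ and $\W(\Psi_{\Z_1},\Psi_{\Z_2})$ are in weight-preserving bijection of equal cost, so the two Kantorovich minima coincide.

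Second, I would record that the mimicking map is surjective onto $\LL(s)$. By Theorem~\ref{thm:LLs_is_cup_tracing_formula_resolution} together with the subsequent remark, which identifies $1\top$ as the mimicking formula of the empty resolution for $s$, we have $\LL(s) = \{\Psi_{\Z} \mid \Z \in \res(s)\}$, and likewise $\LL(t) = \{\Psi_{\Z} \mid \Z \in \res(t)\}$.

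Finally I would transport the isometry through the Hausdorff lifting. Writing $\TraceMetric(s,t) = \Hausdorff(D_T)(\res(s),\res(t))$ as the maximum of its two directed distances and substituting the key lemma replaces each $D_T(\Z_1,\Z_2)$ by $\Dtrdd(\Psi_{\Z_1},\Psi_{\Z_2})$. For the inner infimum, as $\Z_2$ ranges over $\res(t)$ the formula $\Psi_{\Z_2}$ ranges over all of $\LL(t)$ by surjectivity, so $\inf_{\Z_2 \in \res(t)} \Dtrdd(\Psi_{\Z_1},\Psi_{\Z_2}) = \inf_{\Psi' \in \LL(t)} \Dtrdd(\Psi_{\Z_1},\Psi')$; then, as $\Z_1$ ranges over $\res(s)$, the outer supremum likewise becomes $\sup_{\Psi \in \LL(s)}$. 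Thus each directed distance equals the corresponding directed distance of $\Hausdorff(\Dtrdd)(\LL(s),\LL(t))$, and symmetrically for the reverse direction; taking the maximum yields $\DLLd(s,t)$.

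The step I expect to demand the most care is the key isometry lemma, in particular verifying that the bijection $\alpha \mapsto \Phi_{\alpha}$ induces a genuine cost-preserving bijection on matchings, so that the two Kantorovich liftings agree. Once this is settled, the surjectivity of the mimicking map turns the Hausdorff transport into a routine manipulation of suprema and infima; notably it requires only surjectivity (not injectivity, nor the closedness of $\LL(s)$, which was needed only for the kernel result in Theorem~\ref{thm:kernel_of_DLLd}).
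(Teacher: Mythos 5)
Your proposal is correct and follows essentially the same route as the paper: the paper's proof likewise reduces everything to the pointwise identity $D_T(\Z_s,\Z_t)=\Dtrdd(\Psi_{\Z_s},\Psi_{\Z_t})$ (proved by transporting matchings along $\alpha\mapsto\Phi_\alpha$, which preserves the discrete metric and the weights) and then pushes this through the Hausdorff lifting using Theorem~\ref{thm:LLs_is_cup_tracing_formula_resolution}. Your handling of $1\top$ via the remark that it is the mimicking formula of the trivial resolution is a slightly cleaner packaging than the paper's explicit argument that the two $\top$'s match each other optimally, but the substance is identical.
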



\begin{proof}
By definition of trace metric (Definition~\ref{def:trace_metric}) we have that
\[
\TraceMetric(s,t) ={} \max \left\{ \sup_{\Z_s \in \res(s)}\, \inf_{\Z_t \in \res(t)} \, D_T(\Z_s,\Z_t), \; \sup_{\Z_t \in \res(t)} \, \inf_{\Z_s \in \res(s)} D_T (\Z_s, \Z_t) \right\}.
\]
By definition of $\LL$-distance over processes (Definition~\ref{def:LL_logical_distance}) we have that
\[
\begin{array}{rlr}
\DLLd(s,t) ={} & \Hausdorff(\Dtrdd)(\LL(s), \LL(t)) \\
={} & \Hausdorff(\Dtrdd)(\{\top\} \cup \{\Psi_{\Z_s} \mid \Z_s \in \res(s)\}, \{\top\} \cup \{\Psi_{\Z_t} \mid \Z_t \in \res(t)\}) \\
={} & \Hausdorff(\Dtrdd)(\{\Psi_{\Z_s} \mid \Z_s \in \res(s)\}, \{\Psi_{\Z_t} \mid \Z_t \in \res(t)\}) \\
={} & \max \left\{ \sup_{\Z_s \in \res(s)}\, \inf_{\Z_t \in \res(t)} \, \Dtrdd(\Psi_{\Z_s}, \Psi_{\Z_t}), \; \sup_{\Z_t \in \res(t)} \, \inf_{\Z_s \in \res(s)} \Dtrdd (\Psi_{\Z_s}, \Psi_{\Z_t}) \right\}
\end{array}
\]
where the third equality follows from the fact that by Def.~\ref{def:metric_on_LLd} we have $\Dtrdd(\top,\top) = 0$ and $\Dtrdd(\top, \Psi) = 1$ for any $\Psi \neq \top$.
Thus we have that $\top = \argmin_{\Psi \in \{\top\} \cup \{\Psi_{\Z_t} \mid \Z_t \in \res(t)\}} \Dtrdd(\top, \Psi)$ and symmetrically $\top = \argmin_{\Psi \in \{\top\} \cup \{\Psi_{\Z_s} \mid \Z_s \in \res(s)\}} \Dtrdd(\Psi, \top)$.
Moreover, for any $\Psi \neq \top$ we have that $\Dtrdd(\Psi, \Psi') \le \Dtrdd(\Psi, \top)$ for any $\Psi' \in \{\Psi_{\Z_t} \mid \Z_t \in \res(t)\}$ and $\Dtrdd(\Psi'', \Psi) \le \Dtrdd(\top, \Psi)$ for any $\Psi'' \in \{\Psi_{\Z_s} \mid \Z_s \in \res(s)\}$.

Hence, to prove the thesis it is enough to show that 
\begin{equation}
\label{eq:thm_LL_metric_char_proof_obligation}
D_T(\Z_s, \Z_t) = \Dtrdd(\Psi_{\Z_s}, \Psi_{\Z_t})
\text{ for all } \Z_s \in \res(s), \Z_t \in \res(t).
\end{equation}
Let $\Z_s \in \res(s)$, with $z_s = \corr{\Z_s}^{-1}(s)$, and $\Z_t \in \res(t)$, with $z_t = \corr{\Z_t}^{-1}(t)$.
Then by definition of mimicking formula (Definition~\ref{def:mimicking_formula_for_traces}) we have 
\[
\Psi_{\Z_s} = \bigoplus_{\alpha \in \tr(\C_{\max}(z_s))} \pr(\C_{\max}(z_s,\alpha)) \Phi_{\alpha}
\]
where for each $\alpha \in \tr(\C_{\max}(z_s))$ we have that $\Phi_{\alpha}$ is the tracing formula for the trace $\alpha$.
Similarly,
\[
\Psi_{\Z_t} = \bigoplus_{\beta \in \tr(\C_{\max}(z_t))} \pr(\C_{\max}(z_t,\beta)) \Phi_{\beta}
\]
where for each $\beta \in \tr(\C_{\max}(z_t))$ we have that $\Phi_{\beta}$ is the tracing formula for the trace $\beta$.

By definition of trace distance between resolutions (Definition~\ref{def:trace_metric_det_res}) we have that 
\begin{equation}
\label{eq:DTZsZt}
D_T(\Z_s, \Z_t) = \min_{\w \in \W(\TD_{\Z_s}, \TD_{\Z_t})} \sum_{\alpha \in \tr(\C_{\max}(z_s)), \beta \in \tr(\C_{\max}(z_t))} \w(\alpha,\beta) d_T(\alpha,\beta)
\end{equation}
where, by definition of trace distance between traces (Definition~\ref{def:trace_metric_traces}), we have that $d_t (\alpha, \beta) = 0$ if $\alpha = \beta$ and $d_t(\alpha,\beta) = 1$ otherwise.

Hence, by definition of tracing formula (Definition~\ref{def:tracing_formula}), we have that for all $\alpha \in \tr(\C_{\max}(z_s)), \beta \in \tr(\C_{\max}(z_t))$ we have
$
d_T (\alpha, \beta) = \Dtrtd(\Phi_{\alpha}, \Phi_{\beta})
$,
thus giving
\begin{equation}
\label{eq:DTZsZt2}
\eqref{eq:DTZsZt} = \min_{\w \in \W(\TD_{\Z_s}, \TD_{\Z_t})} \sum_{\alpha \in \tr(\C_{\max}(z_s)), \beta \in \tr(\C_{\max}(z_t))} \w(\alpha,\beta) \Dtrtd(\Phi_{\alpha}, \Phi_{\beta}).
\end{equation}
Let $\bar{\w}$ be an optimal matching for $D_T(\Z_s, \Z_t)$, namely
\begin{equation}
\label{eq:DTZsZt3}
\eqref{eq:DTZsZt2} = \sum_{\alpha \in \tr(\C_{\max}(z_s)), \beta \in \tr(\C_{\max}(z_t))} \bar{\w}(\alpha, \beta) \Dtrtd(\Phi_{\alpha}, \Phi_{\beta}).
\end{equation}
Then, by definition of matching and of $\TD_{\_}$ (Definition~\ref{def:trace_distribution}) we have that for any $\alpha \in \tr(\C_{\max}(z_s)), \beta \in \tr(\C_{\max}(z_t))$
\[
\begin{array}{rlr}
\pr(\C_{\max}(z_s,\alpha)) = \TD_{\Z_s}(\alpha) = \sum_{\beta \in \tr(\C_{\max}(z_t))} \bar{\w}(\alpha,\beta) \\
\pr(\C_{\max}(z_t,\beta)) = \TD_{\Z_t}(\beta) = \sum_{\alpha \in \tr(\C_{\max}(z_s))} \bar{\w}(\alpha,\beta).
\end{array}
\]

Therefore we have obtained that $\bar{\w}$ is a matching for $\Psi_{\Z_s}$ and $\Psi_{\Z_t}$.
In particular we notice that $\bar{\w}$ is actually an optimal matching for $\Psi_{\Z_s}, \Psi_{\Z_t}$.
This follows from the optimality of $\bar{\w}$ for $\TD_{\Z_s}, \TD_{\Z_t}$.
In fact each matching for $\Psi_{\Z_s}, \Psi_{\Z_t}$ can be constructed from a matching for $\TD_{\Z_s}, \TD_{\Z_t}$ using the same technique proposed above.
Moreover, given $\w_1 \in \W(\TD_{\Z_s}, \TD_{\Z_t})$ and $\w_2$ being the matching for $\Psi_1,\Psi_2$ built from it, the reasoning above guarantees that
\[
\sum_{\alpha \in \tr(\C_{\max}(z_s)), \beta \in \tr(\C_{\max}(z_t))} \w_1(\alpha, \beta) d_T(\alpha, \beta) 
= 
\sum_{\alpha \in \tr(\C_{\max}(z_s)) \atop \beta \in \tr(\C_{\max}(z_t))} \w_2(\alpha, \beta) \Dtrtd(\Phi_{\alpha}, \Phi_{\beta}).
\]
$\bar{\w}$ being optimal for $D_T$ implies $\tilde{\w}$ being optimal for $\Dtrdd$.
Hence by Definition~\ref{def:metric_on_LLd} we have
\[
\Dtrdd(\Psi_{\Z_s}, \Psi_{\Z_t}) = \sum_{\alpha \in \tr(\C_{\max}(z_s)), \beta \in \tr(\C_{\max}(z_t))} \bar{\w}(\alpha, \beta) \Dtrtd(\Phi_{\alpha}, \Phi_{\beta}).
\]
From Equation~\eqref{eq:DTZsZt3} we infer
$
D_T(\Z_s, \Z_t) = \Dtrdd(\Psi_{\Z_s}, \Psi_{\Z_t})
$
thus proving Equation~\eqref{eq:thm_LL_metric_char_proof_obligation} and concluding the proof.
\end{proof}


\subsection{$\LLw$-characterization of weak trace metric}
\label{sec:char_metric_weak}

The idea behind the definition of a metric on $\LLw$ is pretty much the same to the strong case.
The main difference is that the distance on $\LLw$ is a pseudometric whose kernel is given by $\LLw$-equivalence.

\begin{definition}
[Distance on $\LLwt$]
\label{def:metric_on_LLwt}
The function $\Dtrtdw \colon \LLwt \times \LLwt \to [0,1]$ is defined over $\LLwt$ as follows:
\[
\Dtrtdw( \Phi_1, \Phi_2 ) = 
\begin{cases}
0 & \text{ if } \Phi_1 \eqtrace \Phi_2 \\
1 & \text{ otherwise.}
\end{cases}
\]
\end{definition}

Clearly, $\Dtrtdw$ is a pseudometric on $\LLwt$ whose kernel is given by equivalence of trace formulae and we can lift it to a pseudometric over $\LLwd$ via the Kantorovich lifting functional.

\begin{definition}
[Distance on $\LLwd$]
\label{def:metric_on_LLwd}
The function $\Dtrddw \colon \LLwd \times \LLwd \to [0,1]$ is defined over $\LLwd$ as follows:
\[
\Dtrddw(\Psi_1, \Psi_2) = \Kantorovich(\Dtrtdw)(\Psi_1, \Psi_2).
\]
\end{definition}

\begin{proposition}
\label{prop:distance_on_LLwd_is_metric}
The function $\Dtrddw$ is a $1$-bounded pseudometric over $\LLwd$.
\end{proposition}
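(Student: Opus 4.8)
The plan is to establish the three pseudometric axioms together with $1$-boundedness, closely following the proof of Proposition~\ref{prop:distance_on_LLd_is_metric} and tracking exactly where the passage from a genuine metric ($\Dtrtd$) to a mere pseudometric ($\Dtrtdw$) on trace formulae forces a change. Recall that, as observed just before Definition~\ref{def:metric_on_LLwd}, $\Dtrtdw$ is a $1$-bounded pseudometric on $\LLwt$ whose kernel is $\eqtrace$; the structural fact I would invoke is that the Kantorovich lifting of a $1$-bounded pseudometric is again a $1$-bounded pseudometric.

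First I would dispatch reflexivity and symmetry. For reflexivity, since $\Dtrtdw(\Phi,\Phi)=0$ for every $\Phi \in \LLwt$, the diagonal matching witnesses $\Dtrddw(\Psi,\Psi)=\Kantorovich(\Dtrtdw)(\Psi,\Psi)=0$. Symmetry is immediate from the symmetry of $\Dtrtdw$ together with the symmetry of matchings (swapping the two marginals), exactly as in the strong case.

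Next I would verify the triangle inequality and $1$-boundedness. Here the key observation is that the corresponding arguments in the proof of Proposition~\ref{prop:distance_on_LLd_is_metric} never use that $\Dtrtd$ separates points: the gluing construction of the auxiliary matching $f(i,j,h)=\w_{1,3}(\Phi_i,\Phi_h)\,\w_{2,3}(\Phi_j,\Phi_h)\,r_h^{-1}$ only appeals to the triangle inequality of the underlying distance on trace formulae, and the boundedness estimate only uses that this distance takes values in $\{0,1\}$. Both properties hold for $\Dtrtdw$ as a $1$-bounded pseudometric, so the two computations carry over verbatim with $\Dtrtdw$ in place of $\Dtrtd$, yielding $\Dtrddw(\Psi_1,\Psi_2)\le\Dtrddw(\Psi_1,\Psi_3)+\Dtrddw(\Psi_3,\Psi_2)$ and $\Dtrddw(\Psi_1,\Psi_2)\le 1$.

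The one step I would deliberately \emph{not} transfer is the $\Rightarrow$ direction of the identity axiom, and this is precisely the point of the proposition. Since the kernel of $\Dtrtdw$ is $\eqtrace$ rather than equality, two syntactically distinct but $\eqtrace$-equivalent trace formulae lie at $\Dtrtdw$-distance $0$; consequently an optimal matching may assign positive weight to such a pair at no cost, so $\Dtrddw(\Psi_1,\Psi_2)=0$ no longer forces $\Psi_1=\Psi_2$ but only $\Psi_1\eqtrace^{\dagger}\Psi_2$. Hence the separation property genuinely fails and we obtain a pseudometric rather than a metric, which is exactly the weakening recorded in the statement. I expect no real obstacle in the argument itself---it is essentially a transcription of the strong case---so the only care needed is to resist proving too much and to leave the identification of the kernel as $\eqtrace^{\dagger}$ to a separate result, in analogy with Theorem~\ref{thm:kernel_of_Dtrdd}.
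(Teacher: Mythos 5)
Your proposal is correct and matches the paper's proof, which simply invokes the argument of Proposition~\ref{prop:distance_on_LLd_is_metric} and replaces the identity-of-indiscernibles step by the observation that $\Dtrddw(\Psi,\Psi)=0$ follows from the Kantorovich lifting of a pseudometric. Your additional remarks on why the triangle inequality and boundedness computations transfer verbatim, and why separation fails, are accurate elaborations of the same route.
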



\begin{proof}
The same arguments used in the proof of Proposition~\ref{prop:distance_on_LLd_is_metric} apply, where in place of item~\ref{item:identity} we simply need to show that $\Dtrddw(\Psi,\Psi) = 0$, which is immediate from the definition through the Kantorovich pseudometric.
\end{proof}


\begin{theorem}
\label{thm:kernel_of_Dtrddw}
The kernel of $\Dtrddw$ is $\LLw$-equivalence of trace distribution formulae.
\end{theorem}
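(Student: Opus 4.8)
The plan is to establish the two implications $\Dtrddw(\Psi_1,\Psi_2)=0 \Rightarrow \Psi_1 \eqtrace^{\dagger} \Psi_2$ and $\Psi_1 \eqtrace^{\dagger}\Psi_2 \Rightarrow \Dtrddw(\Psi_1,\Psi_2)=0$, working entirely at the level of distributions over trace formulae. The whole argument is the weak analogue of the identity-of-indiscernibles step (item~\ref{item:identity}) in the proof of Proposition~\ref{prop:distance_on_LLd_is_metric}: the only change is that the kernel of the ground pseudometric $\Dtrtdw$ is $\eqtrace$ rather than syntactic equality, so the conclusion becomes $\eqtrace^{\dagger}$ rather than $=$. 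I would begin by recalling two facts: that $\Dtrddw = \Kantorovich(\Dtrtdw)$ by Definition~\ref{def:metric_on_LLwd}, and that by Definition~\ref{def:metric_on_LLwt} we have $\Dtrtdw(\Phi,\Phi')=0$ if{f} $\Phi \eqtrace \Phi'$. Since all formulae in $\LLwd$ denote distributions with finite support, the minimum defining $\Kantorovich(\Dtrtdw)$ is attained, so there is always an optimal matching $\bar{\w} \in \W(\Psi_1,\Psi_2)$.

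For the forward direction I would take such an optimal matching $\bar{\w}$ realising $\Dtrddw(\Psi_1,\Psi_2)=0$. As $\Dtrtdw$ and $\bar{\w}$ are both nonnegative and $\sum \bar{\w}(\Phi,\Phi')\Dtrtdw(\Phi,\Phi')=0$, every summand must vanish, whence $\bar{\w}(\Phi,\Phi')>0$ forces $\Dtrtdw(\Phi,\Phi')=0$, i.e.\ $\Phi \eqtrace \Phi'$. Thus $\bar{\w}$ is a matching whose marginals are $\Psi_1$ and $\Psi_2$ and whose support is contained in the relation $\eqtrace$. By Definition~\ref{def:lifting_relation} together with Definition~\ref{def:equivalence_of_formulae} this is exactly a witness for $\Psi_1 \eqtrace^{\dagger}\Psi_2$, which gives the claim.

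For the converse I would start from a witness of $\Psi_1 \eqtrace^{\dagger}\Psi_2$: by Definition~\ref{def:lifting_relation} this amounts to a decomposition of $\Psi_1$ into Diracs together with a matching distribution $\w$ supported on pairs of $\eqtrace$-equivalent trace formulae and having $\Psi_1,\Psi_2$ as marginals. On every such pair $\Dtrtdw$ is $0$, so $\sum \w(\Phi,\Phi')\Dtrtdw(\Phi,\Phi')=0$; since $\Dtrddw(\Psi_1,\Psi_2)$ is the minimum over $\W(\Psi_1,\Psi_2)$ of a nonnegative quantity, it is squeezed to $0$. The one point requiring care --- the main, though routine, obstacle --- is reconciling the joint-distribution notion of matching used by the Kantorovich functional with the Dirac-decomposition formulation of the lifting in Definition~\ref{def:lifting_relation}, in particular the bookkeeping when several indices carry $\eqtrace$-equivalent (or literally coinciding) trace formulae. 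This is precisely the manipulation already performed in the proof of Proposition~\ref{prop:distance_on_LLd_is_metric}, and Proposition~\ref{prop:distance_on_LLwd_is_metric} guarantees that $\Dtrddw$ is a genuine pseudometric, so that speaking of its kernel is legitimate.
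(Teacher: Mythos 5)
Your proposal is correct and follows essentially the same route as the paper's own proof: for the forward direction you extract from a zero-cost optimal matching the fact that its support lies in $\eqtrace$ and read this off as a witness of $\Psi_1 \eqtrace^{\dagger} \Psi_2$, and for the converse you turn a lifting witness into a matching supported on $\eqtrace$-pairs whose Kantorovich cost is zero, which is exactly the matching $\tilde{\w}$ the paper constructs. The coupling-versus-Dirac-decomposition bookkeeping you flag is the only part the paper spells out at greater length (via its explicit chain of $\eqtrace^{\dagger}$ rewritings), but the content is the same.
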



\begin{proof}
($\Rightarrow$)
Assume first that $\Dtrddw(\Psi_1,\Psi_2) = 0$ for $\Psi_1 = \bigoplus_{i \in I} r_i \Phi_i$ and $\Psi_2 = \bigoplus_{j \in J} r_j \Phi_j$. 
We aim to show that this implies $\Psi_1 \eqtrace^{\dagger} \Psi_2$.
From the assumption, we have
\[
\begin{array}{rlr}
0 ={} & \Dtrddw(\bigoplus_{i \in I} r_i \Phi_i, \bigoplus_{j \in J} r_j \Phi_j)\\
={} & \min_{\w \in \W(\Psi_1, \Psi_2)} \sum_{i \in I,\, j \in J} \w(\Phi_i, \Phi_j) \Dtrtdw(\Phi_i ,\Phi_j) \\
={} & \sum_{i \in I,\, j \in J} \w(\Phi_i,\Phi_j) \Dtrtdw(\Phi_i,\Phi_j) & \text{(for $\w$ optimal matching)}.
\end{array}
\]
Thus, for each $i \in I$ and $j \in J$ we can distinguish two cases:
\begin{itemize}
\item either $\w(\Phi_i, \Phi_j)=0$,
\item or $\w(\Phi_i, \Phi_j) >0$, implying $\Dtrtdw(\Phi_i, \Phi_j)=0$, which is equivalent to say that $\Phi_i \eqtrace \Phi_j$ by Definition~\ref{def:metric_on_LLwt}.
\end{itemize}
For each $i \in I$, let $J_i \subseteq J$ be the set of indexes $j_i$ for which $\w(\Phi_i, \Phi_{j_i})>0$ and, symmetrically, for each $j \in J$ let $I_j \subseteq I$ be the set of indexes $i_j$ for which $\w(\Phi_{i_j}, \Phi_j)>0$.
So we have
\[
\begin{array}{rlr}
\Psi_1 ={} & \bigoplus_{i \in I} r_i \Phi_i \\
={} & \bigoplus_{i \in I} \big(\sum_{j \in J} \w(\Phi_i, \Phi_j)\big) \Phi_i & \text{($\w \in \W(\Psi_1,\Psi_2)$)} \\
\eqtrace^{\dagger} & \bigoplus_{i \in I} \big(\sum_{j_i \in J_i} \w(\Phi_i, \Phi_{j_j}) \big) \Phi_i  & \text{(by construction of each $J_i$)}\\
\eqtrace^{\dagger} & \bigoplus_{i \in I,\, j_i \in J_i} \w(\Phi_i, \Phi_{j_i}) \Phi_{j_i} & \text{($\Phi_i \eqtrace \Phi_{j_i}$ for each $j_i \in J_i$)} \\
\eqtrace^{\dagger} & \bigoplus_{i \in I,\, j_i \in J_i,\, i'_{j_i} \in I_{j_i}} \w(\Phi_{i'_{j_i}}, \Phi_{j_i}) \Phi_{i'_{j_i}} & \text{($\Phi_{i'_{j_i}} \eqtrace \Phi_{j_i}$ for each $i'_{j_i} \in I_{j_i}$)} \\
\eqtrace^{\dagger} & \bigoplus_{i_j \in I_j,\, j \in J} \w(\Phi_{i_j}, \Phi_j) \Phi_{i_j} & \text{(all indexes $j \in J$ are involved)}\\
\eqtrace^{\dagger} & \bigoplus_{j \in J} \big( \sum_{i_j \in I_j} \w(\Phi_{i_j}, \Phi_j) \big) \Phi_j & \text{($\Phi_j \eqtrace \Phi_{i_j}$ for each $i_j \in I_j$)} \\
\eqtrace^{\dagger} & \bigoplus_{j \in J} \big( \sum_{i \in I} \w(\Phi_i, \Phi_j) \big) \Phi_j & \text{(by construction of each $I_j$)}\\
={} & \bigoplus_{j \in J} r_j \Phi_j & \text{($\w \in \W(\Psi_1,\Psi_2)$)} \\
={} & \Psi_2.
\end{array}
\]

$(\Leftarrow)$.
Assume that $\Psi_1 \eqtrace^{\dagger} \Psi_2$. 
We aim to show that $\Dtrddw(\Psi_1, \Psi_2) = 0$.
Assume wlog. that $\Psi_1 = \bigoplus_{i \in I} r_i \Phi_i$.
By definition of $\eqtrace$ (Definition~\ref{def:eqtrace}) and definition of lifting of a relation (Definition~\ref{def:lifting_relation}), from $\Psi_2 \eqtrace^{\dagger} \bigoplus_{i \in I} r_i \Phi_i$ we gather $\Psi_2 = \bigoplus_{i \in I \atop j_i \in J_i} r_{j_i}\Phi_{j_i}$ with $\sum_{j_i \in J_i} r_{j_i} = r_i$ and $\Phi_{j_i} \eqtrace \Phi_i$ for all $j_i \in J_i, i \in I$.
Then
\[
\begin{array}{rlr}
\Dtrddw(\Psi_1,\Psi_2) ={} & \Dtrddw(\bigoplus_{i \in I} r_i \Phi_i, \bigoplus_{i \in I,\, j_i \in J_i} r_{j_i} \Phi_{j_i}) \\
={} & \min_{\w \in \W(\Psi_1, \Psi_2)} \sum_{i \in I,\, j_h \in J_h \atop h \in I} \w(\Phi_i, \Phi_{j_h}) \Dtrtdw(\Phi_i ,\Phi_{j_h}) \\
\le{} & \sum_{i \in I,\, j_h \in J_h \atop h \in I} \tilde{\w}(\Phi_i, \Phi_{j_h}) \Dtrtdw(\Phi_i ,\Phi_{j_h}) \\
={} & \sum_{i \in I,\, j_i \in J_i} r_{j_i} \Dtrtdw(\Phi_i ,\Phi_{j_i}) \\
={} & 0 & \text{($\Phi_i \eqtrace \Phi_{j_i}$ for each $j_i \in J_i$)}
\end{array}
\]
where the inequality follows by observing that function $\tilde{\w}$ defined by $\tilde{\w}(\Phi_i, \Phi_{j_h}) = r_{j_i}$ if $h = i$ and $\tilde{\w}(\Phi_i, \Phi_{j_h}) = 0$ otherwise, is a matching in $\W(\Psi_1, \Psi_2)$.
\end{proof}


\begin{corollary}
\label{cor:kernel_of_Dtrddw}
$\Z_1,\Z_2 \in \res(\proc)$ are weak trace distribution equivalent if{f} $\Dtrddw(\Psi_{\Z_1}, \Psi_{\Z_2}) = 0$.
\end{corollary}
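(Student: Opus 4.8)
The plan is to chain together the preceding equivalences so that weak trace distribution equivalence of $\Z_1,\Z_2$ collapses onto the kernel condition $\Dtrddw(\Psi_{\Z_1},\Psi_{\Z_2})=0$. All the substantive work has already been carried out in the lemmas and theorems above, so this corollary is purely a matter of assembling the correct biconditionals in order; there is no genuine mathematical obstacle, only bookkeeping.

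First I would recall that, by definition, $\Z_1,\Z_2$ being weak trace distribution equivalent means that their trace distributions are equivalent up to $\eqtrace^{\dagger}$, i.e.\ $\TD_{\Z_1} \eqtrace^{\dagger} \TD_{\Z_2}$. Then I would invoke Proposition~\ref{prop:equivalent_definitions_of_weak}, which translates this lifted-distribution condition into the pointwise form $\pr(\Cw(z_1,\alpha)) = \pr(\Cw(z_2,\alpha))$ for all $\alpha \in \Act^{\star}$, where $z_1,z_2$ are the initial states of $\Z_1,\Z_2$.

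Next I would apply Theorem~\ref{thm:equivalent_resolutions_weak}, which states that exactly this pointwise condition holds if and only if the mimicking formulae are $\LLw$-equivalent, namely $\Psi_{\Z_1} \eqtrace^{\dagger} \Psi_{\Z_2}$. Finally, Theorem~\ref{thm:kernel_of_Dtrddw} identifies the kernel of $\Dtrddw$ precisely with $\eqtrace^{\dagger}$ on trace distribution formulae, so that $\Psi_{\Z_1} \eqtrace^{\dagger} \Psi_{\Z_2}$ if and only if $\Dtrddw(\Psi_{\Z_1},\Psi_{\Z_2})=0$. Composing these biconditionals by transitivity yields the claimed equivalence.

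Since every link in this chain is an established if-and-only-if, the only point requiring care is to ensure that the notion of weak trace distribution equivalence is taken in the form $\TD_{\Z_1} \eqtrace^{\dagger} \TD_{\Z_2}$, so that Proposition~\ref{prop:equivalent_definitions_of_weak} applies verbatim and the subsequent two steps connect cleanly to the kernel of $\Dtrddw$.
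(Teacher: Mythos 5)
Your proposal is correct and follows essentially the same route as the paper, which chains Theorem~\ref{thm:equivalent_resolutions_weak} with Theorem~\ref{thm:kernel_of_Dtrddw} in both directions. The only difference is that you additionally invoke Proposition~\ref{prop:equivalent_definitions_of_weak} to make explicit the passage from the definition of weak trace distribution equivalence ($\TD_{\Z_1} \eqtrace^{\dagger} \TD_{\Z_2}$) to the pointwise condition on $\pr(\Cw(\cdot,\alpha))$, a step the paper leaves implicit; this is a harmless (indeed slightly more careful) piece of bookkeeping.
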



\begin{proof}
($\Rightarrow$)
Assume first that $\Z_1$ and $\Z_2$ are weak trace distribution equivalent.
Then from Theorem~\ref{thm:equivalent_resolutions_weak} we infer that $\Psi_{\Z_1} \eqtrace \Psi_{\Z_2}$.
By Theorem~\ref{thm:kernel_of_Dtrddw} this implies $\Dtrddw(\Psi_{\Z_1}, \Psi_{\Z_2}) = 0$.

($\Leftarrow$)
Assume now that $\Dtrddw(\Psi_{\Z_1},\Psi_{\Z_2}) = 0$.
Then from Theorem~\ref{thm:kernel_of_Dtrddw} we infer that $\Psi_{\Z_1} \eqtrace \Psi_{\Z_2}$.
By Theorem~\ref{thm:equivalent_resolutions_weak} this implies that $\Z_1$ and $\Z_2$ are weak trace distribution equivalent. 
\end{proof}


By the Hausdorff functional we lift the pseudometric $\Dtrddw$ to a pseudometric over processes.

\begin{definition}
\label{def:LLw_logical_distance}
The $\LLw$-\emph{distance} over processes $\DLLdw \colon \proc \times \proc \to [0,1]$ is defined, for all $s,t \in \proc$, by 
\[
\DLLdw(s,t) = \Hausdorff(\Dtrddw)(\LLw(s), \LLw(t)).
\]
\end{definition}

\begin{proposition}
\label{prop:DLLdw_is_metric}
The mapping $\DLLdw$ is a $1$-bounded pseudometric over $\proc$.
\end{proposition}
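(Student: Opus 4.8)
The plan is to mirror the proof of Proposition~\ref{prop:DLLd_is_metric} almost verbatim, since $\DLLdw$ is defined (Definition~\ref{def:LLw_logical_distance}) exactly as $\DLLd$ was, the only change being that the underlying ground distance on trace distribution formulae is now the pseudometric $\Dtrddw$ in place of the metric $\Dtrdd$. The crucial observation is that every step of that earlier proof establishing the pseudometric axioms and $1$-boundedness used only that $\Dtrdd$ is a $1$-bounded \emph{pseudometric}, and never its identity of indiscernibles; and by Proposition~\ref{prop:distance_on_LLwd_is_metric} we already know that $\Dtrddw$ is a $1$-bounded pseudometric. Hence the same argument transfers directly.

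Concretely, I would first observe that reflexivity $\DLLdw(s,s) = 0$ and symmetry $\DLLdw(s,t) = \DLLdw(t,s)$ are immediate from Definition~\ref{def:LLw_logical_distance} and the definition of the Hausdorff functional (Definition~\ref{def:Hausdorff}): reflexivity because $\Dtrddw(\Psi,\Psi) = 0$ for every $\Psi \in \LLwd$, so every directed $\inf$ over a nonempty set vanishes; and symmetry because the Hausdorff construction takes the maximum of the two directed $\sup$-$\inf$ expressions, which are swapped when the roles of $s$ and $t$ are exchanged.

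For the triangle inequality I would reproduce the $\varepsilon$-argument from the proof of Proposition~\ref{prop:DLLd_is_metric} word for word, replacing $\Dtrdd$ by $\Dtrddw$ throughout. Fixing $s,t,u \in \proc$ and using the characterization of $\DLLdw$ as a maximum of directed $\sup$-$\inf$ terms together with the bounds corresponding to Equations~\eqref{eq:Hausdorff_su} and~\eqref{eq:Hausdorff_ut}, I would choose, for each $\varepsilon_1, \varepsilon_2 > 0$, near-optimal witnesses $\Psi_{h_j} \in \LLw(u)$ and $\Psi_{i_{h_j}} \in \LLw(t)$; the triangle inequality of $\Dtrddw$ then yields $\Dtrddw(\Psi_j, \Psi_{i_{h_j}}) \le \Dtrddw(\Psi_j, \Psi_{h_j}) + \Dtrddw(\Psi_{h_j}, \Psi_{i_{h_j}})$, bounded above by the two directed Hausdorff terms plus $\varepsilon_1 + \varepsilon_2$. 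Letting $\varepsilon_1, \varepsilon_2 \to 0$ and taking suprema gives the bound $\DLLdw(s,t) \le \DLLdw(s,u) + \DLLdw(u,t)$ for each directed term, hence for their maximum. Finally, $1$-boundedness follows exactly as in the strong case: since $\Dtrddw$ is $1$-bounded, each directed $\sup$-$\inf$ term is at most $1$, so the Hausdorff lifting is at most $\max\{1,1\} = 1$.

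The only point that genuinely needs care—a verification rather than a real obstacle—is to confirm that the strong-case argument never invoked the identity of indiscernibles of $\Dtrdd$. A quick inspection of the proof of Proposition~\ref{prop:DLLd_is_metric} shows that its triangle-inequality and boundedness steps rest solely on the symmetry, the triangle inequality, and the $1$-boundedness of the ground distance, all of which $\Dtrddw$ enjoys by Proposition~\ref{prop:distance_on_LLwd_is_metric}. Consequently the proof may simply refer back to that of Proposition~\ref{prop:DLLd_is_metric}, noting this substitution.
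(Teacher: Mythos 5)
Your proposal is correct and matches the paper's own proof, which simply states that the arguments of Proposition~\ref{prop:DLLd_is_metric} carry over; your additional check that the strong-case proof only uses symmetry, the triangle inequality and $1$-boundedness of the ground distance (never identity of indiscernibles) is exactly the justification implicit in that remark.
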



\begin{proof}
The same arguments used in the proof of Proposition~\ref{prop:DLLd_is_metric} apply.
\end{proof}


\begin{proposition}
\label{prop:LLsw_closure}
Let $s \in \proc$.
The set $\LLw(s)$ is a closed subset of $\LLw$ wrt.\ the topology induced by $\Dtrddw$.
\end{proposition}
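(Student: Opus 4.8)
The plan is to transport the proof of Proposition~\ref{prop:LLs_closure} to the weak setting, the only genuine novelty being that $\Dtrtdw$ and $\Dtrddw$ are \emph{pseudometrics} whose kernels are $\eqtrace$ and $\eqtrace^{\dagger}$ rather than equality. First I would invoke the weak reading of Theorem~\ref{thm:LLs_is_cup_tracing_formula_resolution} (obtained by replacing $\Act^{\star}$ with $\Acttau^{\star}$) to write $\LLw(s)=\{\top\}\cup\{\Psi_{\Z}\mid\Z\in\res(s)\}$, and reduce, via the fact that a finite union of closed sets is closed, to showing that $\{\top\}$ and $\{\Psi_{\Z}\mid\Z\in\res(s)\}$ are closed in the topology induced by $\Dtrddw$. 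The singleton $\{\top\}$ is handled exactly as in the strong case. For the second set I would first use Lemma~\ref{lem:mimicking_equiv_weak_mimicking} to replace each $\Psi_{\Z}$ by the weak mimicking formula $\Psiw_{\Z}$: since $\Psi_{\Z}\eqtrace^{\dagger}\Psiw_{\Z}$ we have $\Dtrddw(\Psi_{\Z},\Psiw_{\Z})=0$, so the two sets share the same closure, and, as noted after Definition~\ref{def:weak_mimicking}, each $\Psiw_{\Z}$ lies in $\LLd$, where by Remark~\ref{rmk:equivalence_is_equality} the pseudometric $\Dtrddw$ restricts to the genuine metric $\Dtrdd$.

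On this set of weak mimicking formulae I would then re-run the argument of Proposition~\ref{prop:LLs_closure} verbatim. Given a sequence $\{\Psiw_{\Z_n}\}_{n\in\N}$ converging to some $\Psi$, convergence with respect to $\Dtrddw=\Kantorovich(\Dtrtdw)$ is again weak convergence of the induced probability distributions over trace formulae; the single point requiring care is that, because $\Dtrtdw$ is the discrete pseudometric with kernel $\eqtrace$, a sequence of trace formulae now converges iff it is eventually contained in one fixed $\eqtrace$-equivalence class, in place of being eventually constant. As each $\Psiw_{\Z_n}$ carries at most one representative per $\eqtrace$-class in its support, weak convergence forces the indexing classes to stabilise to those of $\Psi$ from some $N\in\N$ onwards; these classes are precisely the classes in $\trw(\C_{\max}(z_n))$ for $z_n=\corr{\Z_n}^{-1}(s)$, and image-finiteness of $s$ guarantees that the weights $\pr(\Cw_{\max}(z_n,\cdot))$ take only finitely many values, so that both supports and weights are definitively constant. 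This yields an index $M\ge N$ with $\Psiw_{\Z_M}\eqtrace^{\dagger}\Psi$, i.e.\ $\Z_M\in\res(s)$ is a resolution of $s$ whose weak mimicking formula realises the limit.

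It then remains to read the conclusion correctly in the pseudometric topology. From $\Psiw_{\Z_M}\eqtrace^{\dagger}\Psi$ and $\Psiw_{\Z_M}\eqtrace^{\dagger}\Psi_{\Z_M}$ one gets $\Psi\eqtrace^{\dagger}\Psi_{\Z_M}$, hence $\Dtrddw(\Psi,\Psi_{\Z_M})=0$ by Theorem~\ref{thm:kernel_of_Dtrddw}, with $\Psi_{\Z_M}\in\LLw(s)$. Since the topology induced by $\Dtrddw$ does not separate $\eqtrace^{\dagger}$-equivalent formulae, this places the limit in $\LLw(s)$ and gives the claimed closedness. I expect the main obstacle to be exactly this pseudometric bookkeeping: unlike the strong case, a convergent sequence pins down its limit only up to $\eqtrace^{\dagger}$, so the real work is to show that the limiting $\eqtrace^{\dagger}$-class is attained by an \emph{actual} resolution of $s$ — which is where the weak form of Theorem~\ref{thm:LLs_is_cup_tracing_formula_resolution} together with image-finiteness is essential — rather than merely approximated.
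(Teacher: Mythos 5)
Your proof is correct and follows essentially the same route as the paper: the paper's proof passes to the quotient $\LLw_{/\eqtrace}$, identifies it with $\LLd$ equipped with $\Dtrdd$ via Remark~\ref{rmk:equivalence_is_equality}, and then invokes Proposition~\ref{prop:LLs_closure}, which is exactly what your replacement of each $\Psi_{\Z}$ by the weak mimicking formula $\Psiw_{\Z}\in\LLd$ achieves with explicit representatives. Your closing step (distance $0$ from $\Psi_{\Z_M}\in\LLw(s)$ places the limit in $\LLw(s)$) involves the same pseudometric looseness as the paper's own final sentence --- strictly one only obtains membership up to $\eqtrace^{\dagger}$ unless $\LLw(s)$ is saturated --- so this is not a gap relative to the paper's argument.
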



\begin{proof}
Since $(\LLwd,\Dtrddw)$ is a pseudometric space (Proposition~\ref{prop:distance_on_LLwd_is_metric} and Theorem~\ref{thm:kernel_of_Dtrddw}), to prove the thesis we need to show that the quotient space $\LLw(s)_{/\eqtrace}$ is a closed subset of $\LLw_{/\eqtrace}$ with respect to the topology induced by $\Dtrddw$ (in fact $(\LLwd_{/\eqtrace}, \Dtrddw)$ is a metric space).
From Remark~\ref{rmk:equivalence_is_equality} we have that $\LLwd_{/\eqtrace} = \LLd$ and $\LLw(s)_{/\eqtrace} = \LL(s)$. 
Moreover, we have that $\Dtrddw\mid_{\LLwd_{/\eqtrace}} = \Dtrdd$.
Hence, the same arguments used in the proof of Proposition~\ref{prop:LLs_closure} allow us to prove that $\LLw(s)_{/\eqtrace}$ is a closed subset of $\LLw_{/\eqtrace}$ wrt. the topology induced by $\Dtrddw$.
This gives the result also for $\LLw(s)$ wrt to $\LLw$ and $\Dtrddw$.
\end{proof}


\begin{theorem}
\label{thm:kernel_of_DLLdw}
The kernel of $\DLLdw$ is weak trace equivalence.
\end{theorem}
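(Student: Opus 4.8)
The plan is to mirror the proof of Theorem~\ref{thm:kernel_of_DLLd} for the strong case, replacing the role of set equality $\LL(s)=\LL(t)$ with the lifted equivalence $\LLw(s) \eqtrace^{\dagger} \LLw(t)$. Both implications will hinge on the characterisation of weak trace equivalence given by Theorem~\ref{thm:weak_char}, namely $s \WTr t$ iff $\LLw(s) \eqtrace^{\dagger} \LLw(t)$, together with the fact that $\eqtrace^{\dagger}$ is exactly the kernel of the pseudometric $\Dtrddw$ (Theorem~\ref{thm:kernel_of_Dtrddw}).

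For the direction $s \WTr t \Rightarrow \DLLdw(s,t)=0$, I would first invoke Theorem~\ref{thm:weak_char} to obtain $\LLw(s) \eqtrace^{\dagger} \LLw(t)$ as a relation between sets of formulae. By definition of the lifting to sets, for each $\Psi \in \LLw(s)$ there is $\Psi' \in \LLw(t)$ with $\Psi \eqtrace^{\dagger} \Psi'$, and symmetrically. Theorem~\ref{thm:kernel_of_Dtrddw} then turns each such equivalence into $\Dtrddw(\Psi,\Psi')=0$, so every inner infimum in the Hausdorff expression $\Hausdorff(\Dtrddw)(\LLw(s),\LLw(t))$ is $0$, and hence both suprema vanish. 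Thus $\DLLdw(s,t)=0$. This direction is routine.

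The converse, $\DLLdw(s,t)=0 \Rightarrow s \WTr t$, is the delicate one and I expect it to be the main obstacle. Unlike the strong case, where $\Dtrdd$ is a genuine metric and distance $0$ forces set equality, here $\Dtrddw$ is only a pseudometric, so $\DLLdw(s,t)=0$ can at best yield the weaker relation $\eqtrace^{\dagger}$. To argue rigorously I would pass to the metric quotient: by Remark~\ref{rmk:equivalence_is_equality} we have $\LLwd_{/\eqtrace}=\LLd$, and $\Dtrddw$ restricted to classes coincides with the metric $\Dtrdd$ (a metric by Proposition~\ref{prop:distance_on_LLd_is_metric}). Since $\Dtrddw$ depends only on $\eqtrace$-classes, $\DLLdw(s,t)=\Hausdorff(\Dtrddw)(\LLw(s),\LLw(t))$ equals the Hausdorff distance in $(\LLd,\Dtrdd)$ between the class-sets $\LLw(s)_{/\eqtrace}$ and $\LLw(t)_{/\eqtrace}$. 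Proposition~\ref{prop:LLsw_closure} guarantees these class-sets are closed in this metric space, so a Hausdorff distance of $0$ between closed sets of a metric space forces $\LLw(s)_{/\eqtrace}=\LLw(t)_{/\eqtrace}$: each inner infimum of $0$ is now genuinely attained, giving for every $\Psi \in \LLw(s)$ some $\Psi' \in \LLw(t)$ with $[\Psi]_{\eqtrace}=[\Psi']_{\eqtrace}$, and symmetrically. Equality of the two class-sets is precisely the set-lifting $\LLw(s) \eqtrace^{\dagger} \LLw(t)$, whence Theorem~\ref{thm:weak_char} yields $s \WTr t$. The crux is thus the passage to the metric quotient and the use of Proposition~\ref{prop:LLsw_closure} to legitimately attain the infimum of $0$ as an equivalence, rather than as a merely approximating sequence.
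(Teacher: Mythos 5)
Your proposal is correct and follows essentially the same route as the paper: both directions rest on Theorem~\ref{thm:weak_char}, Theorem~\ref{thm:kernel_of_Dtrddw}, Proposition~\ref{prop:DLLdw_is_metric} and Proposition~\ref{prop:LLsw_closure}. The only difference is that you spell out the converse direction in more detail---passing to the metric quotient $(\LLd,\Dtrdd)$ so that closedness genuinely forces the infimum $0$ to be attained---whereas the paper states this step tersely; this is a useful elaboration, not a different argument.
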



\begin{proof}
($\Rightarrow$)
Assume that $s \WTr t$.
We aim to show that $\DLLdw(s,t) = 0$.
By Theorem~\ref{thm:weak_char} we have that $s \WTr t$ implies that $\LLw(s) \eqtrace^{\dagger} \LLw(t)$.
Since the kernel of $\Dtrddw$ is given by $\eqtrace^{\dagger}$ (Theorem~\ref{thm:kernel_of_Dtrddw}), we can infer
\[
\DLLdw(s,t) = \Hausdorff(\Dtrtdw)(\LLw(s), \LLw(t)) = 0.
\]
($\Leftarrow$)
Assume now that $\DLLdw(s,t) = 0$.
We aim to show that this implies that $s \WTr t$.
Since 
\begin{inparaenum}[(i)]
\item $\LLw(s)$ and $\LLw(t)$ are closed by Proposition~\ref{prop:LLsw_closure}, 
\item $\DLLdw$ is a pseudometric by Proposition~\ref{prop:DLLdw_is_metric} and
\item the kernel of $\Dtrddw$ is $\eqtrace^{\dagger}$ by Theorem~\ref{thm:kernel_of_Dtrddw},
\end{inparaenum} 
from $\DLLdw(s,t) = 0$ we can infer $\LLw(s) \eqtrace^{\dagger} \LLw(t)$.
Then, by Theorem~\ref{thm:weak_char} we can conclude $s \WTr t$. 
\end{proof}


Finally, we obtain the characterization of the weak trace metric.

\begin{theorem}
[Characterization of weak trace metric]
\label{thm:LLw_metric_char}
For all $s,t \in \proc$ we have $\wTraceMetric(s,t) = \DLLdw(s,t)$.
\end{theorem}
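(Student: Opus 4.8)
The plan is to mirror the proof of Theorem~\ref{thm:LL_metric_char}, replacing each ingredient of the strong case by its weak counterpart: $D_T$ by $\Dw$, $d_T$ by $\dw$, $\Dtrdd$ by $\Dtrddw$, $\Dtrtd$ by $\Dtrtdw$, and ordinary equality of traces/formulae by the equivalences $\eqtrace$ and $\eqtrace^{\dagger}$. First I would unfold both sides via the Hausdorff functional. By Definition~\ref{def:weak_trace_metric},
\[
\wTraceMetric(s,t) = \max\Big\{ \sup_{\Z_s \in \res(s)} \inf_{\Z_t \in \res(t)} \Dw(\Z_s,\Z_t),\; \sup_{\Z_t \in \res(t)} \inf_{\Z_s \in \res(s)} \Dw(\Z_s,\Z_t) \Big\},
\]
while by Definition~\ref{def:LLw_logical_distance} together with the weak version of Theorem~\ref{thm:LLs_is_cup_tracing_formula_resolution} we have $\LLw(s) = \{1\top\} \cup \{\Psi_{\Z_s} \mid \Z_s \in \res(s)\}$, so $\DLLdw(s,t) = \Hausdorff(\Dtrddw)(\LLw(s),\LLw(t))$ expands to the analogous $\max$ of $\sup$--$\inf$ expressions over mimicking formulae.

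As in the strong case the element $1\top$ can be set aside, but here I would argue it slightly differently, since $\Dtrddw(1\top,\Psi)$ need not equal $1$: by the remark following Theorem~\ref{thm:LLs_is_cup_tracing_formula_resolution}, $1\top$ is precisely the mimicking formula $\Psi_{\Z_0}$ of the trivial resolution $\Z_0$ that executes no action, and $\Z_0 \in \res(s)$ and $\Z_0 \in \res(t)$. Hence $\{1\top\} \cup \{\Psi_{\Z} \mid \Z \in \res(s)\} = \{\Psi_{\Z} \mid \Z \in \res(s)\}$ and no separate treatment is needed. Since $\Z \mapsto \Psi_{\Z}$ is surjective onto this set, the inner $\inf$ over resolutions equals the $\inf$ over the image set and likewise for the outer $\sup$, so the whole theorem reduces to the per-resolution identity
\[
\Dw(\Z_s,\Z_t) = \Dtrddw(\Psi_{\Z_s},\Psi_{\Z_t}) \qquad \text{for all } \Z_s \in \res(s),\ \Z_t \in \res(t).
\]

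To establish this identity I would compare the two Kantorovich problems directly. By Definition~\ref{def:trace_metric_det_res} (read with $\dw$) we have $\Dw(\Z_s,\Z_t) = \Kantorovich(\dw)(\TD_{\Z_s}, \TD_{\Z_t})$, and by Definition~\ref{def:metric_on_LLwd} we have $\Dtrddw(\Psi_{\Z_s},\Psi_{\Z_t}) = \Kantorovich(\Dtrtdw)(\Psi_{\Z_s},\Psi_{\Z_t})$; recall that $\TD_{\Z}$ is a genuine distribution over $\Acttau^{\star}$ (Remark~\ref{rmk:tdw_probability}), so both liftings are well defined. The tracing-formula map $\alpha \mapsto \Phi_{\alpha}$ is a bijection from $\support(\TD_{\Z_s}) = \tr(\C_{\max}(z_s))$ onto $\support(\Psi_{\Z_s})$ preserving weights, since $\TD_{\Z_s}(\alpha) = \pr(\C_{\max}(z_s,\alpha)) = \Psi_{\Z_s}(\Phi_{\alpha})$ by Definition~\ref{def:trace_distribution} and Definition~\ref{def:mimicking_formula_for_traces}, and likewise for $\Z_t$. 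The crucial point is that this bijection also preserves ground distances, i.e.\ $\dw(\alpha,\beta) = \Dtrtdw(\Phi_{\alpha},\Phi_{\beta})$: both values lie in $\{0,1\}$, and by Definition~\ref{def:weak_trace_metric_traces} and Definition~\ref{def:metric_on_LLwt} we have $\dw(\alpha,\beta)=0 \iff \alpha \eqtrace \beta \iff \Phi_{\alpha} \eqtrace \Phi_{\beta} \iff \Dtrtdw(\Phi_{\alpha},\Phi_{\beta})=0$, the middle equivalence being the fact $\alpha \eqtrace \beta \iff \Phi_{\alpha} \eqtrace \Phi_{\beta}$ already recorded in the proof of Lemma~\ref{lem:mimicking_equiv_weak_mimicking}. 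Consequently every matching $\w \in \W(\TD_{\Z_s},\TD_{\Z_t})$ transports along the bijection to a matching in $\W(\Psi_{\Z_s},\Psi_{\Z_t})$ of exactly the same cost, and vice versa, so the two minima agree and the identity holds.

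The step I expect to require the most care is not the computation but the bookkeeping forced by the pseudometric nature of $\dw$ and $\Dtrtdw$: unlike the strong case, distinct traces can be at distance $0$, so I must ensure the bijection $\alpha \mapsto \Phi_{\alpha}$ is compatible with the equivalence classes on both sides, which is exactly what $\alpha \eqtrace \beta \iff \Phi_{\alpha} \eqtrace \Phi_{\beta}$ guarantees, and that discarding $1\top$ is justified through its identification with the trivial resolution rather than through a distance-$1$ argument as in Theorem~\ref{thm:LL_metric_char}. Once the per-resolution identity is in place, substituting it into the two Hausdorff expressions above and taking the outer $\max$ yields $\wTraceMetric(s,t) = \DLLdw(s,t)$.
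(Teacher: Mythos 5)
Your proposal is correct and follows the same route as the paper, whose proof of this theorem simply states that the arguments of Theorem~\ref{thm:LL_metric_char} apply: unfold both sides via the Hausdorff functional, reduce to the per-resolution identity $\Dw(\Z_s,\Z_t) = \Dtrddw(\Psi_{\Z_s},\Psi_{\Z_t})$, and prove that identity by transporting matchings along the weight- and distance-preserving correspondence $\alpha \mapsto \Phi_{\alpha}$. Your observation that the $1\top$ element should be absorbed as the mimicking formula of the trivial resolution, rather than discarded via the distance-$1$ argument of the strong case (which is not sound for the pseudometric $\Dtrddw$), is a correct and welcome refinement of the one step that does not port verbatim.
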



\begin{proof}
The same arguments used in the proof of Thm~\ref{thm:LL_metric_char} apply.
\end{proof}


\section{From boolean to real semantics}
\label{sec:boom}

In this section we focus on $\LL$ and we exploit the distance between formulae to define a real valued semantics for it, namely given a process $s$ we assign to each formula a value in $[0,1]$ expressing the probability that $s$ satisfies it.
Then we show that our logical characterization of trace metric can be restated in terms of the general schema $\displaystyle\TraceMetric(s,t) = \sup_{\Psi \in \LLd} \mid \val{\Psi}{s} - \val{\Psi}{t} \mid$ 
where $\val{\Psi}{s}$ denotes the value of the formula $\Psi$ at process $s$, accordingly to the new real valued semantics.
We remark that although, due to space restrictions, we present only the result for $\LL$, the technique we propose would lead to the same results when applied to $\LLw$.

First of all, we recall the notion of \emph{distance function}, namely the distance between a point and a set.

\begin{definition}
[Distance function]
\label{def:distance_function}
Let $\LL' \subseteq \LLd$.
Given any $\Psi \in \LLd$ we denote by $\Dtrdd(\Psi, \LL')$ the \emph{distance between $\Psi$ and the set} $\LL'$ defined by
$ \displaystyle
\Dtrdd(\Psi, \LL') = \inf_{\Psi' \in \LL'} \Dtrdd(\Psi, \Psi').
$
\end{definition}

Then we obtain the following characterization of the Hausdorff distance.

\begin{proposition}
\label{prop:char_Hausdorff}
Let $\LL_1, \LL_2 \subseteq \LLd$.
Then it holds that
$ 
\displaystyle \Hausdorff(\Dtrdd)(\LL_1,\LL_2) = \sup_{\Psi \in \LLd} | \Dtrdd(\Psi, \LL_1) - \Dtrdd(\Psi, \LL_2) |.
$
\end{proposition}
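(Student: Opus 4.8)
The plan is to recognize this as the standard identity expressing the Hausdorff distance through the distance-to-a-set functions, and to prove the two inequalities separately. First I would unfold Definition~\ref{def:Hausdorff}: since $\inf_{\Psi' \in \LL_i} \Dtrdd(\Psi, \Psi') = \Dtrdd(\Psi, \LL_i)$ by Definition~\ref{def:distance_function}, and since $\Dtrdd$ is symmetric (Proposition~\ref{prop:distance_on_LLd_is_metric}), the Hausdorff distance can be rewritten as
\[
\Hausdorff(\Dtrdd)(\LL_1,\LL_2) = \max\Big\{ \sup_{\Psi \in \LL_1} \Dtrdd(\Psi, \LL_2), \; \sup_{\Psi \in \LL_2} \Dtrdd(\Psi, \LL_1) \Big\}.
\]
I will call this quantity $H$ and denote by $S$ the right-hand side $\sup_{\Psi \in \LLd} | \Dtrdd(\Psi, \LL_1) - \Dtrdd(\Psi, \LL_2) |$ of the statement. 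Throughout I may assume $\LL_1,\LL_2$ nonempty, as is the case in the intended application where both contain $1\top$.

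For the inequality $H \le S$, I would use the elementary fact that $\Dtrdd(\Psi, \LL_i) = 0$ whenever $\Psi \in \LL_i$, which is immediate from $\Dtrdd(\Psi,\Psi)=0$. Hence for every $\Psi \in \LL_1$ we have $|\Dtrdd(\Psi, \LL_1) - \Dtrdd(\Psi, \LL_2)| = \Dtrdd(\Psi, \LL_2)$, so that $\sup_{\Psi \in \LL_1}\Dtrdd(\Psi, \LL_2) \le S$; symmetrically $\sup_{\Psi \in \LL_2}\Dtrdd(\Psi, \LL_1) \le S$, and taking the maximum yields $H \le S$.

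The converse $S \le H$ is where the real work lies, and I expect it to be the main obstacle. The crucial auxiliary fact is that each distance function $\Psi \mapsto \Dtrdd(\Psi, \LL_i)$ is $1$-Lipschitz wrt.\ $\Dtrdd$, i.e.\ $|\Dtrdd(\Psi, \LL_i) - \Dtrdd(\Phi, \LL_i)| \le \Dtrdd(\Psi, \Phi)$; this follows from the triangle inequality of $\Dtrdd$ (Proposition~\ref{prop:distance_on_LLd_is_metric}) together with the definition of the infimum. Fixing an arbitrary $\Psi \in \LLd$ and assuming wlog.\ that $\Dtrdd(\Psi, \LL_1) \ge \Dtrdd(\Psi, \LL_2)$, for each $\varepsilon > 0$ I would pick $\Psi_2 \in \LL_2$ with $\Dtrdd(\Psi, \Psi_2) < \Dtrdd(\Psi, \LL_2) + \varepsilon$. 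The Lipschitz bound applied to $\LL_1$ then gives $\Dtrdd(\Psi, \LL_1) \le \Dtrdd(\Psi_2, \LL_1) + \Dtrdd(\Psi, \Psi_2) < \Dtrdd(\Psi_2, \LL_1) + \Dtrdd(\Psi, \LL_2) + \varepsilon$, whence
\[
|\Dtrdd(\Psi, \LL_1) - \Dtrdd(\Psi, \LL_2)| = \Dtrdd(\Psi, \LL_1) - \Dtrdd(\Psi, \LL_2) < \Dtrdd(\Psi_2, \LL_1) + \varepsilon \le H + \varepsilon,
\]
the last step using $\Psi_2 \in \LL_2$ and the definition of $H$. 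Letting $\varepsilon \to 0$ and then taking the supremum over $\Psi \in \LLd$ yields $S \le H$; the symmetric case $\Dtrdd(\Psi, \LL_2) \ge \Dtrdd(\Psi, \LL_1)$ is handled identically with the roles of $\LL_1$ and $\LL_2$ interchanged. Combining the two inequalities gives $S = H$, which is the claimed identity.
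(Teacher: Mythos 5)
Your proof is correct and follows essentially the same route as the paper's: both directions rest on the vanishing of $\Dtrdd(\Psi,\LL_i)$ for $\Psi\in\LL_i$ and, for the harder inequality $S\le H$, on the triangle inequality of $\Dtrdd$ combined with an $\varepsilon$-approximation of the infimum. Your packaging of the key step as $1$-Lipschitz continuity of $\Psi\mapsto\Dtrdd(\Psi,\LL_i)$ is a mild streamlining that lets you treat all $\Psi$ uniformly with a single $\varepsilon$, where the paper instead splits into the cases $\Psi\in\LL_1$, $\Psi\in\LL_2$, $\Psi\notin\LL_1\cup\LL_2$ and uses two nested $\varepsilon$-witnesses in the last case; the substance is the same.
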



\begin{proof}
It is clear that
\begin{equation}
\label{eq:gen_Hausdorff}
\Hausdorff(\Dtrdd)(\LL_1,\LL_2) = \max \left\{ \sup_{\Psi_1 \in \LL_1} \Dtrdd(\Psi_1,\LL_2),\; \sup_{\Psi_2 \in \LL_2} \Dtrdd(\Psi_2, \LL_1) \right\}.
\end{equation}
Firstly we show that 
\begin{equation}
\label{eq:hausdorff_le}
\Hausdorff(\Dtrdd)(\LL_1,\LL_2) \le \sup_{\Psi \in \LLd} | \Dtrdd(\Psi, \LL_1) - \Dtrdd(\Psi, \LL_2) |.
\end{equation}
Without loss of generality, we can assume that $\Hausdorff(\Dtrdd)(\LL_1, \LL_2) = \sup_{\Psi_1 \in \LL_1} \Dtrdd(\Psi_1,\LL_2)$.
Then we have
\[
\begin{array}{rlr}
\sup_{\Psi_1 \in \LL_1} \Dtrdd(\Psi_1,\LL_2) ={} & \sup_{\Psi_1 \in \LL_1} | \Dtrdd(\Psi_1,\LL_2) - \Dtrdd(\Psi_1, \LL_1) | \\
\le & \sup_{\Psi \in \LLd} |\Dtrdd(\Psi, \LL_2) - \Dtrdd(\Psi, \LL_1)|
\end{array}
\]
from which Equation~\eqref{eq:hausdorff_le} holds.

Next, we aim to show the converse inequality, namely
\begin{equation}
\label{eq:hausdorff_ge}
\Hausdorff(\Dtrdd)(\LL_1,\LL_2) \ge \sup_{\Psi \in \LLd} | \Dtrdd(\Psi, \LL_1) - \Dtrdd(\Psi, \LL_2) |.
\end{equation}
To this aim, we show that
\begin{equation}
\label{eq:hausdorff_ge_1}
\text{for each } \Psi \in \LLd \text{ it holds }| \Dtrdd(\Psi, \LL_1) - \Dtrdd(\Psi, \LL_2) | \le \Hausdorff(\Dtrdd)(\LL_1,\LL_2) .
\end{equation}
\begin{itemize}
\item Assume $\Psi \in \LL_1$. 
Then $\Dtrdd(\Psi, \LL_1) = 0$ so that $|\Dtrdd(\Psi, \LL_1) - \Dtrdd(\Psi, \LL_2)| = \Dtrdd(\Psi, \LL_2)$.
Moreover
\[
\Dtrdd(\Psi, \LL_2) \le \sup_{\Psi_1 \in \LL_1} \Dtrdd(\Psi_1, \LL_2) \le \Hausdorff(\Dtrdd)(\LL_1,\LL_2)
\]
and Equation~\eqref{eq:hausdorff_ge_1} follows in this case.

\item The case of $\Psi \in \LL_2$ is analogous and therefore Equation~\eqref{eq:hausdorff_ge_1} follows also in this case.

\item Finally, assume that $\Psi \not\in \LL_1 \cup \LL_2$.
Without loss of generality, we can assume that $\Dtrdd(\Psi, \LL_1) \ge \Dtrdd(\Psi, \LL_2)$.
By definition of \emph{infimum} it holds that for each $\epsilon > 0$ there is a formula $\Psi_{\epsilon} \in \LL_2$ s.t.\ 
\begin{equation}
\label{eq:inf_limit}
\Dtrdd(\Psi, \Psi_{\epsilon}) < \Dtrdd(\Psi, \LL_2) + \epsilon.
\end{equation}
Analogously, for each $\epsilon' > 0$ and for each $\Psi_2 \in \LL_2$ there is a $\Psi_{\epsilon'} \in \LL_1$ s.t.\ 
\begin{equation}
\label{eq:inf_limit_1}
\Dtrdd(\Psi_2, \Psi_{\epsilon'}) < \Dtrdd(\Psi_2, \LL_1) + \epsilon'.
\end{equation}
Let us fix $\epsilon, \epsilon' > 0$.
Then let $\Psi_{\epsilon} \in \LL_2$ be the formula realizing Equation~\eqref{eq:inf_limit}, with respect to $\Psi$, and let $\tilde{\Psi_{\epsilon'}}$ be the formula in $\LL_1$ realizing Equation~\eqref{eq:inf_limit}, with respect to this $\Psi_{\epsilon}$.
Therefore, we have
\[
\begin{array}{rlr}
& |\Dtrdd(\Psi, \LL_1) - \Dtrdd(\Psi, \LL_2)|\\
= & \Dtrdd(\Psi, \LL_1) - \Dtrdd(\Psi, \LL_2)\\
< & \Dtrdd(\Psi, \LL_1) - \Dtrdd(\Psi, \Psi_{\epsilon}) + \epsilon & \text{(by Equation~\eqref{eq:inf_limit})} \\
= & \inf_{\Psi_1 \in \LL_1} \Dtrdd(\Psi, \Psi_1) - \Dtrdd(\Psi, \Psi_{\epsilon}) + \epsilon & \text{(by Definition~\ref{def:distance_function})} \\
< & \Dtrdd(\Psi, \tilde{\Psi_{\epsilon'}}) - \Dtrdd(\Psi, \Psi_{\epsilon}) + \epsilon \\
\le & \Dtrdd(\Psi, \Psi_{\epsilon}) + \Dtrdd(\Psi_{\epsilon}, \tilde{\Psi_{\epsilon'}}) - \Dtrdd(\Psi, \Psi_{\epsilon}) + \epsilon & \text{(by triangle inequality)} \\
= & \Dtrdd(\Psi_{\epsilon}, \tilde{\Psi_{\epsilon'}}) + \epsilon \\
<  & \Dtrdd(\Psi_{\epsilon}, \LL_1) + \epsilon' + \epsilon & \text{(by Equation~\eqref{eq:inf_limit_1})}\\
\le & \sup_{\Psi_2 \in \LL_2} \Dtrdd(\Psi_2, \LL_1) + \epsilon' + \epsilon \\
\le & \Hausdorff(\Dtrdd)(\LL_1,\LL_2) + \epsilon' + \epsilon & \text{(by Equation~\eqref{eq:gen_Hausdorff}).}
\end{array}
\]
Summarizing, we have obtained that
\[
|\Dtrdd(\Psi, \LL_1) - \Dtrdd(\Psi, \LL_2)| < \Hausdorff(\Dtrdd)(\LL_1,\LL_2) + \epsilon' + \epsilon 
\]
and since this inequality holds for each $\epsilon$ and $\epsilon'$, we can conclude that Equation~\eqref{eq:hausdorff_ge_1} holds.
\end{itemize}
Equation~\eqref{eq:hausdorff_le} and Equation~\eqref{eq:hausdorff_ge} taken together prove the thesis.
\end{proof}


To define the real-valued semantics of $\LLd$ we exploit the distance $\Dtrdd$.
Informally, to quantify how much the formula $\Psi$ is satisfied by process $s$ we evaluate first how far $\Psi$ is from being satisfied by $s$.
This corresponds to the minimal distance between $\Psi$ and a formula satisfied by $s$, namely to $\Dtrdd(\Psi, \LL(s))$.
Then we simply notice that, as our distances are all $1$-bounded, being $\Dtrdd(\Psi, \LL(s))$ far from $s$ is equivalent to be $1 - \Dtrdd(\Psi, \LL(s))$ close to it.
Thus we assign to $\Psi$ the real value $1 - \Dtrdd(\Psi, \LL(s))$ in $s$.

\begin{definition}
[Real-valued semantics of $\LLd$]  
\label{def:real_valued_semantics}
We define the \emph{real-valued semantics} of $\LLd$ as the function $\val{\_}{\_} \colon \LLd \times \proc \to [0,1]$ defined for all $\Psi \in \LLd$ and $s \in \proc$ as
$
\val{\Psi}{s} = 1 - \Dtrdd(\Psi, \LL(s)).
$
\end{definition}

We can restate our characterization theorem (Theorem~\ref{thm:det_char}) as a probabilistic $\LLd$-model checking problem.

\begin{theorem}
[Characterization of strong trace metric II]
\label{thm:det_char_2}
For all $s,t \in \proc$ we have 
\[
\TraceMetric(s,t) = \sup_{\Psi \in \LLd} \mid \val{\Psi}{s} - \val{\Psi}{t} \mid.
\]
\end{theorem}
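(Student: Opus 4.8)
The plan is to prove the identity purely by composing results already established, since the real-valued semantics of Definition~\ref{def:real_valued_semantics} was engineered precisely so that the supremum formula would reduce to the Hausdorff lifting. First I would invoke the characterization of the strong trace metric as the $\LL$-distance, Theorem~\ref{thm:LL_metric_char}, to rewrite $\TraceMetric(s,t) = \DLLd(s,t)$, and then unfold Definition~\ref{def:LL_logical_distance} to obtain $\DLLd(s,t) = \Hausdorff(\Dtrdd)(\LL(s),\LL(t))$. At this point the problem has been moved entirely into the world of the distance between formulae, where the decisive tool is available.

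The next step is to apply Proposition~\ref{prop:char_Hausdorff} with $\LL_1 = \LL(s)$ and $\LL_2 = \LL(t)$, which is legitimate because that proposition is stated for arbitrary subsets of $\LLd$ and $\LL(s),\LL(t)$ are such subsets. This yields
\[
\Hausdorff(\Dtrdd)(\LL(s),\LL(t)) = \sup_{\Psi \in \LLd} \big| \Dtrdd(\Psi,\LL(s)) - \Dtrdd(\Psi,\LL(t)) \big|.
\]
It then remains only to translate the distance functions into the real-valued semantics. By Definition~\ref{def:real_valued_semantics} we have $\Dtrdd(\Psi,\LL(s)) = 1 - \val{\Psi}{s}$ and $\Dtrdd(\Psi,\LL(t)) = 1 - \val{\Psi}{t}$, so the constant $1$ cancels inside the absolute value and $\big| \Dtrdd(\Psi,\LL(s)) - \Dtrdd(\Psi,\LL(t)) \big| = \big| \val{\Psi}{s} - \val{\Psi}{t} \big|$ for every $\Psi \in \LLd$. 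Taking the supremum over $\Psi$ on both sides and chaining the equalities gives exactly $\TraceMetric(s,t) = \sup_{\Psi \in \LLd} | \val{\Psi}{s} - \val{\Psi}{t} |$.

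I do not expect a genuine obstacle here: all the analytic difficulty has been discharged in advance, the two load-bearing facts being Theorem~\ref{thm:LL_metric_char} (which equates the behavioural metric with the logical one) and Proposition~\ref{prop:char_Hausdorff} (which re-expresses the Hausdorff lifting as a supremum of differences of distance functions). The only point worth stating carefully is that the cancellation of the additive constant $1$ is what turns the supremum of $|\Dtrdd(\Psi,\LL(s)) - \Dtrdd(\Psi,\LL(t))|$ into the supremum of $|\val{\Psi}{s}-\val{\Psi}{t}|$; beyond this the argument is a short, routine substitution, and the theorem follows immediately.
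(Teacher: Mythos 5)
Your proposal is correct and follows essentially the same route as the paper's own proof: rewrite $\TraceMetric(s,t)$ as $\DLLd(s,t)$ via Theorem~\ref{thm:LL_metric_char}, apply Proposition~\ref{prop:char_Hausdorff} to the sets $\LL(s)$ and $\LL(t)$, and cancel the additive constant $1$ from Definition~\ref{def:real_valued_semantics}. (Incidentally, you cite the correct theorem here; the paper's proof text refers to Theorem~\ref{thm:det_char}, which is evidently a typo for Theorem~\ref{thm:LL_metric_char}.)
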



\begin{proof}
From Theorem~\ref{thm:det_char} we have $\TraceMetric(s,t) = \DLLd(s,t)$.
Hence the thesis is equivalent to prove
\[
\DLLd(s,t) = \sup_{\Psi \in \LLd} \mid \val{\Psi}{s} - \val{\Psi}{t} \mid.
\]
We have
\[
\begin{array}{rlr}
\DLLd(s,t) ={} & \Hausdorff(\Dtrdd)(\LL(s),\LL(t)) & \text{(by Definition~\ref{def:LL_logical_distance})} \\
={} & \sup_{\Psi \in \LLd} \mid \Dtrdd(\Psi, \LL(s)) - \Dtrdd(\Psi, \LL(t)) \mid & \text{(by Proposition~\ref{prop:char_Hausdorff})} \\
={} & \sup_{\Psi \in \LLd} \mid \Dtrdd(\Psi, \LL(s)) - \Dtrdd(\Psi, \LL(t)) +1 -1 \mid \\
={} & \sup_{\Psi \in \LLd} \mid 1 - \Dtrdd(\Psi, \LL(t)) - \big( 1 - \Dtrdd(\Psi, \LL(s)) \big) \mid \\
={} & \sup_{\Psi \in \LLd} \mid \val{\Psi}{t} - \val{\Psi}{s} \mid & \text{(by Definition\ref{def:real_valued_semantics})}.
\end{array}
\]
\end{proof}


\section{Concluding remarks}
\label{sec:conclusions_chap7}

We have provided a logical characterization of the strong and weak variants of trace metric on finite processes in the PTS model.
Our results are based on the definition of a \emph{distance} over the two-sorted boolean logics $\LL$ and $\LLw$, which we have proved to characterize resp.\ strong and weak probabilistic trace equivalence by exploiting the notion of \emph{mimicking formula} of a resolution. 

Our distance is a $1$-bounded pseudometric that quantifies the syntactic disparities of the formulae and we have proved that the trace metric corresponds to the distance between the sets of formulae satisfied by the two processes.
This approach, already successfully applied in \cite{CGT16a} to the characterization of the bisimilarity metric, is not standard.
Logical characterizations of the trace metrics have been obtained in terms of the probabilistic $L$-model checking problem, where $L$ is the class of logical properties of interest, \cite{BBLM15,DHKP16,AFS09}.
However we have proved that our approach can be exploited to regain classical one: by means of our distance between formulae we have defined a real-valued semantics for $\LL$, namely a probabilistic model checking of a formula in a process, and then we have proved that the trace metric constitutes the least upper bound to the error that can be observed in the verification of an $\LL$ formula.

Another interesting feature of our approach is its generality, since it can be easily applied to some variants of the trace equivalence and trace metric.
In \cite{S95tr,BdNL14} the authors distinguish between resolutions obtained via deterministic schedulers and the ones obtained via randomized schedulers.
The only difference between the two classes is in the evaluation of the probability weights: in deterministic resolutions, which are the ones we have considered in this paper, each possible resolution of nondeterminism is considered singularly and thus the target probability distributions of their transitions are the same as in the considered process.
In randomized resolutions, internal nondeterminism is solved by assigning a probability weight to each choice and thus the target distributions are obtained from the convex combination of the target distributions of the considered process.
Since the definition of the mimicking formulae depends solely on the values of the probability weights in the resolutions and not on how these weights are evaluated, our characterization can be applied also to the case of trace equivalences and metrics defined in terms of randomized resolutions.

As a first step in the future development of our work, we aim to extend our results to the trace equivalence defined in \cite{BdNL14} which, differently from the equivalence of \cite{S95tr} considered in this paper, is compositional wrt.\ the parallel composition operator.
Roughly speaking, in \cite{BdNL14} for each given trace it is checked whether the resolutions of two processes assign the same probability it, whereas in \cite{S95tr} for a chosen resolution of the first process we check whether there is a resolution for the second process that assigns the same probability to all traces.
Furthermore, no trace metric has been defined yet for the equivalence in \cite{BdNL14}.
Our idea is then firstly to define such a trace metric and secondly to simplify the logic $\LL$ by substituting the trace distribution formulae with a simple test on the execution probability of a trace, with an operator similar to the probabilistic operator in \cite{PS07}.
By applying our approach to the new logic we will obtain the characterization of the trace equivalence and metric.

Then, 
we will study metrics and logical characterizations for the testing equivalences defined in \cite{BdNL14}.

Further, in \cite{BBLM15} a sequence of Kantorovich bisimilarity-like metrics converging to the trace metric on MCs is provided.
Hence we aim to combine our characterization results in \cite{CGT16a} with the ones in this paper in order to see if a similar result of convergence can be obtained also with our technique on PTSs.

Finally, it would be interesting to apply the SOS-based decomposition method proposed in \cite{CGT16b} to $\LL$ (resp.\ $\LLw$) in order to derive congruence formats for the probabilistic strong (resp.\ weak) trace equivalence from its logical characterization.
We also aim to extend this technique in order to derive compositional properties, as \emph{uniform continuity} \cite{GT15}, of strong and weak trace metric.

\bibliographystyle{eptcs}
\bibliography{qapl17_bib}

\end{document}